\documentclass[letter,11pt]{article}

%% Language and font encodings
\usepackage[english]{babel}
\usepackage[utf8x]{inputenc}
\usepackage[T1]{fontenc}
\usepackage{amsfonts}

%% Sets page size and margins
\usepackage[margin = 1in]{geometry}

%% Useful packages
\usepackage{amsmath}
\usepackage{amsthm}
\usepackage{algpseudocode,algorithm,algorithmicx}
\usepackage{bbm}
\usepackage{graphicx}
\usepackage[colorinlistoftodos]{todonotes}
\usepackage[colorlinks=true, allcolors=blue]{hyperref}
\usepackage{thmtools}
\usepackage{thm-restate}
\usepackage{tikz}
\usetikzlibrary{arrows,shapes,snakes,automata,backgrounds,petri}

\newtheorem{lemma}{Lemma}
\newtheorem{observation}{Observation}
\newtheorem{definition}{Definition}
\newtheorem{corollary}[lemma]{Corollary}

\newcommand{\prob}{\textnormal{Pr}}
\newcommand{\Dt}{\widetilde{C}}
\newcommand{\Dtp}{\widetilde{C}'}

\newcommand{\depth}{\mathcal{D}}
\newcommand{\mdepth}{\mathcal{D}_{max}}
\newcommand{\st}{\widetilde{s}}
\newcommand{\pS}{p_{sub}}
\newcommand{\pI}{p_{ins}}
\newcommand{\pD}{p_{del}}
\newcommand{\pID}{p_{indel}}

\newcommand{\cond}{\mathcal{E}}
\newcommand{\ex}{\mathbb{E}}
\newcommand{\Var}{\mathrm{Var}}
\newcommand{\Cov}{\mathrm{Cov}}
\newcommand{\poly}{\text{poly}}
\newcommand{\plog}{\text{poly}\log(n)}
\newcommand{\set}[1]{\left\{ {#1} \right\}}

\newcommand{\eat}[1]{}
\newcommand{\replace}[2]{#2}
\global\long\def\code#1{\texttt{#1}}

%\widehat is appearing misaligned, this is a hotfix...
\renewcommand{\widehat}{\hat}

\title{Optimal Sequence Length Requirements for Phylogenetic Tree Reconstruction with Indels}\author{
  Arun Ganesh
  \footnote{Department of Electrical Engineering and Computer Sciences, UC Berkeley. Email: \texttt{arunganesh@berkeley.edu}. Supported by NSF Award CCF-1535989.}
  \and
  Qiuyi (Richard) Zhang
  \footnote{Department of Mathematics, UC Berkeley. Email: \texttt{10zhangqiuyi@berkeley.edu}. Supported by NSF Award CCF-1535989.}
  }
\date{February 19, 2019}
\begin{document}
\maketitle

\begin{abstract}
We consider the phylogenetic tree reconstruction problem with insertions and deletions (indels). Phylogenetic algorithms proceed under a model where sequences evolve down the model tree, and given sequences at the leaves, the problem is to reconstruct the model tree with high probability. Traditionally, sequences mutate by substitution-only processes, although some recent work considers evolutionary processes with insertions and deletions. In this paper, we improve on previous work by giving a reconstruction algorithm that simultaneously has $O(\text{poly} \log n)$ sequence length and tolerates constant indel probabilities on each edge. Our recursively-reconstructed distance-based technique provably outputs the model tree when the model tree has $O(\text{poly} \log n)$ diameter and discretized branch lengths, allowing for the probability of insertion and deletion to be non-uniform and asymmetric on each edge. Our polylogarithmic sequence length bounds improve significantly over previous polynomial sequence length bounds and match sequence length bounds in the substitution-only models of phylogenetic evolution, thereby challenging the idea that many global misalignments caused by insertions and deletions when $p_{indel}$ is large are a fundamental obstruction to reconstruction with short sequences. 

We build upon a signature scheme for sequences, introduced by Daskalakis and Roch, that is robust to insertions and deletions. Our main contribution is to show that an averaging procedure gives an accurate reconstruction of signatures for ancestors, even while the explicit ancestral sequences cannot be reconstructed due to misalignments. Because these signatures are not as sensitive to indels, we can bound the noise that arise from indel-induced shifts and provide a novel analysis that provably reconstructs the model tree with $O(\text{poly} \log n)$ sequence length as long as the rate of mutation is less than the well known Kesten-Stigum threshold. The upper bound on the rate of mutation is optimal as beyond this threshold, an information-theoretic lower bound of $\Omega(\text{poly}(n))$ sequence length requirement exists.
\end{abstract}

\thispagestyle{empty}
\setcounter{page}{0}
\clearpage

\section{Introduction}

The phylogenetic tree reconstruction problem is a fundamental problem in the intersection of biology and computer science. Given a sample of DNA sequence data, we attempt to infer the phylogenetic tree that produced such samples, thereby learning the structure of the hidden evolutionary process that underlies DNA mutation. The inference of phylogenies from molecular sequence data is generally approached as a statistical estimation problem, in which a model tree, equipped with a model of sequence evolution, is assumed to have generated the observed data, and the properties of the statistical model are then used to infer the tree. Various approaches can be applied for this estimation, including maximum likelihood, Bayesian techniques, and distance-based methods \cite{warnow-textbook}.

Many stochastic evolution models start with a random sequence at the root of the tree and each child inherits a mutated version of the parent sequence, where the mutations occur i.i.d. in each site of sequence. The most basic model is the Cavender-Farris-Neyman (CFN) \cite{neyman1971molecular,cavender1978taxonomy} symmetric two-state model, where each sequence is a bitstring of 0/1 and mutations are random i.i.d. substitutions with probability $\pS(e)$ for each edge $e$. More complicated molecular sequence evolution models (with four states for DNA, 20 states for amino acids, and 64 states for codon sequences) exist but typically, the theory that can be established under the CFN model can also be established for the more complex molecular sequence evolution models used in phylogeny estimation \cite{essw2}. For simplicity, we will work with sequences that are bitstrings of 0/1 only.

In addition to computational efficiency, a reconstruction algorithm should also have a small {\it sequence length requirement}, the minimum sequence length required to provably reconstruct the model tree with high probability. Many methods are known to be statistically {\it consistent}, meaning that they will provably converge to the true tree as the sequence lengths increase to infinity, including maximum likelihood \cite{RochSly2017} and many distance-based methods \cite{atteson1999performance,warnow-textbook}. Methods that require only $O(\poly (n))$ sequence length are known as {\it fast converging} and recently, most methods have been shown to be fast converging under the CFN model, including maximum likelihood \cite{RochSly2017} (if solved exactly) and various distance-based methods \cite{essw1,essw2,warnow2001absolute,nakhleh2001designing,roch-science,mihaescu2013fast,brown2012fast}. More impressively, some algorithms achieve $O(\plog)$ sequence length requirement but require more assumptions on the model tree and they always need a tighter upper bound on $g$, the maximum edge length (in the CFN model, the length of an edge is defined as $\lambda(e) = -\ln(1-2\pS(e))$) \cite{mossel2004phase,roch2008sequence,roch-science,daskalakis2011evolutionary,brown2012fast,mihaescu2013fast}. Specifically, these methods are based on reconstruction of ancestral sequences and can provably reconstruct when $g$ is smaller than what is known as the Kesten-Stigum threshold, which is $\ln(\sqrt{2})$ \cite{mossel2004phase} for the CFN model. Intuitively, when $g$ is small, the edges have a small enough rate of mutation that allows for a concentration effect on estimators of ancestral sequences; however, when $g$ is past a certain threshold, a phase transition occurs and reconstruction becomes significantly harder. Essentially matching information-theoretic lower bounds show that $\Omega(\log (n))$ sequence lengths are needed to reconstruct when $g$ is smaller than the Kesten-Stigum threshold and $\Omega(\poly(n))$ lengths are needed beyond this threshold \cite{RochSly2017}.   

One of the biggest drawbacks of the CFN model is that it assumes that mutations only occur as a substitution and that bitstrings remain aligned throughout the evolutionary process down the tree. This allows for an relatively easy statistical estimation problem since each site can be treated as an i.i.d. evolution of a bit down the model tree. In practice, global misalignments from insertions and deletions (indels) breaks this site-independent assumption. Therefore, most phylogenetic tree reconstruction algorithms must first apply a multiple sequence alignment algorithm before attempting to reconstruct the sequence under a purely substitution-induced CFN model, with examples being CLUSTAL\cite{higgins1988clustal}, MAFFT\cite{katoh2002mafft}, and MUSCLE\cite{edgar2004muscle}. However, the alignment process is based on heuristics and lacks a provable guarantee. Even with a well-constructed pairwise similarity function, the alignment problem is known to be NP-hard \cite{elias2006settling}. Furthermore, it has been argued that such procedures create systematic biases \cite{loytynoja2008phylogeny,wong2008alignment}. 

Incorporating indels directly into the evolutionary model and reconstruction algorithm is the natural next step but the lack of site-wise independence presents a major difficulty. However, in a breakthrough result by \cite{daskalakis2010alignment}, the authors show that indels can be handled with $O(\poly(n))$-length sequences, using an alignment-free distance-based method. Also, in \cite{andoni2010phylogenetic}, the authors provide an $O(\plog)$-length sequence length requirement for tree reconstruction but can only handle indel probabilities of $\pID = O(1/\log^2(n))$, which is quite small since a string of length $O(\log(n))$ will only experience $O(1)$ indels as it moves $O(\log(n))$ levels down the tree. Similarly, \cite{andoni2012global} provides a method for reconstruction given $k$-length sequences as long as $\pID = O(k^{-2/3}(\log n)^{-1})$ for $k$ sufficiently large.  

In this paper, we almost close the gap by showing that provable reconstruction can be done in polynomial time with $O(\plog)$ sequence length rather than $O(\poly(n))$ sequence length, even when the probability of insertion and deletion are non-uniform, asymmetric, and $\pI, \pD$ are bounded by a constant. We do this by first constructing signature estimators that exhibit 1) robustness to indel-induced noise in expectation and 2) low variance when our mutation rate is below the Kesten-Stigum threshold. Then, these reconstructed signatures are the key components of a distance estimator, inspired by estimators introduced in \cite{roch2008sequence}, that uses $O(\log(n))$ conditionally independent reconstructed signatures to derive a concentration result for accurately estimating the distance between any nodes $a, b$. Our bounds on the rates of substitution, insertion and deletion are essentially optimal since we show that as long as our overall mutation rate is less than the Kesten-Stigum threshold, we can reconstruct with a sequence length requirement of $O(\plog)$, matching lower bounds up to a constant in the exponent. Our result implies that the noise introduced by insertions and deletions can be controlled in a similar fashion as the noise introduced by substitutions, breaking the standard intuition that misalignments caused by indels would inherently lead to a significantly higher sequence length requirement.

In Section~\ref{sec:prelim}, we provide a general overview of our model and methods. In Section~\ref{sec:sym}, we demonstrate that tree reconstruction with poly-logarithmic sequence length requirement for the symmetric case when the insertion and deletion probabilities are the same. In Section~\ref{sec:assym}, we extend our results to the asymmetric case and then we conclude with future directions.

\section{Preliminaries}
\label{sec:prelim}
\subsection{Model and Methods}

For simplicity, we'll consider the following model of phylogenetic evolution, which we call {\bf CFN-Indel}, as seen in \cite{andoni2010phylogenetic,daskalakis2010alignment}. We note that our analysis can be extended to the more general sequence models, such as GTR, with standard techniques. We start with a tree $T$ with $n$ leaves, also known as the model tree. There is a length $k$ bitstring at the root chosen uniformly at random from all length $k$ bitstrings. Each other node in $T$ inherits its parent's bitstring, except the following perturbations are made simultaneously and independently for each edge $e$ in the tree,

\begin{itemize}
\item Each bit is flipped with probability $\pS(e)$.
\item Each bit is deleted with probability $\pD(e)$.
\item Each bit inserts a random bit to its right with probability $\pI(e)$.
\end{itemize}

 \begin{definition}
 For an edge $e$, we denote that {\bf length} or {\bf rate of mutation} of that edge as 
 
 $$\lambda(e) = -[\ln(1 - 2\pS(e))+\ln(1-\pD(e))-\frac{1}{2}\ln(1+\pI(e)-\pD(e))]$$ 
 \end{definition}
 
Note that higher $\pI(e), \pD(e), \pS(e)$ leads to a higher edge length and that this $\lambda(e)$ is nonnegative: the only possibly negative term is the term $\frac{1}{2}\ln(1+\pI(e)-\pD(e))$, and if this term is negative, its absolute value is less than the term $-\ln(1-\pD(e))$. We provide some intuition for this definition: to estimate distances between nodes, our algorithm will look at the correlation between their bitstrings. We can show the correlation between two bitstrings decays by roughly $(1-2\pS(e))(1-\pD(e))$ for each edge $e$ on the path between the corresponding nodes, justifying the first two terms in the above definition. The term $\frac{1}{2}\ln(1+\pI(e)-\pD(e))$ is included to match a normalization term in our definition of correlation, which is needed to account for differing bitstring lengths throughout the tree. Thus, $\lambda(e)$ represents the rate of change that an edge produces in the sequence evolution process and therefore captures a notion of the length of an edge.  For two nodes, $a,b$, let $P_{a,b}$ denote the unique path between them and let $d(a,b) = \sum_{e \in P_{a,b}} \lambda(e)$ be our true distance measure between $a, b$. Note that $d$ is a tree metric or an {\bf additive} distance matrix. 
 
 As with all phylogenetic reconstruction guarantees, we assume upper and lower bounds $0 < \lambda_{min} \leq \lambda(e) < \lambda_{max}$ (in literature, $\lambda_{min}$ is often denoted with $f$ and $\lambda_{max}$ with $g$). Similar to \cite{roch2008sequence}, we work in the $\Delta$-branch model, where for all edges $e$, we have $\lambda(e) = \tau_e \lambda_{min}$ for some positive integer $\tau_e$. The {\it phylogenetic reconstruction problem} in the CFN-Indel model is to reconstruct the underlying model tree $T$ with high probability given the bitstrings at the leaves. 
 
 In this paper, we establish the claim that reconstruction can be done with $k = O(\log^\kappa n)$ bits given that $\lambda_{max}$ is the well known Kesten-Stigum threshold and $\lambda_{min} = \Omega(1/\plog)$. Note that this allows $\pS,\pD,\pI$ to be constant and that the upper bound $\lambda_{max}$ is information-theoretically optimal. We will first state the main theorem when $\pI(e) = \pD(e)$ for all edges $e$ and our model tree $T$ is balanced, which we call the {\it symmetric} case.

 \begin{restatable}{thm}{sym}
 \label{thm:symmetric}
 Assume that $\lambda_{max} \leq \ln(\sqrt{2})$ is less than the Kesten-Stigum threshold and $\lambda_{min} = \Omega(1/\plog)$. Furthermore, assume that we are in the symmetric case where $\pI(e) = \pD(e)$ for all edges $e$. Then for a sufficiently large $\kappa$, with $O(\log^\kappa n)$ sequence length, there is an algorithm \textsc{TreeReconstruct} that can reconstruct the phylogenetic tree with high probability under the CFN-Indel model.
 \end{restatable}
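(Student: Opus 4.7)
The plan is to combine a distance-based reconstruction framework with an ancestral signature estimator that is robust to indels. The overall strategy mirrors the recursive deep-first reconstruction of Daskalakis-Mossel-Roch and Roch: starting from the leaves, reconstruct the tree one level at a time; at each level, use a signature estimator to effectively ``lift'' sequence information up to each newly created internal node; and use these estimates to compute pairwise distances between nodes at the current frontier, from which the next level can be built by a four-point-condition test. Because the tree is balanced with depth $O(\log n)$ and the branch lengths are $\Delta$-discretized with $\lambda_{min} = \Omega(1/\plog)$, it suffices to resolve pairwise distances to additive accuracy $\Theta(\lambda_{min}/2)$ on short paths, so a polylogarithmic number of samples per distance query will do if each query concentrates well.

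The first technical step is to define, for each node $v$ of the tree, a signature vector $\chi(v)$, following the Daskalakis-Roch-style construction but indel-robust. For a leaf, $\chi(v)$ is computed directly from its observed bitstring; for an internal node, given estimates $\widehat{\chi}(u_1),\dots,\widehat{\chi}(u_m)$ at its children's subtrees, we form $\widehat{\chi}(v)$ as a suitably rescaled average. A recursive averaging lemma must then show two things: (i) $\ex[\widehat{\chi}(v)\mid \sigma(v)] \approx \chi(v)$ despite the indel-induced misalignments, with bias decaying geometrically along the tree at rate controlled by $\lambda_{max}$; and (ii) $\Var[\widehat{\chi}(v)]$ stays bounded because averaging over the many descendants in each subtree kills the per-site noise whenever $\lambda_{max}$ is strictly below the Kesten-Stigum threshold $\ln(\sqrt{2})$. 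Part (ii) is exactly the place where the hypothesis $\lambda_{max} \le \ln(\sqrt{2})$ enters, mirroring the classical analysis of CFN ancestral reconstruction.

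Having good signatures in hand, the second step is to build a distance estimator $\widehat{d}(a,b)$ for every pair of frontier nodes. By the definition of $\lambda(e)$, the true distance $d(a,b)$ is essentially $-\ln$ of a normalized correlation between the bitstrings at $a$ and $b$, where the normalization term $\tfrac{1}{2}\ln(1+\pI(e)-\pD(e))$ compensates for length changes; in the symmetric case $\pI(e)=\pD(e)$, this term vanishes and correlations are particularly clean. To get concentration with polylog sequence length, partition the root bitstring into $\Theta(\log n)$ blocks that evolve conditionally independently, run the signature construction on each block, and average the $\Theta(\log n)$ resulting signature-pair correlations. A Hoeffding/Chernoff bound, combined with the bias/variance bounds above, yields $\prob[|\widehat{d}(a,b)-d(a,b)|>\lambda_{min}/4] \le n^{-\Omega(1)}$ for any pair with $d(a,b) = O(\plog)$. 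A union bound over the $\poly(n)$ distance queries made by \textsc{TreeReconstruct}, together with the rounding afforded by the $\Delta$-branch model, then converts approximate distances into exact topology recovery.

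The main obstacle is step (i) of the signature-averaging lemma, namely controlling the expected effect of indel-induced shifts on the averaged signature. In the pure CFN setting, bit positions are aligned across generations and Kesten-Stigum arguments apply site-by-site; here the correspondence between a bit at $v$ and bits at its descendants is stochastic, and naive per-site averages would be badly biased by misalignments. The key insight I expect to carry the proof is that the Daskalakis-Roch signatures are constructed as aggregate Fourier-type statistics that are approximately invariant in expectation under a single indel event, so the bias from each edge contributes only an additional $O(\pID(e))$ factor that can be absorbed into $\lambda(e)$; this is precisely why the normalization $\tfrac{1}{2}\ln(1+\pI(e)-\pD(e))$ appears in the definition of edge length. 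Making this quantitative, while simultaneously ensuring that the extra variance introduced by indel shifts does not overwhelm the contraction below the Kesten-Stigum threshold, will be the heart of the argument.
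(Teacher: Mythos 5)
Your overall architecture matches the paper's: block-wise signature vectors, a recursive weighted-average estimator of internal-node signatures whose bias and variance are controlled below the Kesten-Stigum threshold, and a distance estimator built from signature correlations fed into a four-point / cherry-picking reconstruction with rounding via the $\Delta$-branch assumption. But there is a genuine gap in the concentration step for deep distances. You write that a Hoeffding/Chernoff bound over the $\Theta(\log n)$ block correlations, ``combined with the bias/variance bounds,'' gives $\prob[|\widehat{d}(a,b)-d(a,b)|>\lambda_{min}/4]\le n^{-\Omega(1)}$. This works for leaves, where $|s_{a,i}| = O(\log n)$ almost surely (so Azuma applies over the blocks), but it does not work for the reconstructed internal-node signatures. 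The recursive estimator $\widehat{s}_{a,i}$ is a weighted average with weights $e^{d(x,a)}$ that grow like $n^{\lambda_{max}}$, so the only handle on it is a second-moment bound $\ex[\widehat{s}_{a,i}^2]=O(\log^2 n)$; there is no almost-sure bound that would let Hoeffding or Azuma fire over the blocks. Variance plus Chebyshev gives at best success probability $1-1/\plog$ per distance query, which is far from the $1-n^{-\Omega(1)}$ you need to union-bound over $\poly(n)$ queries.

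The paper closes this gap with an explicit amplification device that is missing from your sketch: for two frontier nodes $a,b$, it descends $\Delta h = \delta\log\log n$ levels to obtain $2^{\Delta h} = \Omega(\log n)$ descendant pairs $(a_j,b_j)$, computes a separate low-variance estimator $\widetilde{d}(a_j,b_j)$ from each (conditionally independent once the bitstrings at the $a_j,b_j$ are fixed), shows each one is $\epsilon$-accurate with constant probability by Chebyshev, and then aggregates them with a median-type rule so that Azuma over the $\Omega(\log n)$ independent estimators yields failure probability $n^{-\Omega(1)}$. Your plan needs this (or an equivalent sub-Gaussian/median-of-means amplification) to make the union bound go through; the blocks alone do not supply it. A secondary, smaller imprecision: the normalization term $\tfrac{1}{2}\ln(1+\pI-\pD)$ in $\lambda(e)$ compensates for the $1/\sqrt{l}$ scaling when block lengths drift, not for the additive bias caused by bits shifting across block boundaries --- that bias is controlled separately by bounding bit shifts by $\widetilde{O}(\sqrt{k})$ and showing the leaked contribution is $O(\plog / \sqrt{l})$, which is a different (additive, not multiplicative) mechanism than absorption into $\lambda(e)$.
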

 
 Our high level idea is to estimate the additive distance matrix $d(a,b) = \sum_{e \in P_{a,b}} \lambda(e)$ and use standard distance-based methods in phylogeny to reconstruct the tree. As in \cite{daskalakis2010alignment}, our estimator of the distance relies on correlation calculations using blocks of consecutive sequence sites of length $l = \lfloor k^{1/2+\zeta}\rfloor$ for some small constant $0 < \zeta < 1/2$\footnote{The best setting of $\zeta$ will depend on other parameters introduced in the paper. For simplicity, the reader may wish to think of $\zeta$ as $1/4$.}. Therefore, we have approximately $L = \lfloor \frac{k}{l} \rfloor$ total disjoint blocks. For the bitstring at node $a$ and $1 \leq i \leq L$ let $\Delta_{a,i}$ be the signed difference between the number of zeroes in bits $(i-1)l + 1$ to $il$ of the bitstring at node $a$ (for convenience, we call this block $i$ of node $a$) and the expected number of zeroes, $\frac{l}{2}$.

\begin{definition} 
For a node $a$, we define the {\bf signature} of the corresponding sequence, $s_a$, as a vector with the $i$-th coordinate as
$$ s_{a,i} = \Delta_{a,i}/\sqrt{l}$$
\end{definition} 

We will use the signature of a sequence as the only information used in distance computations. Specifically, we note that signatures of two nodes far apart in the tree should have a low correlation, whereas the signatures of two nodes close together should have a high correlation. To realize this intuition, we prove concentration of signature correlations even under indel-induced noise and show that this concentration property can be applied recursively up the tree for signature reconstruction. 

Note that signatures are robust to indels as a single indel can only slightly change a signature vector, although it can have a global effect and change many signature coordinates at once. This leads to the somewhat accurate intuition that indels can introduce more noise than substitutions, as they can only produce local changes. However, because each coordinate of a signature, $s_{a,i}$, is an average over blocks of large size, we can control the indel-induced noise by 1) showing the signature is almost independent coordinate-wise and 2) applying concentration to produce an accurate distance estimator. Next, we introduce a novel analysis of indel-induced noise in signature reconstruction that allows for recursion up the tree. Specifically, we show that the indel-induced noise decays at about the same rate as the signal of the correlations between nodes, as we move down the tree. 

Putting it together, we are able to recursively reconstruct the signatures for all nodes using simply leaf signatures, showing that these estimators have low variance of $O(\log^2 n)$ as long as the edge length is less than the Kesten-Stigum threshold. Intuitively, this phenomenon occurs because the number of samples increases at a faster rate than the decay in correlation. By averaging over signatures in a sequence, this reduces the noise to $O(1/\plog)$. Finally, we show that this is sufficient for an highly accurate distance estimator via a Chernoff-type bound with $O(\log n)$ conditionally independent estimators to achieve tight concentration, leading to a recursive reconstruction algorithm via a simple distance-based reconstruction algorithm. 
 
In the asymmetric case, let $\mdepth$ be the maximum depth of our model tree. When $\pI(e) \neq \pD(e)$, we see that if $\pD(e) > \pI(e) + \kappa \frac{\log \log n}{\mdepth}$, our model will generate a nearly zero-length bitstring with high probability, as noticed in \cite{andoni2010phylogenetic}. Therefore, reconstruction is impossible. Furthermore, note that if $\mdepth > k^2$, where $k$ is the sequence length at the root, then the standard deviation in leaf sequence lengths due to insertion and deletion is on the order of $\Theta(\sqrt{\mdepth}) = \Omega(k)$ even if $\pD(e) = \pI(e)$. Again, we can easily encounter nearly zero-length bitstrings with decent probability. 

Otherwise, for any constants $\alpha, \beta > 0$ if we have $\mdepth \leq \log^{\alpha} n$ and $|\pD(e) - \pI(e)| \leq \frac{\beta \log \log n}{\mdepth}$, we show that if $\kappa$ is large enough\footnote{We note that the dependence of our $\kappa$ value on $\alpha, \beta$ does not match the previously mentioned lower bounds.} our algorithm can still reconstruct the underlying model tree, albeit through a more complicated analysis.
 
 \begin{restatable}{thm}{assym}
 \label{thm:asymmetric}
  Assume that $\lambda_{max}\leq \ln(\sqrt{2})$ and $\lambda_{min} = \Omega(1/\plog)$. Also assume for some constants $\alpha, \beta$ that $|\pI(e) - \pD(e)| \leq \frac{\beta \log \log n}{\mdepth}$ for all edges $e$, where $\mdepth \leq \log^{\alpha} n$ is the maximum depth of the model tree. Then for a sufficiently large constant $\kappa$, when the root has $O(\log^\kappa n)$ sequence length, there is an algorithm \textsc{TreeReconstruct} that can reconstruct the phylogenetic tree with high probability under the CFN-Indel model.
 \end{restatable}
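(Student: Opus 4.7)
The plan is to reduce Theorem~\ref{thm:asymmetric} to an adaptation of the symmetric analysis behind Theorem~\ref{thm:symmetric}, carefully controlling the additional drift in sequence length introduced when $\pI(e) \neq \pD(e)$ using the constraints $\mdepth \leq \log^\alpha n$ and $|\pI(e) - \pD(e)| \leq \beta \log\log n / \mdepth$.

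First, I would establish a high-probability ``regular length'' event on which every internal node's bitstring has length within a $\plog$ factor of $k$. Starting from length $k$ at the root, the expected length after traversing edges $e_1, \ldots, e_d$ is $k \prod_j (1+\pI(e_j)-\pD(e_j))$, and the hypothesis on $|\pI-\pD|$ together with $\mdepth \leq \log^\alpha n$ bounds this multiplier by $\exp(\pm \beta \log\log n) = (\log n)^{\pm\beta}$. A Chernoff/martingale bound on the indel counts along any root-to-node path, combined with a union bound over the $O(n\,\mdepth)$ nodes, shows that no node deviates from its expected length by more than a polylog factor. I would condition on this event throughout.

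Second, I would port the signature definition to this setting by fixing a uniform block length $l$ and a common number of blocks $L$ computed from the worst-case shortest expected length over the tree; since that worst case is still $k / \plog$, choosing $\kappa$ sufficiently large leaves us with $L = \Omega(k/l)$ usable coordinates at every node. Then I would redo the signature correlation computation from the symmetric case edge-by-edge. A single site's correlation still decays by $(1-2\pS(e))(1-\pD(e))$, and the extra factor $\sqrt{1+\pI(e)-\pD(e)}$ is exactly what is needed to rescale the $1/\sqrt{l}$ normalization against the change in expected block length along $e$. This is precisely the content of the third term in the definition of $\lambda(e)$, so the expected correlation of $s_a$ and $s_b$ still decays like $e^{-d(a,b)}$.

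Third, I would re-examine the variance and indel-noise bounds that drove the recursive signature reconstruction and the Chernoff-style distance estimator. Under the regular-length event, asymmetric indels displace block boundaries by amounts of expected order $l \cdot |\pI(e)-\pD(e)|$ per edge, so the accumulated displacement down an $\mdepth$-deep path is $O(l \cdot \beta \log\log n)$; this stays strictly below the $O(l/\plog)$ noise budget that the symmetric argument already tolerates once $\kappa$ is taken large enough relative to $\alpha, \beta$. With signatures, correlations, and edge-length decomposition all intact, the recursive reconstruction of ancestral signatures, the averaging argument that drives the variance down to $O(1/\plog)$, and the $O(\log n)$-sample Chernoff-type distance concentration all transfer from Theorem~\ref{thm:symmetric}, and a standard distance-based tree reconstruction finishes the proof.

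The main obstacle is the third step: showing that the multiplicative compounding of asymmetric length drifts down the tree does not overwhelm the signal in the recursive signature estimator. The budget $\mdepth \cdot |\pI(e)-\pD(e)| = O(\log\log n)$ is tight for exactly this reason, and translating it into an explicit bound on the indel-induced noise at each level of the recursion requires careful bookkeeping of how the drift interacts with both the block-size normalization and the Kesten-Stigum-induced concentration---bookkeeping that is strictly heavier than in the symmetric case, where the block boundaries have zero expected displacement.
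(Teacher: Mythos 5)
Your high-level plan correctly identifies the two new issues (length drift and the $\ln(1+\pI-\pD)$ term in $\lambda(e)$), but the technical core of your Step 2/Step 3 contains a gap that breaks the argument: you propose a \emph{fixed} block length $l$ across all nodes and claim the accumulated block-boundary displacement down a depth-$\mdepth$ path is $O(l\cdot\beta\log\log n)$. This is off by roughly a factor of $k/l$. The per-edge expected displacement of a bit at position $j$ is $j\cdot|\pI(e)-\pD(e)|$, not $l\cdot|\pI(e)-\pD(e)|$; it scales with the position, not with the block length. Compounding this over $\mdepth$ edges, a bit near the end of the bitstring (position $\approx k$) has expected displacement $k(\eta(a)-1)$ where $\eta(a)=\prod_{e}(1+\pI(e)-\pD(e))$ can be as large as $\log^\beta n$. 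This is $\Theta(k\cdot\text{polylog}(n))$, vastly larger than $l=k^{1/2+\zeta}$, so with fixed-length blocks, block $i$ of the root and block $i$ of a deep node $a$ would typically have \emph{no shared bits at all} — the correlation calculation collapses and the signature estimator receives no signal.

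The paper avoids this by defining node-dependent block lengths $l_a=\lfloor l_r\,\eta(a)\rfloor$ (keeping the \emph{number} of blocks $L$ fixed), so that block boundaries \emph{track} the deterministic multiplicative drift and only the mean-zero stochastic fluctuation — which is $O(\sqrt{k}\cdot\text{polylog}(n))$ after the ``normalized shift'' bound of Lemma~\ref{lemma:bitshifts-asym} — remains to be absorbed by the $l=k^{1/2+\zeta}$ margin. The $1/\sqrt{l_a}$ normalization then interacts with the third term of $\lambda(e)$ exactly as you anticipated, but only because the block lengths themselves are rescaled. This rescaling introduces a further wrinkle you do not address: $\eta(a)$ is unknown at internal (and even leaf) nodes, so the algorithm must use observed sequence lengths $k_a$ to form ``pseudo-blocks'' of length $l_a'=\lfloor k_a/L\rfloor$, and one must prove (via the length concentration Lemma~\ref{lemma:lengths-asym} and Lemma~\ref{lemma:signatureestimation}) that the pseudo-block signature $\st_{a,i}$ is close to the ideal $s_{a,i}$. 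Your proposal also leaves out the complications from tree imbalance — the depth-weighted estimator with $2^{h(x)}$ factors, the three-case definition of $\widehat{d}(a,b)$, and the non-dangling-subtree reduction to Blindfolded Cherry Picking — though these are more mechanical than the block-rescaling issue. In short: the uniform-$l$ approach fails, and repairing it essentially forces you into the paper's $\eta$-rescaled block scheme.
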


\subsection{Notation} 

We give an overview of the notation we use most frequently here:

\begin{itemize}
\item $n$ is used to refer to the number of leaves in the model tree, $T$.
\item $\pS, \pD, \pI$ denote the probabilities of substitution, deletion, and insertion on an edge.
\item $\lambda_{min} \leq \lambda(e) \leq \lambda_{max}$ is the length or decay rate of an edge.
\item $k$ will be used to refer to the (polylogarithmic) length of a bitstring.
    \begin{itemize}
    \item $\kappa$ will be used to denote the exponent in the size of $k$.
    \end{itemize}
\item $a, b$ will be used to refer to nodes in the tree.
    \begin{itemize}
    \item $x$ will be used to refer to a leaf node.
    \item $\depth(a)$ will refer to the number of edges on the path from $a$ to the root, i.e. the depth of $a$.
    \item $a \wedge b$ will refer to the least common ancestor of $a, b$ in the tree.
    \item $d(a,b)$ will refer to the distance between two nodes in the tree.
    \item $\Dt(a, b)$ will refer to the ``correlation'' between the bitstrings at $a$ and $b$, and will be used to estimate $d(a,b)$.
    \item $A$ will be used to refer to the set of nodes which are descendants of $a$.
    \end{itemize}
\item $\sigma_{a, j}$ will be used to refer to the $j$th bit of the bitstring at $a$.
\item We will often split our bitstrings into ``blocks'' of bits, which are consecutive subsequences the bitstring. 
    \begin{itemize}
    \item We will use $i$ primarily to refer to the index of a block. 
        \begin{itemize}
        \item When we are referring to the index of something besides a block, we will use $j$ instead of $i$.
        \end{itemize}
    \item $l$ will be used to denote the length of these blocks, and $L$ the number of blocks.
    \item $s_{a,i}$ is a ``signature'' that will refer to the normalized (signed) difference between the number of zeroes in block $i$ of the bitstring at node $a$ and half its length.
        \begin{itemize}
        \item $\widehat{s}_{a,i}$ denotes an estimator of $s_{a,i}$ for internal nodes $a$.
        \item $\st_{a,i}$ denotes an estimate of $s_{a,i}$ for leaves in the asymmetric case.
        \end{itemize}
    \end{itemize}
\end{itemize}

Throughout the paper, we will use the following observation:

\begin{observation}
Let $X$ be any random variable, and $\cond$ an event which occurs with probability $1 - n^{-\Omega(\log n)}/B$, where $B$ is any upper bound on $|X|$. Then $\ex[X|\cond]$ and $\ex[X]$  differ by at most $n^{-\Omega(\log n)}$.
\end{observation}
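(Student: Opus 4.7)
The plan is to reduce the claim to a single application of the law of total expectation. Writing
\[
\ex[X] = \ex[X \mid \cond]\prob[\cond] + \ex[X \mid \cond^c]\prob[\cond^c],
\]
and using $\prob[\cond] = 1 - \prob[\cond^c]$, I would rearrange to isolate the discrepancy:
\[
\ex[X] - \ex[X \mid \cond] = \prob[\cond^c]\paren{\ex[X \mid \cond^c] - \ex[X \mid \cond]}.
\]
Taking absolute values and invoking the assumed (deterministic) upper bound $|X| \leq B$, each conditional expectation satisfies $|\ex[X \mid \cdot]| \leq B$, so the parenthesized difference is at most $2B$. Combining this with the hypothesis $\prob[\cond^c] \leq n^{-\Omega(\log n)}/B$ gives
\[
\bigl|\ex[X] - \ex[X \mid \cond]\bigr| \leq 2B \cdot \frac{n^{-\Omega(\log n)}}{B} = 2 n^{-\Omega(\log n)},
\]
and the factor of $2$ is absorbed into the $\Omega(\log n)$ in the exponent, yielding the claimed bound.

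The only subtlety worth flagging is the role of $B$ as a pointwise (almost-sure) bound on $|X|$, since we need it to control $\ex[X \mid \cond^c]$ even though $\cond^c$ is a rare event; the statement implicitly requires this, and the word ``upper bound'' should be read that way. There is no real obstacle: the proof is a two-line calculation once the decomposition is written down.
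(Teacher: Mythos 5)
Your proof is correct and follows essentially the same route as the paper: both apply the law of total expectation and bound the conditional expectations by $B$ so that the $\prob[\lnot\cond]\le n^{-\Omega(\log n)}/B$ factor kills the error term. Your explicit rearrangement to isolate $\ex[X]-\ex[X\mid\cond]$ and your remark that $B$ must be a pointwise bound on $|X|$ are minor presentational differences, not a different argument.
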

\begin{proof}
$$\ex[X] = \ex[X|\cond]\Pr[\cond] + \ex[X|\lnot \cond]\Pr[\lnot \cond] = \ex[X|\cond](1-\frac{n^{-\Omega(\log n)}}{B})+\ex[X|\lnot \cond] \frac{n^{-\Omega(\log n)}}{B}$$

The observation follows because $\ex[X|\cond]\frac{n^{-\Omega(\log n)}}{B}$ and $\ex[X|\lnot \cond] \frac{n^{-\Omega(\log n)}}{B}$ are both at most $n^{-\Omega(\log n)}$ in absolute value.
\end{proof}

In this paper, all random variables we use can be upper bounded in magnitude by $O(\text{poly}(n))$. Thus for events $\cond$ which occur with probability $1-n^{-\Omega(\log n)}$, we may use $\ex[X|\cond]$ and $\ex[X]$ interchangeably as they only differ by at most $n^{-\Omega(\log n)}$, which will not affect any of our calculations. However, interchanges will often still be justified in proofs. 

\section{Reconstruction with Balanced Trees and Symmetric Probabilities}
\label{sec:sym}
In this section, we are in the symmetric case and assume $\pD(e) = \pI(e)$ for every edge. We also assume that the model tree is perfectly balanced. Both these assumptions will be relaxed later and our results are extended to the asymmetric case in the next section. 

We first demonstrate that some regularity conditions on the underlying bitstrings in the model tree hold with high probability. Given these regularity conditions, we show that the concentration for an recursive signature estimator provides a good distance estimator between any two nodes in the tree. Finally, we present our final distance-based reconstruction algorithm. Unless otherwise specified, proofs assume $\kappa$ is a sufficiently large constant depending only on $\zeta, \lambda_{min}$ and the parameters $\epsilon, \delta$ in the lemma/theorem statements. Our algorithmic construction will fix values of $\epsilon, \delta$, and thus works for some sufficiently large $\kappa$.

\subsection{High Probability Tree Properties}

Before we begin to describe our method for reconstructing the tree, we observe a few regularity properties about the bitstrings in the tree and prove that these properties hold with high probability. 

\begin{definition}

For an edge $(a, b)$ from parent node $a$ to child node $b$, we say that the $j_b$th bit of the bitstring at $b$ is \textbf{inherited} from the $j_a$th bit of the bitstring at $a$ if:

\begin{itemize}
    \item The $j_a$th bit of the bitstring at $a$ does not participate in a deletion on the edge $(a, b)$.
    \item The number of insertions minus the number of deletions on the edge $(a, b)$ in bits $1$ to $j_a - 1$ of the bitstring at $a$ is $j_b - j_a$.
\end{itemize}

For $a$ that is an ancestor of $b$, we extend this definition by saying that the $j_b$th bit of $b$ is inherited from the $j_a$th bit of the bitstring at $a$ if for the unique $a$-$b$ path $x_0 = a, x_1, \ldots x_k = b$, there are $j_0 = j_a, j_1 \ldots j_k = j_b$ such that for any $i$, the $j_i$th bit of the bitstring at $x_i$ is inherited from the $j_{i-1}$th bit of the bitstring at $x_{i-1}$.  

Lastly, to account for the case where $a = b$, we say that for any node $a$, the $j$th bit of $a$'s bitstring is \textbf{inherited} from the $j$th bit of $a$'s bitstring.
\end{definition}

\begin{definition}
For any two nodes $a, b$, the $j_a$th bit of the bitstring at $a$ and the $j_b$th bit of the bitstring at $b$ are \textbf{shared} if both are inherited from the $j$th bit of the bitstring at the least common ancestor of $a$ and $b$ for some $j$.
\end{definition}

\begin{definition}
For any two nodes $a$ and $b$, we say that the $j$th bit of the bitstring at $a$ \textbf{shifts} by $m$ bits on the path from $a$ to $b$ if there is $j'$ such that the $|j' - j| = m$ and the $j$th bit of the bitstring at $a$ and the $j'$th bit of the bitstring at $b$ are shared.
\end{definition}

It will simplify our analysis to assume all bitstrings are length at least $k$, which might not happen if the length of the root bitstring is $k$. Instead, we will let the length of the root bitstring be $2k$, and then by the following lemma all of the leaf bitstrings have $k$ bits to look at. 

\begin{lemma}\label{lemma:lengths}
If the bitstring at the root has length $2k$, then with probability $1 - n^{-\Omega(\log n)}$, the bitstring at all nodes have length at least $k$ and at most $4k$.
\end{lemma}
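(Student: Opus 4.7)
The plan is to analyze, along each root-to-node path, the length of the bitstring via a martingale argument and then union bound over all $O(n)$ nodes of the (balanced) tree. Fix a node $v$ at depth $D$, let $v_0,v_1,\ldots,v_D=v$ be the root-to-$v$ path, and let $X_i$ denote the length of the bitstring at $v_i$, so $X_0=2k$. Since the model tree is balanced with $n$ leaves, $D=O(\log n)$. In the symmetric case $\pI(e)=\pD(e)=:p_e$, conditional on $X_i=m$, each of the $m$ bits of $v_i$'s bitstring independently contributes an increment in $\{-1,0,+1\}$ to $X_{i+1}-X_i$ with mean zero (the $-1$ probability $p_e(1-p_e)$ equals the $+1$ probability $(1-p_e)p_e$). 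Hence $\{X_i\}$ is a martingale with conditional per-step variance $\Theta(p_e m)$.

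The concentration step is where the main obstacle lies: the almost-sure bound $|X_{i+1}-X_i|\leq X_i=O(k)$ is too weak to make Azuma useful in the regime $k\gg D$, and Freedman's inequality with the true per-step variance $O(p_e k)$ stalls at a constant exponent for the same reason. I would circumvent this via a soft truncation. Hoeffding applied per level to the $m\leq 3k$ bounded i.i.d.\ increments gives $\prob[|X_{i+1}-X_i|>M_0 \mid X_i\in[k,3k]]\leq n^{-\omega(\log n)}$ for $M_0=C\sqrt{k}\log^{2} n$. I would then form a truncated martingale $X'$ by restricting each increment to the good event and subtracting its tiny conditional mean so as to preserve the martingale property; the result satisfies $|X'_{i+1}-X'_i|\leq 2M_0$ almost surely, and a union bound over the $D$ levels shows $X'=X$ throughout the path with probability $1-D\cdot n^{-\omega(\log n)}$.

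Applying Azuma--Hoeffding to $X'$ then yields
\[
  \prob\!\left[|X'_D-2k|>k\right] \leq 2\exp\!\left(-\frac{k^{2}}{8\,D\,M_0^{2}}\right) = 2\exp\!\left(-\Omega\!\left(\frac{k}{\log^{5} n}\right)\right),
\]
which is $n^{-\omega(\log n)}$ for $\kappa$ sufficiently large, since $k=\log^{\kappa}n$ and $D=O(\log n)$. Combining with the event $X=X'$, we obtain $|X_D-2k|\leq k$ with probability $1-n^{-\omega(\log n)}$, i.e.\ $X_v\in[k,3k]\subseteq[k,4k]$. A union bound over the $O(n)$ nodes of the tree then establishes the lemma with probability $1-n^{-\Omega(\log n)}$. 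The technical heart of the argument is the truncation coupling; once that is set up, Azuma--Hoeffding does the rest essentially automatically.
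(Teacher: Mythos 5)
Your proposal is correct and rests on the same key facts as the paper's proof---conditioned on the length at a node, the per-edge change in length is the difference of two binomials with $O(k)$ trials, hence concentrated within $\tilde{O}(\sqrt{k})$ of zero; there are $O(\log n)$ edges on a root-to-node path; and $\sqrt{k}\,\text{polylog}(n)\ll k$ for $\kappa$ large. Where you diverge is in how you accumulate the per-edge deviations across the path. The paper simply applies the per-edge concentration bound (Azuma/Hoeffding on a single edge) to each of the $\leq\log n$ edges with the invariant $k_{v_i}\leq 4k$, union-bounds over edges, and adds the deviations deterministically by the triangle inequality, getting a cumulative drift of at most $4\log^2 n\sqrt{k} < k$. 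You instead run Azuma over the entire path, which forces the truncation coupling you describe, since the raw almost-sure increment bound is $O(k)$ and useless. Your route works and even buys a slightly sharper tail (the $\sqrt{D}$ martingale scaling rather than $D$), but in the polylog regime this gain is irrelevant, and the truncation machinery is extra overhead. Note also that your argument as phrased has a mild circularity to tighten: the per-level Hoeffding bound is conditioned on $X_i\in[k,3k]$, but establishing that this invariant is maintained requires running the Azuma bound on $X'$ at every intermediate level $i$ (not just $i=D$) jointly with the per-level events, and then arguing by induction along the path that the truncations never fire. This is standard and can be made rigorous, but it is exactly the bookkeeping the paper's deterministic per-edge summation avoids.
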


The next two regularity properties show that the bit shifts are in fact also small and that the number of excess zeros on a consecutive sequence of bits is small. They both follow from independence and concentration of binomials.

\begin{lemma}\label{lemma:bitshifts}
With probability $1 - n^{-\Omega(\log n)}$, no bit shifts by more than $4 \log^2 n \sqrt{k}$ bits on any path in the tree.  
\end{lemma}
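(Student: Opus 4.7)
The plan is to reduce to ancestor-descendant paths and then control shifts one edge at a time. First, observe that any shift on the path from $a$ to $b$ can be written as $|j_a - j_b| \leq |j_a - j_c| + |j_c - j_b|$, where $c = a \wedge b$ is the least common ancestor and $j_c$ is the position in $c$'s bitstring from which both bits are inherited. By this triangle inequality it suffices to bound shifts on every ancestor-descendant path by $2\sqrt{k}\log^2 n$, and in a balanced tree such a path contains at most $\log n$ edges.

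Next, condition on the high-probability event of Lemma~\ref{lemma:lengths} that every bitstring has length at most $4k$. Fix a directed parent-to-child edge $(u,v)$ and a starting position $j \in \{1,\ldots,4k\}$ in $u$'s bitstring. The shift contributed by this single edge at position $j$ is exactly $\sum_{t=1}^{j-1} Y_t$, where the $Y_t$ are mutually independent and $Y_t$ equals $+1$ if bit $t$ of $u$ undergoes an insertion on edge $(u,v)$, $-1$ if it undergoes a deletion, and $0$ otherwise. In the symmetric regime $\pI(u,v) = \pD(u,v)$, each $Y_t$ has mean zero and magnitude at most one, so Hoeffding's inequality gives
$$\Pr\!\left[\left|\sum_{t=1}^{j-1} Y_t\right| > 2\sqrt{k}\log n\right] \leq 2\exp\bigl(-\Omega(\log^2 n)\bigr) = n^{-\Omega(\log n)}.$$
A union bound over the $O(n)$ edges and the $O(k) = \plog$ starting positions makes this per-edge shift bound hold simultaneously with probability $1 - n^{-\Omega(\log n)}$. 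Summing the per-edge bound over the at most $\log n$ edges of an ancestor-descendant path yields a total shift of at most $2\sqrt{k}\log^2 n$, and together with the LCA reduction this gives the claimed $4\sqrt{k}\log^2 n$ bound on all paths.

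The main subtlety is the dependency between edges: the starting position on the $i$th edge of a path is itself a random variable depending on the shifts on the earlier edges. My plan sidesteps this by union bounding over \emph{all} $O(k)$ possible starting positions on each edge at once, so the bound applies deterministically to whichever position is actually realized as a tracked bit descends through the tree; telescoping then works edge-by-edge without having to argue about the joint distribution of $(j_0,j_1,\ldots)$. The symmetry assumption $\pI = \pD$ is essential here, since without it the per-edge shift has drift $(\pI - \pD)(j-1)$ rather than being mean-zero, and this is exactly why Theorem~\ref{thm:asymmetric} later imposes $|\pI - \pD| \leq O(\log\log n / \mdepth)$ to keep the drift small enough for the same analysis to go through.
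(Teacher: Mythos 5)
Your proof is correct and follows essentially the same route as the paper: treat the per-edge shift at a given starting position as a mean-zero sum of bounded independent increments, apply Azuma/Hoeffding to get an $O(\sqrt{k}\log n)$ per-edge bound, and union-bound before summing over the $O(\log n)$ edges of a (split-at-the-LCA) path. Your explicit union bound over all $O(k)$ possible starting positions per edge is a clean way to dispose of the dependency between the realized starting index on one edge and the shifts on earlier edges, a point the paper's proof handles more implicitly by union-bounding over paths.
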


\begin{lemma}\label{lemma:uniformblocks}
With probability $1-n^{-\Omega(\log n)}$, for all nodes $a$, the number of zeroes in any consecutive sequence of length $m$ sequence in $a$'s bitstring differs from $m/2$ by at most $\sqrt{m} \log n$. Consequently, $|s_{a,i}| \leq O(\log n)$.
\end{lemma}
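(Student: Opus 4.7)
The plan is to exploit the fact that, at any node $a$, the bits of the bitstring at $a$ are mutually independent and each uniformly distributed on $\{0,1\}$. This follows from the structure of the CFN-Indel process: every bit at $a$ is either inherited from some (distinct) root bit through a chain of independent substitutions, or was introduced as a random insertion on some edge along the path from the root. Substitutions preserve uniformity, freshly inserted bits are uniform by definition, and distinct positions at $a$ correspond to distinct sources because deletions and insertions rearrange positions but never duplicate an existing bit within the same bitstring. One clean way to see this formally is to sample the entire ``skeleton'' of mutation events (which bits get deleted, where insertions happen, which substitutions fire) first, and only then assign the actual bit values by fresh uniform coins at the root and at each insertion; conditional on the skeleton, the bits at $a$ are an independent tuple of Uniform$\{0,1\}$ random variables, and this independence is preserved after marginalizing over the skeleton.

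Given this observation, I would first condition on the high-probability event from Lemma~\ref{lemma:lengths} that every bitstring in the tree has length at most $4k = O(\plog)$. For any fixed node $a$ and any fixed window of length $m$ within $a$'s bitstring, the number of zeroes is distributed as $\text{Binomial}(m, 1/2)$, so Hoeffding's inequality gives
\[ \prob\!\left[\,|Z - m/2| \geq \sqrt{m}\log n\,\right] \leq 2\exp(-2\log^2 n) = n^{-\Omega(\log n)}. \]
I would then union bound over (i) the $O(n)$ nodes of the tree, and (ii) the at most $(4k)^2 = \poly\log(n)$ choices of starting position and window length within each bitstring. The total number of events is $n \cdot \poly\log(n)$, so the failure probability remains $n^{-\Omega(\log n)}$, absorbing the conditioning on Lemma~\ref{lemma:lengths} by a union bound.

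The stated consequence $|s_{a,i}| = O(\log n)$ is then immediate from the definition $s_{a,i} = \Delta_{a,i}/\sqrt{l}$: the event just established forces $|\Delta_{a,i}| \leq \sqrt{l}\log n$, so $|s_{a,i}| \leq \log n$. The main conceptual step, and really the only place care is needed, is the independence-and-uniformity claim for the bits at a single node $a$; once that is in hand, the rest is a textbook Hoeffding-plus-union-bound calculation. The subtlety is that the \emph{joint} distribution of bitstrings across different nodes is highly correlated through the tree, so it would be wrong to try to apply Hoeffding globally; the argument genuinely relies on fixing $a$ first and only then invoking independence among the positions within $a$'s bitstring.
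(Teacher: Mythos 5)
Your proposal is correct and takes essentially the same route as the paper: fix a node $a$, observe that the bits at $a$ are i.i.d.\ Uniform$\{0,1\}$ because distinct positions trace back to distinct root bits or fresh insertions, apply a Chernoff-type bound (the paper invokes Azuma, you invoke Hoeffding, which is the same thing for a sum of independent Bernoullis), and union bound over the $O(n)$ nodes and the $\poly\log(n)$ windows per node guaranteed by Lemma~\ref{lemma:lengths}. Your ``sample the skeleton first, then the bit values'' framing is a cleaner justification of the independence claim than the paper's terse ``by symmetry,'' but it is the same underlying argument.
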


We defer the proofs of Lemmas~\ref{lemma:lengths},~\ref{lemma:bitshifts}, and~\ref{lemma:uniformblocks} to Section~\ref{sec:deferred}.

\subsection{Distance Estimator}

We define $\cond_{reg}$ to be event that the high-probability regularity assumptions that are proven in Lemma~\ref{lemma:lengths},~\ref{lemma:bitshifts}, and~\ref{lemma:uniformblocks} all hold. Using correlations of signatures, we can define the distance estimator of two leafs $a,b$, $\Dt(a,b)$ analogously to \cite{daskalakis2010alignment}. 

$$\Dt(a, b) = \frac{2}{L}\sum_{i = 1}^{L/2} s_{a, 2i+1}s_{b, 2i+1}$$

In the case when indels do not occur, standard techniques can be used to show that for any leafs $a, b$, $\Dt(a,b)$ has an expectation that exponentially decays with respect to $d(a,b)$. The exponential decay comes from the observation that since the mutations can be viewed as a Markov transition from one state to the other, the correlations between states exponentially decays. Therefore, a back-of-the-envelope calculation gives $\ex[\Dt(a,b)] \approx \exp({-\sum_{e \in P(a,b)}\lambda(e)})\ex[\Dt(a,a)] \approx  \exp(-d(a,b))$.

We show that in the presence of indels, such an expectation still holds with $O(1/\plog)$ relative error. Key to our surprisingly small relative error, even when the insertion and probability errors are as large as a constant, is the observation that the indel-induced noise also decays exponentially with respect to $d(a,b)$.

\begin{lemma}\label{lemma:unbiasedestimator}
For any two nodes $a, b$ in the tree, and any $i$, $\ex[s_{a,i}s_{b,i}] = \frac{1}{4}(1\pm O(\log^{-\kappa\zeta+2} n))\exp(-d(a,b))$
\end{lemma}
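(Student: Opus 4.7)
The plan is to expand $s_{a,i}s_{b,i}$ as a double sum of bit-level correlations, identify which pairs contribute nonzero signal, and quantify boundary effects via Lemma~\ref{lemma:bitshifts}. Let $c = a \wedge b$, write $\chi_{a,j} := 1 - 2\sigma_{a,j} \in \{\pm 1\}$ so that $s_{a,i} = \frac{1}{2\sqrt{l}}\sum_{j \in B_i^a}\chi_{a,j}$ where $B_i^a := \{(i-1)l+1,\ldots,il\}$. Then
\[ \ex[s_{a,i}s_{b,i}] = \frac{1}{4l}\sum_{j \in B_i^a}\sum_{j' \in B_i^b}\ex[\chi_{a,j}\chi_{b,j'}]. \]
Because distinct bits of $c$ are independent uniforms, every bit inserted between $c$ and $a$ (or $c$ and $b$) is a fresh independent uniform bit, and the two subpaths $c \to a$ and $c \to b$ are disjoint, each term vanishes unless bits $j$ of $a$ and $j'$ of $b$ are shared (both descended from the same bit of $c$). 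Conditioning on sharing and using independence of the two substitution chains yields $\ex[\chi_{a,j}\chi_{b,j'}\mid \text{shared}] = u := \prod_{e \in P(a,b)}(1-2\pS(e))$. Writing $N_i$ for the random number of shared pairs in $B_i^a \times B_i^b$, this gives $\ex[s_{a,i}s_{b,i}] = \frac{u}{4l}\,\ex[N_i]$.

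The next step is to evaluate $\ex[N_i]$. For each position $j^*$ in $c$'s bitstring, let $X_{j^*}$ (resp.\ $Y_{j^*}$) be the indicator that $j^*$ survives to $a$ (resp.\ $b$) and its inherited position lies in $B_i^a$ (resp.\ $B_i^b$). Since the indel processes on the disjoint subpaths $c \to a$ and $c \to b$ are independent, $\ex[X_{j^*}Y_{j^*}] = \ex[X_{j^*}]\ex[Y_{j^*}]$ and $\ex[N_i] = \sum_{j^*}\ex[X_{j^*}]\ex[Y_{j^*}]$. The marginal survival probability factors cleanly as $\prod_{e \in P(c,a)}(1-\pD(e))$, and Lemma~\ref{lemma:bitshifts} bounds the resulting position shift by $4\log^2 n\sqrt{k}$ with probability $1 - n^{-\Omega(\log n)}$. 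Since $l = k^{1/2+\zeta} \gg \sqrt{k}\log^2 n$, any $j^*$ in the interior of block $i$ of $c$ (at least $4\log^2 n\sqrt{k}$ from each endpoint) lands in $B_i^a$ conditional on survival with probability $1 - n^{-\Omega(\log n)}$, and similarly for $B_i^b$. Interior bits thus contribute $(l - O(\sqrt{k}\log^2 n))\prod_{e\in P(a,b)}(1-\pD(e))(1 - n^{-\Omega(\log n)})$; boundary bits (within $O(\sqrt{k}\log^2 n)$ of block $i$'s endpoints, whether inside or outside) contribute at most $O(\sqrt{k}\log^2 n)\prod_{e\in P(a,b)}(1-\pD(e))$; and distant bits contribute $n^{-\Omega(\log n)}$ in total. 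Summing,
\[ \ex[N_i] = l \cdot \prod_{e \in P(a,b)}(1-\pD(e)) \cdot \bigl(1 \pm O(\sqrt{k}\log^2 n / l)\bigr). \]

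Substituting back and using the symmetric-case identity $\exp(-d(a,b)) = \prod_{e\in P(a,b)}(1-2\pS(e))(1-\pD(e))$ (the factor $(1 + \pI(e) - \pD(e))^{-1/2}$ in $\exp(-\lambda(e))$ collapses to $1$ when $\pI(e) = \pD(e)$), together with $\sqrt{k}\log^2 n/l = \log^2 n/k^\zeta = \log^{-\kappa\zeta + 2} n$ for $k = \log^\kappa n$, yields the claim. The main obstacle is the boundary accounting in the previous paragraph: one must cleanly separate bits of block $i$ of $c$ that shift \emph{out} of $B_i^a$ or $B_i^b$ from bits of neighboring blocks of $c$ that shift \emph{into} them, and show that each such source of discrepancy contributes only on the scale $O(\sqrt{k}\log^2 n)$ rather than $\Omega(l)$. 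A secondary technical point is that the cleanest way to apply Lemma~\ref{lemma:bitshifts} is to first condition on the regularity event $\cond_{reg}$; by the observation in Section~\ref{sec:prelim}, the interchange perturbs expectations by at most an additive $n^{-\Omega(\log n)}$, which is absorbed into the claimed relative error.
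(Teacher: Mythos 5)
Your proposal is correct and takes essentially the same approach as the paper: expand the product of signatures into pairwise bit correlations, note that unshared pairs contribute zero, reduce to counting expected shared pairs in block $i$ of the common ancestor, and control the boundary using Lemma~\ref{lemma:bitshifts} via the same interior/boundary/distant (good/okay/bad) trichotomy. The cosmetic differences (using $\chi = 1 - 2\sigma$, framing the count as indicator products $X_{j^*}Y_{j^*}$ rather than conditioning on the full indel realization $\cond$) do not change the substance of the argument.
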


We defer the proof to Section~\ref{sec:deferred}.

\begin{corollary}\label{lemma:unbiasedestimator2}
For any two nodes $a, b$, $\ex[\Dt(a,b)] = \frac{1}{4}(1\pm O(\log^{-\kappa\zeta+2} n))\exp(-d(a,b))$
\end{corollary}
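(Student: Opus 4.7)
The plan is to apply linearity of expectation to the definition of $\Dt(a,b)$ and then invoke Lemma~\ref{lemma:unbiasedestimator} on each summand. Starting from
$$\Dt(a,b) = \frac{2}{L}\sum_{i=1}^{L/2} s_{a,2i+1}s_{b,2i+1},$$
linearity gives $\ex[\Dt(a,b)] = \frac{2}{L}\sum_{i=1}^{L/2} \ex[s_{a,2i+1}s_{b,2i+1}]$. I would then substitute the per-coordinate bound from Lemma~\ref{lemma:unbiasedestimator}, which asserts that every such product has expectation $\frac{1}{4}(1\pm O(\log^{-\kappa\zeta+2} n))\exp(-d(a,b))$, regardless of the index $i$.

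Because the bound on each summand is identical, the average of $L/2$ copies (scaled by $2/L$) reproduces the same expression term-by-term: the $\exp(-d(a,b))$ factor and the $\frac{1}{4}$ prefactor are identical across $i$, and the multiplicative $(1 \pm O(\log^{-\kappa\zeta+2} n))$ relative error, being the same bound applied uniformly, is preserved under convex combination. Hence we conclude directly that $\ex[\Dt(a,b)] = \frac{1}{4}(1\pm O(\log^{-\kappa\zeta+2} n))\exp(-d(a,b))$.

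There is no real obstacle here — all of the difficulty lies in Lemma~\ref{lemma:unbiasedestimator}, and the corollary is essentially a one-line consequence. The only minor point worth flagging is that averaging does not improve the relative error (since the errors are not independent across coordinates, only uniform bounds are available at the level of expectation); accordingly I would not attempt any variance-style reduction here and simply pass the per-coordinate bound through.
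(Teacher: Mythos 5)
Your proof is correct and is exactly the intended one-line argument: the paper states this as an immediate corollary of Lemma~\ref{lemma:unbiasedestimator} without giving a separate proof, and your application of linearity of expectation followed by the uniform per-coordinate bound is precisely what is meant. Your remark that averaging cannot improve the relative error at the level of expectation is also accurate and appropriately cautious.
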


Despite the indels, we can show that the odd-index blocks are almost independent conditioned on $\cond_{reg}$, i.e. in our analysis, we introduce a shift-invariant blockwise-independent signature scheme that is provably similar to our actual signature scheme. Thus, with high probability, we can also derive a tight concentration of our distance estimator that only uses signature correlations. We defer the proof and details to Section~\ref{sec:deferred}.

\begin{lemma}\label{lemma:estimatorconcentration}
Let $\delta > 0$ be any constant and $\epsilon = \Omega(\log^{\max\{-\kappa(1/2-\zeta)+2 \delta +6, -\kappa\zeta/2+\delta+3\}}(n))$. Then for nodes $a, b$ such that $ d(a, b) < \delta \log \log(n)$, then $|-\ln(4\Dt(a, b)) -  d(a, b)| < \epsilon$ with probability at least $1 - n^{-\Omega(\log n)}$.
\end{lemma}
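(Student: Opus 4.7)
The plan is to reduce the log-difference bound to an additive concentration bound on $\Dt(a,b)$ and then combine with Corollary \ref{lemma:unbiasedestimator2}. Since $d(a,b) \leq \delta \log\log n$ yields $\ex[\Dt(a,b)] = \Theta(\log^{-\delta} n)$, a Taylor expansion of $\ln(1+x)$ shows it suffices to establish $|\Dt(a,b) - \ex[\Dt(a,b)]| = O(\epsilon \log^{-\delta} n)$ with probability $1 - n^{-\Omega(\log n)}$, together with the multiplicative bias $O(\log^{-\kappa\zeta+2} n)$ in the mean already supplied by Corollary \ref{lemma:unbiasedestimator2}. I would condition throughout on the regularity event $\cond_{reg}$ from Lemmas \ref{lemma:lengths}, \ref{lemma:bitshifts}, and \ref{lemma:uniformblocks}.

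The main difficulty is that the summands $s_{a,2i+1}s_{b,2i+1}$ are not independent across odd blocks $i$: a single indel high in the tree shifts many downstream bits simultaneously and couples the block contents. Following the sketch preceding the lemma, the plan is to introduce shift-invariant proxy signatures $s'_{a,i}$ and $s'_{b,i}$ defined in terms of a fixed window $W_i$ of $l$ consecutive bits at the least common ancestor $a \wedge b$, where the windows $W_1, W_3, \ldots$ are chosen disjoint and separated by buffers wider than the maximum shift. By Lemma \ref{lemma:bitshifts}, the maximum shift on any path is $O(\sqrt{k}\log^2 n) = o(l)$ once $\kappa$ is large enough relative to $\zeta$, so such a placement exists. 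The joint law of $(s'_{a,i}, s'_{b,i})$ then depends only on the bits in $W_i$ together with independent mutation/indel noise on the two subpaths to $a$ and $b$, so these pairs are mutually independent across odd $i$.

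Given independence, Hoeffding's inequality applied to $\Dt'(a,b) := (2/L)\sum_i s'_{a,2i+1}s'_{b,2i+1}$, with per-term bound $O(\log^2 n)$ from Lemma \ref{lemma:uniformblocks} and $L/2 = \Theta(k^{1/2-\zeta})$ terms, gives $|\Dt'(a,b) - \ex[\Dt'(a,b)]| \leq O(\log^{3-\kappa(1/2-\zeta)/2} n)$ with probability $1 - n^{-\Omega(\log n)}$. After dividing by $\ex[\Dt(a,b)]$ and absorbing the approximation error between $\ex[\Dt']$ and $\ex[\Dt]$ (controlled via the analysis of Lemma \ref{lemma:unbiasedestimator}), this produces the first exponent in the maximum defining $\epsilon$. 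For the proxy bias $|\Dt - \Dt'|$, the true $i$th block of $a$ and the proxy block defined by $W_i$ differ in at most $O(\sqrt{k}\log^2 n)$ bit positions on $\cond_{reg}$; the differing bits behave essentially as fair coin flips after marginalizing the unobserved noise, so a random-walk bound yields $|\Delta_{a,i} - \Delta'_{a,i}| = O(k^{1/4}\log^{O(1)} n)$ and hence $|s_{a,i} - s'_{a,i}| = O(\log^{1-\kappa\zeta/2+O(1)} n)$. Combining with the trivial bound $|s_{b,i}| = O(\log n)$ yields per-term product bias $O(\log^{2-\kappa\zeta/2+O(1)} n)$, and after division by $\ex[\Dt(a,b)]$ this matches the second exponent $-\kappa\zeta/2+\delta+3$.

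The hardest step will be rigorously building the proxy scheme so that $s'_{a,i}$ is a genuine measurable function of $W_i$ and the independent path noise; this requires carefully tracking which descendant bits originate from $W_i$ versus which are inserted en route (and padding the latter with fresh fair random bits) while preserving mutual independence across odd $i$. One also has to justify the random-walk style bound on $|s_{a,i} - s'_{a,i}|$ conditionally on $\cond_{reg}$ without reintroducing dependence between blocks. Once that scaffolding is in place, the Hoeffding step and the final logarithmic conversion are routine.
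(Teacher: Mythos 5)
Your proposal takes essentially the same route as the paper's proof. The paper introduces exactly the proxy signature scheme you sketch: it defines $s^*_{a,i}$ over positions $j_{a,i}$ to $j_{a,i+1}-1$ in $a$'s bitstring (the image of block $i$ of $a\wedge b$), so that $s^*_{a,i}s^*_{b,i}$ is a function of block $i$ of $a\wedge b$'s bitstring together with the per-bit mutation events on the two subpaths, yielding the blockwise independence you want; it bounds $|s_{a,i}-s^*_{a,i}| = O(k^{1/4}\log^2 n/\sqrt{l})$ from Lemmas~\ref{lemma:bitshifts} and~\ref{lemma:uniformblocks} (your ``random-walk bound,'' here obtained deterministically on $\cond_{reg}$), and then applies Azuma's inequality (equivalent to your Hoeffding step), followed by the same log-conversion. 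One small arithmetic caveat: you assert the concentration step ``produces the first exponent in the maximum'', but a careful Hoeffding/Azuma computation actually yields the requirement $\epsilon = \Omega(\log^{3+\delta-\kappa(1/2-\zeta)/2}n)$ rather than $\Omega(\log^{-\kappa(1/2-\zeta)+2\delta+6}n)$ (which is twice that exponent). This traces to the paper's own display writing $\epsilon$ where $\epsilon^2$ should appear, and is harmless since both exponents tend to $-\infty$ as $\kappa$ grows and every downstream use only needs $\epsilon = \Omega(1/\plog)$.
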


\subsection{Signature Reconstruction}

The lemmas proven so far show that the estimator $-\ln(4\Dt(a, b))$ suffices to reconstruct the tree up to height $\delta \log \log n$. To reconstruct past this height, we will come up with a recursive estimator of the signatures of internal nodes, and then using this estimator build a more robust distance estimator for nodes at height above $\delta \log \log n$.

Let $L_h$ be the nodes at height $h = 0,..., \log n $, where $L_0$ contains all the leaves.  For a node $a \in L_h$, let $A$ be the set of leaves that are descendants of $a$ and we expect $|A| = 2^{h}$. Define the following estimator of $s_{a,i}$

\[\widehat{s}_{a,i} = \frac{1}{|A|}\sum_{x\in A} e^{d(x,a)} s_{x,i} \]

In the next few lemmas, we demonstrate that this signature estimator exhibits 1) robustness to indel-induced noise in expectation and 2) low variance when $\lambda_{max}$ is below the Kesten-Stigum threshold. The ultimate purpose of signature reconstruction is to introduce a distance estimator that uses $O(\log(n))$ conditionally independent reconstructed signatures to derive a concentration result for the distance between any nodes $a, b$. The definition of $\widehat{s}_{a,i}$ comes from similar intuition to that of Lemma~\ref{lemma:unbiasedestimator}: we expect that each edge on the path from $a$ to some descendant $x$ adds multiplicative decay to the correlation between $a$'s and $x$'s bitstrings, so as the next lemma formalizes, it should be that $\ex[s_{x,i}] \approx s_{a,i}e^{-d(x,a)}$.

\begin{lemma}\label{lemma:recursiveestimator}
Let $\cond$ denote the bitstring at the node $a$ and $\prob(\cond_{reg} | \cond) > 1 - n^{-\Omega(\log n)}$. Then, for a leaf $x$ that is a descendant of $a$, $\ex[s_{x,i} | \mathcal{E}] = e^{-d(x,a)}(s_{a,i} + \nu_{a,i})$, where $|\nu_{a,i}| \leq 8 \log^2 n k^{1/4}/\sqrt{l} = O(\log^{-\kappa\zeta/2+2} n)$.
\end{lemma}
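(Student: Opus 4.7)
The plan is to compute $\ex[\Delta_{x,i}\mid\cond]$ directly by tracking the origin of each bit in block $i$ of $x$. Each bit $z_{x,j'}$ in block $i$ of $x$ is either inherited from a unique bit $z_{a,j(j')}$ of $a$ or is introduced by an insertion somewhere on the path $P(a,x)$. Inserted bits contribute $0$ in expectation because they are uniformly random and independent of $\cond$, while the independence of substitutions across edges gives $\ex[z_{x,j'}-\tfrac12 \mid \cond,\text{inheritance structure}] = (z_{a,j(j')}-\tfrac12)\prod_{e\in P(a,x)}(1-2\pS(e))$. Averaging over the random inheritance structure yields
\[
\ex[\Delta_{x,i}\mid\cond] = \prod_{e\in P(a,x)}(1-2\pS(e))\,\sum_{j} q_j(z_{a,j}-\tfrac12),
\]
where $q_j := \prob[\text{bit }j\text{ of }a\text{ is inherited and lands in block }i\text{ of }x\mid\cond]$. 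Because a bit's survival is independent of the shift of its position (which depends only on indels of bits before it), we may factor $q_j = p_s F_j$, with $p_s := \prod_{e}(1-\pD(e))$ and $F_j := \prob[j+\text{shift}(j)\in\text{block }i\text{ of }x]$.

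Next I would extract the leading term by subtracting off the ``ideal'' sum $\sum_j p_s\mathbf{1}[j\in\text{block }i](z_{a,j}-\tfrac12) = p_s\Delta_{a,i}$. In the symmetric case $\pI(e)=\pD(e)$, so $e^{-d(a,x)} = \prod_e(1-2\pS(e))(1-\pD(e))$, and the ideal contribution is exactly $e^{-d(a,x)}\Delta_{a,i}$. Rearranging using $s_{x,i} = \Delta_{x,i}/\sqrt l$ and $s_{a,i} = \Delta_{a,i}/\sqrt l$, a short calculation gives
\[
\nu_{a,i}\sqrt{l} = \sum_{j}\bigl(F_j - \mathbf{1}[j\in\text{block }i]\bigr)(z_{a,j}-\tfrac12),
\]
so it suffices to bound the right-hand sum by $O(k^{1/4}\log^2 n)$.

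I would bound this sum by combining the two regularity lemmas. First, by Lemma~\ref{lemma:bitshifts} together with $\prob[\cond_{reg}\mid\cond]\ge 1-n^{-\Omega(\log n)}$, $F_j$ is superpolynomially close to $\mathbf{1}[j\in\text{block }i]$ for every $j$ outside two boundary windows $I_L,I_R$ of width $O(M)$ around the endpoints of block $i$ of $a$, where $M := 4\log^2 n\sqrt k$; contributions from outside $I_L\cup I_R$ are thus $n^{-\Omega(\log n)}$. Inside each window, $g_j := F_j - \mathbf{1}[j\in\text{block }i]$ is a difference of two ``bump'' functions --- $F_j$ is a smoothed step (essentially the CDF of the shift at position $j$, restricted to the block) and $\mathbf{1}[\cdot]$ is a sharp indicator --- so $g_j$ is negligibly small at both endpoints of the window and has total variation $\mathrm{TV}_I(g) = O(1)$. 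Second, by Lemma~\ref{lemma:uniformblocks}, the partial sums $H_j := \sum_{k\le j}(z_{a,k}-\tfrac12)$ satisfy $|H_{j'}-H_{j''}|\le \sqrt{|j'-j''|}\log n$. Applying Abel summation on each window $I$ (the endpoint contributions vanish, and I may subtract a constant $H_{j_0}$ from each $H_j$ since $\sum_{j\in I}(g_{j+1}-g_j)=0$),
\[
\Bigl|\sum_{j\in I}g_j(z_{a,j}-\tfrac12)\Bigr| \;\le\; \mathrm{TV}_I(g)\cdot\max_{j\in I}|H_j - H_{j_0}| \;=\; O(\sqrt M\log n) \;=\; O(k^{1/4}\log^2 n),
\]
so $|\nu_{a,i}\sqrt l| = O(k^{1/4}\log^2 n)$, giving the claimed bound.

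The hard part will be this last step: a pointwise estimate $|g_j|\le 1$ and $|z_{a,j}-\tfrac12|\le\tfrac12$ over the $O(M)$-wide boundary window only yields $O(M) = O(\sqrt k\log^2 n)$, too weak by a factor of $k^{1/4}$. Obtaining the correct exponent requires simultaneously exploiting (i) the H\"older-type partial-sum cancellation in $z_{a,j}-\tfrac12$ guaranteed by Lemma~\ref{lemma:uniformblocks}, and (ii) the bounded total variation of the smoothed-step weight $g_j$; Abel summation is the natural tool that couples these two facts and converts a window of width $M$ into an error of order $\sqrt M$.
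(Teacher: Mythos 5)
Your proposal is correct and follows essentially the same route as the paper: you decompose the expectation into the ``block-$i$ of $a$'' contribution (giving $e^{-d(x,a)}s_{a,i}$) plus a boundary-window correction, and bound the correction by Abel summation (summation by parts), pairing the monotonicity/bounded total variation of the shift probability $F_j$ (the paper's $\rho_j$, with differences $d_j$) against the $O(\sqrt{m}\log n)$ partial-sum bound from Lemma~\ref{lemma:uniformblocks}. The paper's $S_1, S_2, S_3$ bookkeeping and your single weight $g_j = F_j - \mathbbm{1}[j\in\text{block }i]$ are equivalent decompositions, and you correctly identify that Abel summation is precisely what upgrades the naive $O(\sqrt{k}\log^2 n)$ window bound to the required $O(k^{1/4}\log^2 n)$.
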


We defer the proof to Section~\ref{sec:deferred}.

\begin{corollary}\label{cor:recursiveestimator}
Let $\mathcal{E}$ denote the bitstring at the node $a$ and $\prob(\cond_{reg} | \cond) > 1 - n^{-\Omega(\log n)}$. Then $\ex[\widehat{s}_{a,i} | \mathcal{E}] = s_{a,i} + \nu_{a,i}$, where $|\nu_{a,i}| \leq  O(\log^{-\kappa\zeta/2+2} n)$.
\end{corollary}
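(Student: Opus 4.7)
The plan is to derive this corollary directly from Lemma~\ref{lemma:recursiveestimator} by linearity of expectation, since the estimator $\widehat{s}_{a,i}$ is defined as an average over descendant leaves of quantities whose conditional expectation is precisely controlled by that lemma.

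Concretely, I would start by conditioning on $\cond$ (the bitstring at $a$) and writing
\[
\ex[\widehat{s}_{a,i} \mid \cond] \;=\; \frac{1}{|A|}\sum_{x \in A} e^{d(x,a)} \, \ex[s_{x,i} \mid \cond].
\]
Applying Lemma~\ref{lemma:recursiveestimator} to each leaf $x \in A$ gives $\ex[s_{x,i} \mid \cond] = e^{-d(x,a)}(s_{a,i} + \nu^{(x)}_{a,i})$ where each $|\nu^{(x)}_{a,i}| \leq O(\log^{-\kappa\zeta/2 + 2} n)$. The factors $e^{d(x,a)}$ and $e^{-d(x,a)}$ cancel, yielding
\[
\ex[\widehat{s}_{a,i} \mid \cond] \;=\; s_{a,i} \;+\; \frac{1}{|A|}\sum_{x \in A} \nu^{(x)}_{a,i}.
\]
Setting $\nu_{a,i} := \frac{1}{|A|}\sum_{x \in A} \nu^{(x)}_{a,i}$ and applying the triangle inequality immediately gives $|\nu_{a,i}| \leq O(\log^{-\kappa\zeta/2 + 2} n)$, which is the desired bound.

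The only subtle point is checking that Lemma~\ref{lemma:recursiveestimator} is actually applicable here, namely that its hypothesis $\prob(\cond_{reg} \mid \cond) > 1 - n^{-\Omega(\log n)}$ transfers to each descendant $x$; this is the same hypothesis assumed in the corollary's statement and is shared across all leaves in $A$, so no separate verification is needed. Since there is no cancellation being exploited among the $\nu^{(x)}_{a,i}$ terms, this proof is essentially mechanical — there is no genuine obstacle, and the main content of the recursive signature estimator lies entirely in Lemma~\ref{lemma:recursiveestimator}.
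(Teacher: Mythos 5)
Your proposal is correct and is exactly the argument the paper intends: the corollary follows from Lemma~\ref{lemma:recursiveestimator} by linearity of expectation and cancellation of the $e^{\pm d(x,a)}$ factors, with the triangle inequality bounding the averaged error terms. The paper leaves this as an unproven corollary precisely because the derivation is mechanical, as you observe.
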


Next, we show that the signature estimator $\widehat{s}_{a,i}$ has $O(\log^2 n)$ variance, which relies on the fact that the variance reduction due to averaging is greater than the variance increase due to mutation when the mutation rate is less than the Kesten-Stigum threshold. It might be surprising that as we move up the tree, the variance of the estimator stays unchanged. However, since the correlations between two nodes exponentially decays in the distance, each term in the signature estimator becomes more ``independent'', allowing for a tight variance bound.

\begin{lemma}\label{lemma:signaturevariance}
Let $\mathcal{E}$ denote the bitstring at the node $a$ and $\prob(\cond_{reg} | \cond) > 1 - n^{-\Omega(\log n)}$. Then, $\ex[\widehat{s}_{a,i}^2|\cond] = O(\log^2 n)$ as long as $\lambda_{max} < \ln\sqrt{2}$.
\end{lemma}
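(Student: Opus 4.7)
The plan is to expand the square, condition on bitstrings at internal ancestors via the Markov property, and sum by least common ancestor. Starting from
\[
\ex[\widehat{s}_{a,i}^2\mid\cond] \;=\; \frac{1}{|A|^2}\sum_{x,y\in A}e^{d(x,a)+d(y,a)}\,\ex[s_{x,i}s_{y,i}\mid\cond],
\]
I would group the pairs $(x,y)$ by $z = x\wedge y$, which is some descendant of $a$. Conditional on the bitstring at $z$, the two leaves $x$ and $y$ lie in disjoint subtrees rooted at $z$'s two children and hence are independent, and moreover the bitstring at $a$ no longer matters by the Markov property of the evolutionary process. Applying Lemma~\ref{lemma:recursiveestimator} with $z$ in place of $a$ to each factor then gives
\[
\ex[s_{x,i}s_{y,i}\mid\text{bitstring at }z] \;=\; e^{-d(x,z)-d(y,z)}(s_{z,i}+\nu_{x})(s_{z,i}+\nu_{y}),
\]
with $|\nu|=O(\log^{-\kappa\zeta/2+2}n)$. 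Averaging over the bitstring at $z$ given $\cond$ and invoking Lemma~\ref{lemma:uniformblocks} to bound $|s_{z,i}|\le O(\log n)$ under $\cond_{reg}$, the dominant contribution of each ordered pair is $e^{-d(x,z)-d(y,z)}\cdot O(\log^2 n)$.

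Since $d(x,a)+d(y,a)=2d(z,a)+d(x,z)+d(y,z)$, the exponential weight multiplying this contribution telescopes to $e^{2d(z,a)}$. I would then group pairs by the graph-depth $r$ of $z$ below $a$: in a balanced tree of height $h$ there are at most $O(2^{2h-r})$ ordered pairs (including the $2^h$ diagonal pairs $x=y$) whose LCA sits at graph-depth $r$, and $e^{2d(z,a)}\le e^{2r\lambda_{\max}}$ for each such $z$. Therefore
\[
\ex[\widehat{s}_{a,i}^2\mid\cond] \;\le\; \frac{O(\log^2 n)}{2^{2h}}\sum_{r=0}^{h}2^{2h-r}\,e^{2r\lambda_{\max}} \;=\; O(\log^2 n)\sum_{r=0}^{h}\paren{\frac{e^{2\lambda_{\max}}}{2}}^{r}.
\]
The geometric ratio $e^{2\lambda_{\max}}/2$ is strictly less than $1$ exactly when $\lambda_{\max}<\ln\sqrt{2}$, so the series sums to $O(1)$ and the claimed $O(\log^2 n)$ bound follows. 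This is the Kesten--Stigum balance: the factor-$2$ averaging gained at each tree level just beats the factor-$e^{2\lambda_{\max}}$ blowup in pairwise correlations.

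The main obstacle is careful bookkeeping. First, to apply Lemma~\ref{lemma:recursiveestimator} with $z$ in place of $a$ I need $\prob(\cond_{reg}\mid\text{bitstring at }z)$ to be high; I get this because the evolutionary process rooted at $z$ is a marginal of the process rooted at $a$, so conditional on a typical bitstring at $a$ the bitstring at $z$ is typical too, with the rare atypical exceptions absorbed by the paper's blanket conditional-expectation observation. Second, I must verify that the cross terms $s_{z,i}\nu$ and $\nu^2$, after re-weighting and summing over all $|A|^2$ pairs, contribute only an $o(\log^2 n)$ correction; this holds because $|\nu|=O(\log^{-\kappa\zeta/2+2}n)\ll|s_{z,i}|$ and those terms inherit the same geometric decay, so they are dominated by the leading $s_{z,i}^2$ contribution once the same convergent series is summed.
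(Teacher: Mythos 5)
Your proof is correct and takes essentially the same approach as the paper: expand the square, condition on the bitstring at the LCA $x\wedge y$ to factor the cross-moment via the Markov property and Lemma~\ref{lemma:recursiveestimator}, bound the conditional second moment by $O(\log^2 n)$ using Lemma~\ref{lemma:uniformblocks}, and then sum the resulting geometric series whose ratio $e^{2\lambda_{\max}}/2 < 1$ is exactly the Kesten--Stigum condition. The only cosmetic difference is that the paper sums by first fixing $x$ and then stratifying $y$ by the depth of $x\wedge y$, whereas you stratify ordered pairs directly by the graph-depth of the LCA; both yield the identical series $\sum_r (e^{2\lambda_{\max}}/2)^r$.
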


\begin{proof}

$$ \ex[\widehat{s}_{a,i}^2| \mathcal{E}] = \frac{1}{|A|^2} \sum_{x,y \in A} e^{d(x,a) + d(y,a)} \ex[s_{x,i}s_{y,i}| \mathcal{E}] $$

To analyze $\ex[s_{x,i}s_{y,i}| \mathcal{E}]$, let $x \wedge y$ be the least common ancestor of $x, y$ and let $\mathcal{E'}$ denote the bitstring of $x \wedge y$. Note that conditioned on $\mathcal{E'}$, $s_{x,i}, s_{y,i}$ are independent and by Lemma~\ref{lemma:recursiveestimator}:

\begin{align*}
\ex[s_{x,i}s_{y,i} |\cond] &= \ex[\ex[s_{x,i}s_{y,i}| \mathcal{E}, \mathcal{E'}]] \\
&=\ex[\ex[s_{x,i}| \mathcal{E}']\ex[s_{y,i}| \mathcal{E}']] \\
&=e^{-d(x,y) }\ex[(s_{x\wedge y,i} + \delta_{x\wedge y,i})^2|\cond] \\
\end{align*}

Then, since $\cond_{reg}|\cond$ is a high probability event and noting that the quantity $(s_{x\wedge y,i} + \delta_{x\wedge y,i})^2$ is at most $\log^{2} n$ and thus conditioning on an event that happens with probability $1 - n^{-\Omega(\log n)}$ does not change its expectation by more than $n^{-\Omega(\log n)}$, we get $\ex[(s_{x\wedge y,i} + \delta_{x\wedge y,i})^2] \leq O(\log^2 n)$ and thus $\ex[s_{x,i}s_{y,i}| \mathcal{E}] \leq e^{-d(x,y)} \cdot O(\log^2 n)$.

This gives:
\begin{equation}
\begin{aligned}
\ex[\widehat{s}_{a,i}^2| \mathcal{E}] &\leq
\frac{1}{|A|^2} \sum_{x,y \in A} e^{d(x,a)+d(y,a)} \ex[s_{x,i}s_{y,i}| \mathcal{E}] \\&\leq
O(\log^2 n)\frac{1}{|A|^2} \sum_{x,y \in A} e^{2 d(a, x\wedge y)} \\
\end{aligned}
\end{equation}

Now, for a fixed $x$, note that $1/2$ of $y \in A$ satisfies $ d(a,x \wedge y) \leq \lambda_{max}$ and and $1/4$ of them satisfies $d(a,x\wedge y)\leq 2\lambda_{max}$ and so on. Since $e^{2\lambda_{max}} < 2$, 

\[\frac{1}{|A|} \sum_{y \in A} e^{2 d(a,x \wedge y)} \leq \left[(1/2)e^{2\lambda_{max}} + (1/4)e^{4\lambda_{max}} + ...\right] = O(1)\]

Finally, by symmetry, 

\[\ex[\widehat{s}_{a,i}^2| \mathcal{E}] \leq O(\log^2 n) \frac{1}{|A|} \sum_{x \in A} \frac{1}{|A|}\sum_{y \in A} e^{2d(a,x\wedge y)} = O(\log^2 n) \]
\end{proof}

% \begin{lemma}\label{lemma:signaturevariance}
% Variance of $\widehat{s}_{a,i}$ is $O(1)$ as long as $e^{2\lambda} < 2$.
% \end{lemma}

% \begin{proof}
% \begin{equation}
% \begin{aligned}
% \Var(\widehat{s}_{a,i}) &\leq E[\widehat{s}_{a,i}^2] \\
% &= \frac{1}{|A|^2} \sum_{x,y \in A} e^{2\lambda h} E[s_{x,i}s_{y,i}] \\
% &\leq \frac{1}{|A|^2} \sum_{x,y \in A}  (1+o(1)) e^{2 d(a,x\wedge y)} \\
% \end{aligned}
% \end{equation}

% Note that the last step, we use the previous lemma~\ref{lemma:unbiasedestimator} and $x \wedge y$ is defined to be the least common ancestor of $x, y$. Now, for a fixed $x$, note that half of $y \in A$ satisfies $ d(a,x \wedge y) = 1$ and halves repeatedly as $ d$ increases. Therefore,

% \[\frac{1}{|A|} \sum_{y \in A} (1+o(1)) e^{2 d(a,x \wedge y)} = (1+o(1))\left[(1/2)e^{2\lambda} + (1/4)e^{4\lambda} + ...\right] = O(1)\]

% Finally, by symmetry, 

% \[\Var(\widehat{s}_{a,i}) \leq \frac{1}{|A|} \sum_{x \in A} \frac{1}{|A|}\sum_{y \in A}  (1+o(1)) e^{2D(a,x\wedge y)} = O(1) \]
% \end{proof}
\subsection{Distance Estimators}

Distance computations can be done with reconstructed signatures by analogously defining

\[\widehat{C}(a,b) = \frac{2}{L} \sum_{i=1}^{L/2} \widehat{s}_{a,2i+1}\widehat{s}_{b,2i+1}\]

Although we may use $\widehat{C}(a,b)$ directly as an estimator for the distance between $a, b$, the variance in the reconstructed signature is still too high for the necessary concentration. To provide the concentration, we use many conditionally independently distance estimators.

\begin{figure}
\begin{center}
\input{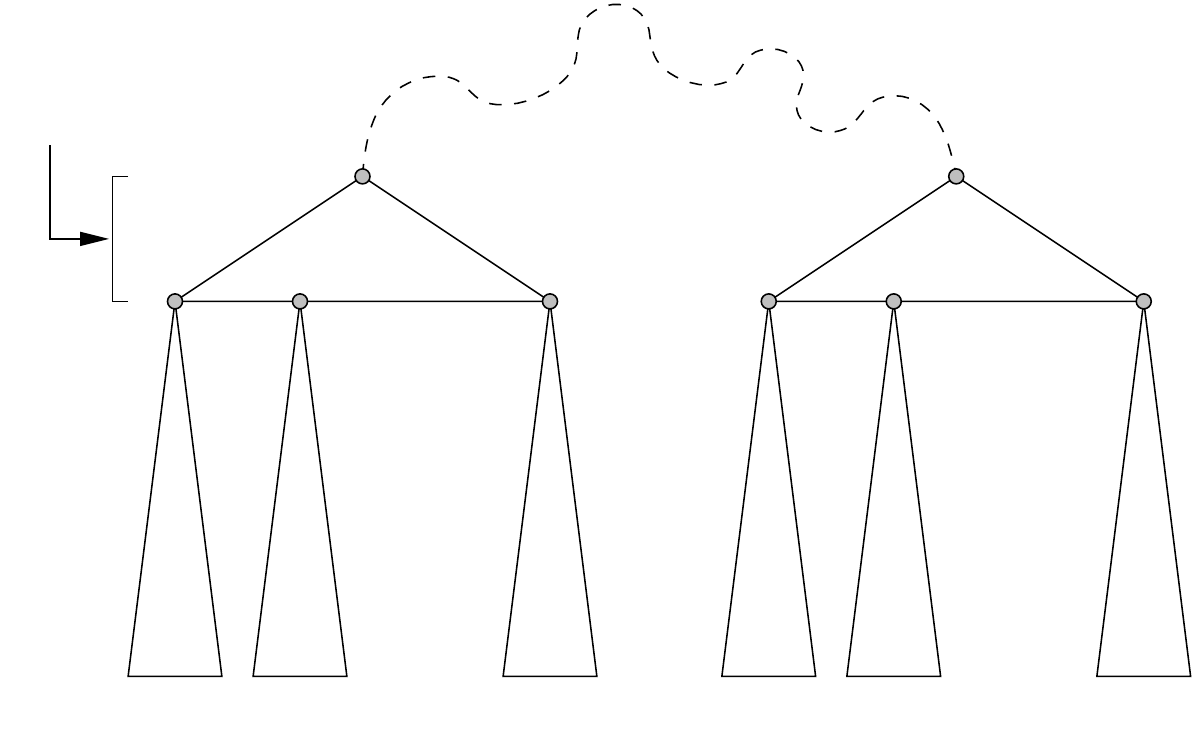_t}\caption{Concentration via conditionally independent estimators. Slightly modified from a figure in \cite{roch2008sequence}}\label{fig:amplification}
\end{center}
\end{figure}

For two nodes $a, b$, consider creating the distance estimator $\widehat{ d}(a,b)$ as follows. For some height $\Delta h = \delta\log \log n$, consider the nodes that are descendants of $a, b$ exactly $\Delta h$ below $a, b$ respectively; order them arbitrarily as $a_1,...,a_{2^{\Delta h}}$ and $b_1,..., b_{2^{\Delta h}}$, as in Figure~\ref{fig:amplification}. Next, compute $\widetilde{d}(a_j,b_j) = -\ln(e^{d(a_j,a)+d(b_j,b)}4\widehat{C}(a_j, b_j))$, which is an estimator for $d(a,b)$. Note that we have $2^{\Delta h} = \Omega(\log n)$ of these estimators. We will aggregate these estimators in order to derive high probability concentration of the aggregate around $-\ln (4 \Dt(a,b))$, allowing us to use Lemma~\ref{lemma:estimatorconcentration} to show concentration of the aggregate around the true distance.

So, we proceed with the analogous construction that Roch presents for accuracy amplification for the substitution-only model \cite{roch2008sequence}. Consider the set of estimates $S_h(a,b) = \set{\widetilde{d}(a_j,b_j)}_{j=1}^{2^h}$. We will use a median-like measure to aggregate these estimates. For each $j$, we let $r_j$ be the minimum radius of an interval centered on  $\widetilde{d}(a_j,b_j)$ that captures at least $2/3$ of the other points in $S_h(a,b)$.

\[r_j = \inf\set{r > 0 : \left|\{j' \neq j: |\widetilde{d}(a_j, b_j) - \widetilde{d}(a_{j'}, b_{j'}) | \leq r  \}\right| \geq \frac{2}{3}2^{h}}\]

Then, if $j^* = \arg\min_j r_j$, then our distance estimator is $\widehat{ d}(a,b) = \widetilde{d}(a_{j^*},b_{j^*})$.

\begin{lemma}[Deep Distance Computation: Small Diameter]
\label{lemma:smalldiameter}
For any constant $\delta > 0$, let $a, b$ be nodes at height at least $\delta \log \log n$ such that $ d(a, b) \leq \delta \log \log n$. If $\lambda_{max} < \ln(\sqrt{2})$, $|\widehat{d}(a,b) - d(a,b)| < \epsilon$ with high probability.
\end{lemma}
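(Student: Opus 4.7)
The plan is to reduce to a per-estimator analysis, amplify the per-$j$ guarantee via conditional independence across $j$, and conclude with the minimum-radius selection rule.

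Since $a_j$ and $b_j$ lie in disjoint subtrees below $a$ and $b$, we have $a_j \wedge b_j = a \wedge b$ and hence $d(a_j, b_j) = d(a_j, a) + d(a, b) + d(b_j, b)$, so it suffices to show $|{-\ln(4 \widehat{C}(a_j, b_j))} - d(a_j, b_j)| \leq \epsilon/2$ with constant probability for each $j$. For a single $j$, I would condition on the bitstrings at $a_j$ and $b_j$. Corollary~\ref{cor:recursiveestimator} gives $\ex[\widehat{s}_{a_j,i}] = s_{a_j,i} + \nu$ with $|\nu| = O(\log^{-\kappa\zeta/2+2} n)$, and since descendant leaves under $a_j$ and $b_j$ are disjoint the two reconstructed signatures are conditionally independent, so $\ex[\widehat{C}(a_j, b_j) \mid \text{bitstrings}] = \Dt(a_j, b_j) + O(\log^{-\kappa\zeta/2+3} n)$. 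Lemma~\ref{lemma:signaturevariance} bounds $\ex[\widehat{s}_{a_j,i}^2]$ by $O(\log^2 n)$, so each product has second moment $O(\log^4 n)$, and by the near-independence of odd-indexed blocks that underlies Lemma~\ref{lemma:estimatorconcentration}, one obtains $\Var(\widehat{C}(a_j, b_j) \mid \text{bitstrings}) = O(\log^4 n / L) = o(\epsilon^2)$ for sufficiently large $\kappa$. Chebyshev then gives $|\widehat{C}(a_j, b_j) - \Dt(a_j, b_j)| < \epsilon''$ with probability at least $5/6$; combined with Lemma~\ref{lemma:estimatorconcentration} applied to $\Dt(a_j, b_j)$ (valid since $d(a_j, b_j) \leq (1 + 2\lambda_{max})\delta \log \log n$), this yields $|\widetilde{d}(a_j, b_j) - d(a, b)| \leq \epsilon/2$ with probability at least $5/6$.

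I would then exploit conditional independence across $j$. Given the bitstrings at $a$ and $b$ and the regularity event $\cond_{reg}$, the random evolutions under different $a_j$'s and $b_j$'s live in disjoint subtrees, so the ``good'' indicator variables $\{|\widetilde{d}(a_j, b_j) - d(a,b)| \leq \epsilon/2\}$ are mutually independent Bernoullis with success probability at least $5/6$. Since $2^{\Delta h} = \Omega(\log n)$, a Chernoff bound gives that with high probability at least $\tfrac{5}{6} \cdot 2^{\Delta h}$ of the $j$ are good.

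I would finish via the minimum-radius argument of \cite{roch2008sequence}. For any good $j$, the triangle inequality places every other good $j'$ within distance $\epsilon$ of $\widetilde{d}(a_j, b_j)$, so the interval of radius $\epsilon$ around $\widetilde{d}(a_j, b_j)$ contains at least $\tfrac{5}{6} 2^{\Delta h} - 1 \geq \tfrac{2}{3} 2^{\Delta h}$ other estimates, giving $r_j \leq \epsilon$ and therefore $r_{j^*} \leq \epsilon$. If $|\widetilde{d}(a_{j^*}, b_{j^*}) - d(a, b)| > 3\epsilon/2$, then every $j'$ with $|\widetilde{d}(a_{j^*}, b_{j^*}) - \widetilde{d}(a_{j'}, b_{j'})| \leq \epsilon$ would fail to be good, forcing at least $\tfrac{2}{3}$ of the $j'$ to be bad while at most $\tfrac{1}{6}$ are bad, a contradiction. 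Rescaling $\epsilon$ by a constant factor finishes the argument. The main obstacle I anticipate is the variance bound on $\widehat{C}(a_j, b_j)$: controlling the covariances between odd-indexed block products under indel-induced shifts requires exactly the shift-invariant blockwise-independent signature surrogate used to prove Lemma~\ref{lemma:estimatorconcentration}, so I would import that machinery rather than re-derive it.
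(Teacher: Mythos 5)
Your overall structure matches the paper's: condition on the bitstrings at the $a_j, b_j$, use Corollary~\ref{cor:recursiveestimator} and Lemma~\ref{lemma:signaturevariance} to analyze $\widehat{C}(a_j,b_j)$, apply Chebyshev per $j$, amplify over the $\Omega(\log n)$ conditionally independent pairs, and finish with the minimum-radius selection rule. That skeleton is correct.

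The genuine gap is in your variance bound. You assert that
$\Var\bigl(\widehat{C}(a_j,b_j)\mid \text{bitstrings}\bigr) = O(\log^4 n / L)$ by ``near-independence of odd-indexed blocks,'' which would be the bound you'd get if $\Var(\widehat{C}) \lesssim L^{-2}\sum_i \Var(\widehat{s}_{a_j,2i+1}\widehat{s}_{b_j,2i+1})$, i.e.\ if the cross-block covariances vanished. But the odd-block \emph{reconstructed} signatures $\widehat{s}_{a_j,2i+1}$ and $\widehat{s}_{a_j,2i'+1}$ are \emph{not} conditionally independent given the bitstring at $a_j$: each is an average of leaf signatures $s_{x,\cdot}$ over descendants $x$ of $a_j$, and for a single leaf $x$ the contents of blocks $2i+1$ and $2i'+1$ are coupled through the common realization of the indel process on the $a_j$-to-$x$ path (shifts near block $2i+1$ and near block $2i'+1$ are not independent because they accumulate from the same chain of edges). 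This is precisely the issue the paper addresses with Lemma~\ref{lemma:covariance}, which shows each pairwise covariance is $O(\log^{-\kappa\zeta/2+5} n)$, yielding the (weaker but sufficient) bound $\Var(\widehat{C}(a_j,b_j)) = O(\log^{-\kappa\zeta/2+5} n)$ after summing $L^2$ such covariances. Your ``anticipated obstacle'' paragraph partially flags this, but the remedy you propose --- importing the shift-invariant surrogate $s^*_{a,i}$ from the proof of Lemma~\ref{lemma:estimatorconcentration} --- is the wrong tool: that surrogate decouples odd blocks of the \emph{leaf} signatures relative to a common ancestor's block boundaries, and establishes blockwise independence of $\{s^*_{a,2i+1}s^*_{b,2i+1}\}_i$, but it does not give independence (or any direct covariance control) for the \emph{recursive} estimators $\widehat{s}_{a_j,\cdot}$, which aggregate over an entire subtree. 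You need the argument of Lemma~\ref{lemma:covariance}, which directly expands $\ex[\widehat{s}_{a_j,2i+1}\widehat{s}_{a_j,2i'+1}]$ over pairs of leaves and pairs of bit positions, classifies bits as good/okay/bad relative to both blocks, and shows the product factors up to $O(\log^{-\kappa\zeta/2+3} n)$ error. (This is exactly the error in an earlier version of the paper that an anonymous reviewer caught; the acknowledgements allude to it.) With Lemma~\ref{lemma:covariance} in hand the rest of your argument goes through.
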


\begin{proof}
Let $\xi$ denote the set of variables that are the underlying true bitstrings at nodes in $\{a_j\}_{j=1}^{2^{\Delta h}}$ and $\{b_j\}_{j=1}^{2^{\Delta h}}$. Throughout this proof, we will condition $\xi$ unless otherwise stated. Notice that we can translate all high probability results even upon conditioning. Note if the unconditioned probability $\prob(\cond_{reg}) > 1-n^{-\Omega(\log n)}$, the law of total expectation and a simple Markov bound shows us that $\prob(\prob(\cond_{reg} | \xi) > 1- n^{-\Omega(\log n)}) > 1- n^{-\Omega(\log n)}$, where the outer probability is taken over instantiation of $\xi$. This allows us to condition and establish independence of $\widehat{s}_{a_j,i}$ and $\widehat{s}_{b_j,i}$, while preserving high probability results.

By Lemma~\ref{lemma:recursiveestimator}, with high probability, $\ex[\widehat{s}_{a_j,i}] = s_{a_j,i} + \delta_{a_j,i}$ where  $|\delta_{a_j,i}| \leq O(\log^{-\kappa\zeta/2+2} n)$. Furthermore, by Lemma~\ref{lemma:uniformblocks}, $|s_{a_j,i}| \leq O(\log n)$ with high probability. Symmetrically, these bounds hold for $b_j$. Therefore, we see that 

\begin{equation}
\begin{aligned}
\ex[\widehat{C}(a_j,b_j)] &= \frac{2}{L} \sum_{i=1}^{L/2} \ex[\widehat{s}_{a_j,2i+1}\widehat{s}_{b_j,2i+1}] \\
&=  \frac{2}{L} \sum_{i}  s_{a_j,2i+1}s_{b_j,2i+1} + O(\log^{-\kappa\zeta/2+3} n) \\
&= \Dt(a_j,b_j) + O(\log^{-\kappa\zeta/2+3} n)
\end{aligned}
\end{equation}

\replace{

\begin{equation}
\begin{aligned}
\Var(\widehat{C}(a_j,b_j)) &\leq O(1/L^2) \sum_i  \Var(\widehat{s}_{a_j,2i+1}\widehat{s}_{b_j,2i+1}) \\
&\leq O(1/L^2) \sum_i  \ex[\widehat{s}_{a_j,2i+1}^2\widehat{s}_{b_j,2i+1}^2] \\
&\leq O(1/L^2) \sum_i  \ex[\widehat{s}_{a_j,2i+1}^2]\ex[\widehat{s}_{b_j,2i+1}^2] \\
&= O(\log^2 n/L) = O(\log^{-\kappa/2+\kappa\zeta+2}n)\\
\end{aligned}
\end{equation}

}{

Furthermore, we can bound the variance by using  Lemma~\ref{lemma:signaturevariance}. We first bound the covariance of two of the terms in the quantity $\widehat{C}(a_j, b_j)$:

\begin{lemma}\label{lemma:covariance}
Conditioned on $\xi$, for $i \neq i'$ $\Cov(\widehat{s}_{a_j,2i+1}\widehat{s}_{b_j,2i+1}, \widehat{s}_{a_j,2i'+1}\widehat{s}_{b_j,2i'+1}) = O(\log^{-\kappa\zeta/2+5} n)$
\end{lemma}

The proof is deferred to Section~\ref{sec:deferred}. Then the variance is bounded as follows:

\begin{equation}
\begin{aligned}
\Var(\widehat{C}(a_j,b_j)) &= O(1/L^2)\left[ \sum_i  \Var(\widehat{s}_{a_j,2i+1}\widehat{s}_{b_j,2i+1}) + \sum_{i \neq i'}\Cov(\widehat{s}_{a_j,2i+1}\widehat{s}_{b_j,2i+1}, \widehat{s}_{a_j,2i'+1}\widehat{s}_{b_j,2i'+1})\right]\\
&\leq O(1/L^2) \sum_i  \ex[\widehat{s}_{a_j,2i+1}^2\widehat{s}_{b_j,2i+1}^2]+O(\log^{-\kappa\zeta/2+5} n) \\
&\leq O(1/L^2) \sum_i  \ex[\widehat{s}_{a_j,2i+1}^2]\ex[\widehat{s}_{b_j,2i+1}^2] +O(\log^{-\kappa\zeta/2+5} n)\\
&= O(\log^2 n/L)+O(\log^{-\kappa\zeta/2+5} n) = O(\log^{-\kappa\zeta/2+5} n)\\
\end{aligned}
\end{equation}
}

Therefore, we can make the estimator variance $1/\plog$. Since $ d(a_j,b_j) \leq 2 \delta \log \log n + d(a,b) \leq  3\delta \log \log n$, by Lemma~\ref{lemma:estimatorconcentration}, we can guarantee w.h.p. that $(1+\epsilon)e^{- d(a_j,b_j)} \geq 4\Dt(a_j,b_j) \geq (1-\epsilon)e^{- d(a_j,b_j)}$ when $k$ is chosen with a large enough $\kappa$. Since $\epsilon$ is $\Omega(1/\plog)$, we see that $\ex[4\widehat{C}(a_j,b_j)] \in (1-2\epsilon,1+2 \epsilon) e^{-  d(a_j,b_j)}$ with constant probability by a Chebyshev bound for a fixed $j$. Therefore, we conclude that $|-\ln(4\widehat{C}(a_j,b_j)) - d(a_j,b_j)| < 2\epsilon$ with probability at least $5/6$. Since $d(a_j,b_j) = d(a,b) + d(a,a_j) + d(b,b_j)$, we have $|\widetilde{d}(a_j,b_j) - d(a,b)| < 2\epsilon$ with probability at least $5/6$.

Finally, since $a_j, b_j$ provide $\Omega(\log n)$ independent estimators of $d(a,b)$, by Azuma's inequality, we can show that at least $2/3$ of all $a_j, b_j$ satisfies $|\widetilde{d}(a_j,b_j) - d(a,b)| < 2\epsilon$ with probability at least $1-2^{-\Omega(\log n)} = 1- n^{-\Omega(1)}$. In particular, this means that there exists $j$ such that $r_j < 4 \epsilon$ and for all $j$ such that $|\widetilde{d}(a_j,b_j) -   d(a,b)| > 6  \epsilon$, we must have $r_j \geq 4  \epsilon$. Therefore, we conclude that $|\widetilde{d}(a_{j^*},b_{j^*}) - d(a,b)| = |\widehat{d}(a,b) - d(a,b)| < 6 \epsilon$.
\end{proof}

\subsection{Reconstruction Algorithm}

The algorithm for reconstruction is ultimately based from our ability to apply signature reconstruction and derive well-concentrated distance estimators in an inductive process. The base case would be to simply use the sequences at the leaves and the basic distance function to reconstruct the tree up to $O(\log \log n)$ height, after which we use our signature reconstruction algorithm to produce a reconstructed distance function that provides high accuracy throughout the entire process. 

In the previous section, we showed that if two nodes are $O(\log \log n)$ distance apart, then distance estimators will concentrate to the mean with $poly \log n$ sequence length. The recursive argument depends crucially that we can detect closeby nodes so that we only use statistically accurate distance estimators. Fortunately, testing for the size of the diameter of two nodes, on whether it is larger than or less than $O(\log \log n)$, is viable by the same concentration properties of our various distance estimators.

\begin{lemma}\label{lemma:testconcentration}
Let $\delta > 0$ be any constant and $\epsilon = \Omega(1/\plog)$. Then, for nodes $a, b$, if $ d(a, b) > r+ \epsilon$ with $r = \delta \log \log n$ and $k > \log^\kappa(n)$, then $-\ln(4\Dt(a, b)) > r$ with probability at least $1 - n^{-\Omega(\log n)}$.
\end{lemma}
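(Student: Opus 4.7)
The plan is to combine the mean calculation from Corollary~\ref{lemma:unbiasedestimator2} with a one-sided concentration bound for $\Dt(a,b)$ that reuses the machinery behind Lemma~\ref{lemma:estimatorconcentration}. Since $-\ln(4\Dt(a,b)) > r$ is equivalent to $\Dt(a,b) < \tfrac{1}{4}e^{-r}$ (with the convention that the inequality holds trivially if $\Dt(a,b) \le 0$), I would first show that the mean is safely below the target threshold. By Corollary~\ref{lemma:unbiasedestimator2} together with the hypothesis $d(a,b) > r+\epsilon$, we get $\ex[\Dt(a,b)] \le \tfrac{1}{4}(1+O(\log^{-\kappa\zeta+2} n))e^{-(r+\epsilon)}$, so the gap between the threshold and the mean satisfies $\tfrac{1}{4}e^{-r} - \ex[\Dt(a,b)] \ge \tfrac{e^{-r}}{4}\bigl(\epsilon - O(\log^{-\kappa\zeta+2} n)\bigr) = \Omega(e^{-r}\epsilon) = \Omega((\log n)^{-\delta}/\plog)$, provided $\kappa$ is large enough (depending on $\delta$ and $\zeta$) to absorb the $O(\log^{-\kappa\zeta+2} n)$ slack into a constant fraction of $\epsilon$.

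Next, I would establish that $\Dt(a,b)$ lies within this gap of its mean with probability $1 - n^{-\Omega(\log n)}$. Conditioning on $\cond_{reg}$, each of the $L/2$ summands $s_{a,2i+1}s_{b,2i+1}$ is bounded in magnitude by $O(\log^2 n)$ via Lemma~\ref{lemma:uniformblocks}, and the odd-indexed blocks are nearly independent through the shift-invariant blockwise-independent signature coupling invoked in the proof of Lemma~\ref{lemma:estimatorconcentration} (the small discrepancies being of order $n^{-\Omega(\log n)}$, which are harmless by the observation at the end of Section~\ref{sec:prelim}). An Azuma-Hoeffding bound on these almost-independent bounded summands then yields a deviation of $O(\log^{O(1)} n/\sqrt{L}) = O(\log^{O(1)-\kappa(1/2-\zeta)/2} n)$ with the desired $1 - n^{-\Omega(\log n)}$ tail. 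For $\kappa$ sufficiently large in terms of $\delta$ and $\zeta$, this deviation is $o((\log n)^{-\delta}/\plog)$, strictly smaller than the gap computed in the first step.

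Combining the two pieces gives $\Dt(a,b) < \tfrac{1}{4}e^{-r}$ with probability $1 - n^{-\Omega(\log n)}$, which is the desired conclusion. The main obstacle is calibrating $\kappa$ so that the additive concentration error, the multiplicative bias $O(\log^{-\kappa\zeta+2} n)$ in $\ex[\Dt(a,b)]$, and the target gap $\Omega(e^{-r}\epsilon)$ all line up simultaneously: since $e^{-r} = (\log n)^{-\delta}$ shrinks with $\delta$, a larger $\delta$ forces a larger $\kappa$, mirroring the parameter dependence already present in Lemmas~\ref{lemma:unbiasedestimator} and~\ref{lemma:estimatorconcentration}. Once $\kappa$ is fixed appropriately, no new technical ingredient is required beyond the concentration tools already developed.
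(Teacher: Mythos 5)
Your proposal matches the paper's approach: the paper simply states that the proof ``follows analogously to Lemma~\ref{lemma:estimatorconcentration},'' and you have correctly identified the analogy — combine the mean bound from Corollary~\ref{lemma:unbiasedestimator2} with the $s^*$ shift-invariant coupling and Azuma-Hoeffding concentration from the proof of Lemma~\ref{lemma:estimatorconcentration}, this time as a one-sided bound. The only change from the original argument is that the gap is $\Omega(e^{-r}\epsilon)$ rather than symmetric around the mean, which you handle correctly, so the proposal is sound and takes essentially the same route as the paper.
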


\begin{proof}
Follows analogously to Lemma~\ref{lemma:estimatorconcentration}.
\end{proof}

\begin{definition} For two nodes $a, b$ that are more than $h  = \delta \log \log n$ up the tree, we define $T(S_h(a,b),r) = 1$ if at least half of $S_h(a,b)$ is bounded by $r$ and $0$ otherwise. 

$$T(S_h(a,b),r) = \mathbbm{1}\set{  |[-r, r] \cap S_h(a,b)| \geq \frac{1}{2}2^h} $$
\end{definition}

\begin{lemma}\label{lemma:deeptest}[Deep Computation: Diameter Test]
Let $a, b$ be nodes and we choose $r = O(\log \log n)$. If $k > \plog $, then with high probability, $T(S_h(a,b), r) = 1$ when $d(a,b) < r -\epsilon$ and $T(S_h(a,b), r) = 0$ when $d(a,b) > r+\epsilon$.
\end{lemma}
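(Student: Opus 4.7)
My plan is to mirror the two-step structure of Lemma~\ref{lemma:smalldiameter}: first establish a constant-probability, one-sided concentration bound for each of the $2^h = \Omega(\log n)$ individual estimators $\widetilde{d}(a_j, b_j)$ landing on the correct side of $r$, then boost to a high-probability majority statement via the same Azuma-style argument, exploiting the conditional independence of the $2^h$ subtrees rooted at $a_j, b_j$. As in that proof, I would condition on the bitstrings $\xi$ at $\{a_j\}$ and $\{b_j\}$, which makes the $2^h$ estimators independent while, by the same total-expectation/Markov argument, keeping $\cond_{reg}$ at conditional probability $1 - n^{-\Omega(\log n)}$.

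The completeness direction $d(a,b) < r - \epsilon$ is almost free from Lemma~\ref{lemma:smalldiameter}: its internal per-estimator analysis already gives, for $\kappa$ sufficiently large, that each $\widetilde{d}(a_j, b_j)$ lies within $\epsilon/3$ of $d(a,b)$ with probability at least $5/6$, and in particular in $[-r, r]$. Azuma then yields at least $2/3$ of them in $[-r, r]$ with probability $1 - n^{-\Omega(1)}$, so $T(S_h(a,b), r) = 1$. The soundness direction $d(a,b) > r + \epsilon$ is the direct analog of Lemma~\ref{lemma:testconcentration} but applied to reconstructed signatures. Reusing the expectation and variance computations from Lemma~\ref{lemma:smalldiameter}, namely $\ex[4\widehat{C}(a_j, b_j)] = (1 \pm O(\log^{-\Omega(1)} n)) e^{-d(a_j, b_j)} + O(\log^{-\Omega(1)} n)$ and $\Var(4\widehat{C}(a_j, b_j)) = O(\log^{-\kappa\zeta/2 + 5} n)$, and using that $d(a_j, b_j) = d(a,b) + d(a, a_j) + d(b, b_j)$ so that in the borderline regime $e^{-d(a_j, b_j)} = \Omega(1/\plog)$, a Chebyshev bound gives $4\widehat{C}(a_j, b_j) < e^{-r - d(a_j, a) - d(b_j, b)}$ with probability at least $5/6$. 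Equivalently, $\widetilde{d}(a_j, b_j) > r$ for that fraction of the estimators, and Azuma again yields $T(S_h(a,b), r) = 0$ with probability $1 - n^{-\Omega(1)}$.

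The main obstacle will be the calibration in the soundness direction: I must ensure the standard deviation of $4\widehat{C}(a_j, b_j)$ is smaller by a definite constant factor than the gap $\Omega(\epsilon) \cdot e^{-d(a_j, b_j)} = \Omega(1/\plog)$ between its expected value and the threshold $e^{-r - d(a_j, a) - d(b_j, b)}$, which forces $\kappa$ to be chosen large enough relative to $\delta$, $\zeta$, and the variance exponent $\kappa\zeta/2 - 5$. Once that calibration is in place, the extreme regime $d(a,b) \gg r$ is strictly easier: the expected correlation $e^{-d(a_j, b_j)}$ shrinks exponentially while the threshold stays fixed, so no separate far-apart analysis is needed and a single Chebyshev-plus-Azuma argument covers both the borderline and the far regimes.
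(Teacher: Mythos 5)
Your proposal follows the paper's proof essentially verbatim: condition on the bitstrings $\xi$ at the nodes $a_j, b_j$ to obtain conditional independence, dispatch the completeness direction by reusing Lemma~\ref{lemma:smalldiameter}, dispatch the soundness direction by replacing the two-sided bound from Lemma~\ref{lemma:estimatorconcentration} with the one-sided bound from Lemma~\ref{lemma:testconcentration} and applying Chebyshev to $\widehat{C}(a_j,b_j)$, then amplify both directions via Azuma over the $2^h = \Omega(\log n)$ conditionally independent estimators. The calibration concern you flag (the standard deviation $O(\log^{-\kappa\zeta/4+5/2} n)$ must be dominated by the gap $\Omega(e^{-r}\epsilon) = \Omega(1/\plog)$, achieved by taking $\kappa$ large enough) is exactly the implicit calibration in the paper's proof, and your observation that the far regime $d(a,b) \gg r$ only makes the one-sided bound easier is correct and matches the paper's (terse) treatment.
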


\begin{proof}
The case when $d(a,b) < r - \epsilon$ follows directly from the proof of Lemma~\ref{lemma:smalldiameter}. When $d(a,b) > r + \epsilon$, note that the only change to the proof of Lemma~\ref{lemma:smalldiameter} is that instead of calling Lemma~\ref{lemma:estimatorconcentration} to upper and lower bound $\Dt(a_j,b_j)$, we use Lemma~\ref{lemma:testconcentration} to deduce that $\widetilde{d}(a_j,b_j) \geq r+\epsilon/2$ still holds with constant probability by Chebyshev and $T(S_h(a,b),r) = 0$ occurs with high probability using Azuma's inequality bound over all pairs $(a_j, b_j)$.
\end{proof}

With the diameter test, we can ensure that all distance computations are accurate with an additive error of $\epsilon$ by using a diameter condition and also guarantee that all close enough nodes have distances computed. Therefore, we can compute an accurate localized distance matrix. With that, we use the traditional Four Point Method to determine quartets. The standard technique, called the Four Point Method, to compute quartet trees (i.e., unrooted binary trees on four leaves) is based on the Four Point Condition \cite{buneman-4pt}. The underlying combinatorial algorithm we use here is essentially identical to the one used by Roch in \cite{roch2008sequence}.

 \begin{definition} (From \cite{essw1})
Given a four-taxon set $\{a, b, c, d\}$ and a dissimilarity matrix $D$, the Four Point Method ($\code{FPM}$) infers tree  $ab|cd$ (meaning the quartet tree with an edge separating $a, b$ from $c, d$) if $D(a, b) + D(c, d) \leq \min \{ D(a,c) + D(b,d), D(a,d) + D(b,c)\}$. If equality holds, then the $\code{FPM}$ infers an arbitrary topology. 
\end{definition}

If $D(a,b)$ is a dissimilarity matrix that has maximum deviation from $d(a,b)$ by an additive error of $\epsilon < f/2$, where $f$ is the minimum non-zero entry in $d$, then $\code{FPM}$ will always infer the true quartet, as in Figure~\ref{fig:induced}. In this case, setting $\epsilon < \lambda_{min}/2$ will allow for correct short quartet inference and therefore this implies that $\lambda_{min} = \Omega(1/\plog)$ in order for the sequence length requirement to still be polylogarithmic (in general it will depend inverse polynomially on $\epsilon$). 

Once quartet splits are determined accurately, any quartet-based tree-building algorithm can be used. For simplicity, we will use a cherry picking algorithm that simply identifies $a,b$ as a cherry if they are always on the same side of all quartet splits. Then, we can reconstruct the ancestors of these cherries and simply recurse. This is the high-level summary of our reconstruction algorithm~\ref{alg:sym-reconstruct}. 

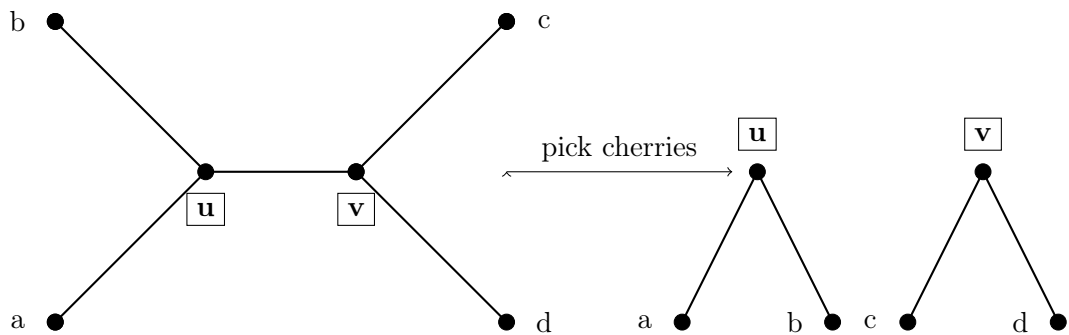
\begin{figure}
\begin{tikzpicture}
  \begin{scope}
    % First net
    %% vertices
    \draw[fill=black] (4,3) circle (3pt);
    \draw[fill=black] (2,1) circle (3pt);
    \draw[fill=black] (2,5) circle (3pt);
    \draw[fill=black] (6,3) circle (3pt);
    \draw[fill=black] (8,5) circle (3pt);
    \draw[fill=black] (8,1) circle (3pt);
    %% vertex labels
    \node at (1.5,1) {a};
	\node at (1.5,5) {b};
    \node at (8.5,5) {c};
    \node at (8.5,1) {d};
    \node[draw, font=\bf] at (6,2.5) {v};
    \node[draw, font=\bf] at (4,2.5) {u};
    \draw[fill=black] (2,1) circle (3pt);
    \draw[fill=black] (2,5) circle (3pt);
    \draw[fill=black] (6,3) circle (3pt);
    \draw[fill=black] (8,5) circle (3pt);
    %%% edges
	\draw [thick] (2,1) -- (4,3) -- (2,5);
    \draw [thick] (4,3) -- (6,3);
    \draw[thick] (8,1) -- (6,3) -- (8,5);
  \end{scope}
  
  \draw[->] (8,3) edge node[above]{pick cherries} (11,3);
  \begin{scope}[xshift=2.1in]
    % First net
    %% vertices
    \draw[fill=black] (6,3) circle (3pt);
    \draw[fill=black] (9,3) circle (3pt);
    \draw[fill=black] (5,1) circle (3pt);
    \draw[fill=black] (7,1) circle (3pt);
    \draw[fill=black] (8,1) circle (3pt);
    \draw[fill=black] (10,1) circle (3pt);
    %% vertex labels
    \node[draw, font=\bf] at (6,3.5) {u};
     \node[draw, font=\bf] at (9,3.5) {v};
    \node at (4.5,1) {a};
	\node at (6.5,1) {b};
	\node at (7.5,1) {c};
	\node at (9.5,1) {d};
    %%% edges
    \draw[thick] (5,1) -- (6,3) -- (7,1);
    \draw[thick] (8,1) -- (9,3) -- (10,1);
  \end{scope}
\end{tikzpicture}
\caption{The Four Point Method on $\set{a,b,c,d}$ with distance matrix $D$ will infer the correct quartet as long as $D(a,b)$ do not differ from the underlying additive distance $d(a,b)$ by more than $\lambda_{min}/2$. We then use quartet splits to pick cherries and recurse on ancestors (i.e. $a, b$)}
\label{fig:induced}
\end{figure}

\begin{definition}
A quartet  $Q = \set{a,b,c,d}$ of $L_h$ is {\bf r-short} if

\begin{itemize}
    \item When $h = 0$, we use $D(x,y) = -\ln(4\widetilde{C}(x,y))$ and $$\max_{x, y \in Q} D(x,y) \leq r$$
    \item When $h > \delta \log \log n$, we use $D(x,y) = \widehat{d}(x,y)$ and $$ \min_{x,y \in Q} T(S_h(x,y), r) = 1$$
\end{itemize} 
\end{definition}

\begin{lemma}
\label{lemma:shortquartet}
If a quartet $Q$ is $O(\log \log n)$-short and $\lambda_{min}=\Omega(1/\plog)$, then with high probability, if $k > \plog$, $\code{FPM}$ with the corresponding distance matrix $D$ will return the true quartet tree.
\end{lemma}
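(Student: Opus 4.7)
The plan is to reduce everything to the standard fact that the Four Point Method recovers the correct quartet topology whenever the dissimilarity matrix $D$ deviates from the true additive metric $d$ by less than $\lambda_{min}/2$ on every pair of leaves of the quartet. So the goal is to show that on each of the six pairs $x,y \in Q$, with high probability $|D(x,y) - d(x,y)| < \epsilon$ for some $\epsilon = \Omega(1/\plog)$ with $\epsilon < \lambda_{min}/2$, which will hold as long as $\kappa$ is chosen sufficiently large (recall $\lambda_{min} = \Omega(1/\plog)$ and we are free to pick $\kappa$ depending on the exponent of $\lambda_{min}$).

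I would split into the two cases in the definition of $r$-short. In the leaf case $h=0$ with $D(x,y) = -\ln(4 \Dt(x,y))$, the $r$-short assumption says $D(x,y) \leq r = O(\log\log n)$ directly. Combined with Lemma~\ref{lemma:testconcentration} (applied in its contrapositive form: if $d(x,y) > r + \epsilon$ then $-\ln(4\Dt(x,y)) > r$ w.h.p.), this yields that $d(x,y) \leq r + \epsilon = O(\log\log n)$ for each pair with high probability. Since $d(x,y)$ is then within the regime of Lemma~\ref{lemma:estimatorconcentration}, a second application gives $|-\ln(4\Dt(x,y)) - d(x,y)| < \epsilon$ with probability $1 - n^{-\Omega(\log n)}$.

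In the internal-node case $h > \delta\log\log n$ with $D(x,y) = \widehat{d}(x,y)$, the $r$-short condition says $T(S_h(x,y),r) = 1$ for every pair. By Lemma~\ref{lemma:deeptest}, with high probability this forces $d(x,y) \leq r + \epsilon = O(\log\log n)$. Once we know the pair is at distance $O(\log\log n)$, Lemma~\ref{lemma:smalldiameter} applies and gives $|\widehat{d}(x,y) - d(x,y)| < \epsilon$ with high probability. A union bound over the six pairs in $Q$ preserves the high-probability guarantee in both cases.

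Finally, with $|D(x,y) - d(x,y)| < \epsilon < \lambda_{min}/2$ simultaneously on all six pairs, the Four Point Condition evaluated on the estimated distances agrees with the Four Point Condition on the true additive metric (since the errors shift each of the three sums $D(a,b)+D(c,d)$, $D(a,c)+D(b,d)$, $D(a,d)+D(b,c)$ by less than $2\epsilon < \lambda_{min}$, which is strictly smaller than the gap $\lambda_{min}$ between the minimum sum and the other two in the true quartet topology, see Figure~\ref{fig:induced}). Hence $\code{FPM}$ returns the correct quartet topology. The main subtlety is simply that the short-quartet condition is stated in terms of the \emph{estimated} distances (or the diameter test), so one must invoke the appropriate concentration lemma in its contrapositive direction to first transfer the bound to the \emph{true} distances $d(x,y)$ before re-applying concentration; everything else is bookkeeping and a union bound.
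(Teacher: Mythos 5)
Your proposal is correct and matches the paper's argument: the paper's proof simply cites Lemmas~\ref{lemma:testconcentration} and~\ref{lemma:estimatorconcentration} for the leaf case and Lemmas~\ref{lemma:deeptest} and~\ref{lemma:smalldiameter} for the internal case, which is exactly the two-step ``contrapositive to bound $d(x,y)$, then concentration to bound $|D(x,y)-d(x,y)|$'' reasoning you spell out. You just make the implicit transfer from estimated to true distances explicit, which the paper leaves to the reader.
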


\begin{proof}
Since $\code{FPM}$ will return the true quartet tree when the distance matrix is $\epsilon < \lambda_{min}/2$ away from the true distance matrix, this follows directly from combining Lemma~\ref{lemma:testconcentration}, ~\ref{lemma:estimatorconcentration} for the case when $h = 0$. We use Lemma~\ref{lemma:deeptest},~\ref{lemma:smalldiameter} for the case when $h > \delta \log \log n$.
\end{proof}

When we recursively build this tree up, it is crucial that we can calculate distances between nodes in each sub-tree that is reconstructed so far. This is because our distance estimators $\widehat{d}(a,b)$ requires knowledge or a good estimate of all distances in the subtree under $a$ and $b$ in order to calculate the $O(\log n)$ conditionally independent low-variance distance estimates. Fortunately, it suffices to have a good enough estimate of these distances when distances can only take only discrete integer multiples of $\lambda_{min}$, as assumed in our $\Delta$-branch model. Under this assumption, we can ascertain the distances by simply rounding to the nearest integer multiple of $\lambda_{min}$ as long as $\epsilon < \lambda_{min}/2$. The estimation of all relevant distances is based of a very simple three-point rule.

\begin{definition}[Three-Point Rule]
For a triplet of nodes $a,b,c$ that meet at $x$ and a dissimilarity matrix $D$, we define $\widehat{D}(a,x)$ to be the estimator 

\[ \widehat{D}(a,x) = \frac{1}{2}[D(a,b) + D(a,c) - D(b,c)]\]
\end{definition}

We are now ready to present the final algorithm, $\textsc{TreeReconstruct}$. We assume that the input to the algorithm are just the leaf signatures $s = \set{s_{x}}$ and $\lambda_{min}$.

\begin{algorithm}
  \caption{Tree Reconstruction With Signatures
    \label{alg:sym-reconstruct}}
  \begin{algorithmic}[1]
    \Require{Leaf signatures $\set{s_{x}}$ for all $x \in L_0$, $\lambda_{min}$ }
    \Statex
    \Function{TreeReconstruct}{$\set{s_{x}}$, $\lambda_{min}$}
      \State Apply $\code{FPM}$ to $2\delta\log\log n$-short quartets of $L_0$ with $D(a, b) = -\ln(4\widetilde{C}(a, b))$ to infer splits
      \State Use quartet splits and the three-point rule to build multiple subtrees up to $\delta \log \log n$ height \par by iteratively identifying cherries.
      \State Estimate and fill in distances within each subtree using the three-point rule, rounded to the \par nearest $\lambda_{min}$ multiple.
      \For{$h  \gets  \delta \log \log n \textrm{ to } \log n$}
        \State Apply $\code{FPM}$ to $2\delta \log \log n$-short quartets on $L_{h}$ with $D(a, b) = \widehat{d}(a, b)$ to infer splits.    
        \State Identify cherries as pairs of vertices that only appear on the same side of quartet splits. 
        \State Estimate and fill in the edge lengths on cherries using the three-point rule on a short \par\hskip0.5cm quartet, rounded to nearest $\lambda_{min}$ multiple.
        \State Add cherries, with edges containing estimated decay rates, to the tree
      \EndFor
      \State \Return{the resulting tree $t$}
    \EndFunction
  \end{algorithmic}
\end{algorithm}

\sym*

\begin{proof}
Set $\epsilon = \lambda_{min}/3 = \Omega(1/\plog)$. Then, by Lemma~\ref{lemma:shortquartet}, all quartets queries made by \textsc{TreeReconstruct} are correct. Note that $a, b$ that are neighbors (i.e. for which $a, b$ is a cherry) appear on the same side of all short quartets. Furthermore, if $a, b \in L_h$ are not cherries and $d(a,b) < O(\log \log n)$, then there must exists $x, y$, such that $\set{a,x,b,y}$ is $O(\log \log n)$-short and $a,b$ are not on the same side of the split. Otherwise, if $d(a,b) = \Omega(\log \log n)$ is large, then by Lemma~\ref{lemma:testconcentration} and Lemma~\ref{lemma:deeptest}, none of the quartets involving $a,b$ will be considered $O(\log \log n)$-short. Finally, since all short quartets are considered, we conclude that all cherries picked are correct. 

Lastly, we note that we can estimate distances up to error $\epsilon < \lambda_{min}/2$, which allows us to round to the nearest multiple of $\lambda_{min}$ to get noiseless distance reconstruction with high probability. Therefore, all estimated decay rates or lengths are correct, as we reconstruct up the tree, by using Lemma~\ref{lemma:smalldiameter}, and noting that if all distances in $D$ are accurate up to error $\epsilon$, then the Three-Point Rule is accurate up to error at most $3\epsilon/2 < \lambda_{min}/2$.
\end{proof}

\section{Unbalanced Trees and Asymmetric Probabilities}
\label{sec:assym}
In this section, we show how to relax the assumption that the tree is completely balanced. Instead, we consider trees which are approximately balanced. In particular, let $\depth(a)$ denote the number of edges on the path from $a$ to the root. Then for $\mdepth$ such that $\mdepth \leq \log^{\alpha} n$ for a constant $\alpha$, we assume all nodes satisfy $\depth(a) \leq \mdepth$.

We also relax the assumption that $\pD(e) = \pI(e)$ for every edge, instead assuming that $|\pI(e) - \pD(e)|$ is bounded by $\beta \frac{\log \log n}{\mdepth}$ for a constant $\beta$. As mentioned in Section~\ref{sec:prelim}, up to the constants $\alpha, \beta$ these assumptions are optimal, i.e. for a fixed $\kappa$ and sufficiently large $\alpha$ or $\beta$, reconstruction with high probability is not possible due to significant loss of bitstring length down the tree. We will assume for simplicity of presentation the length of the root bitstring is $\Theta(\log^\kappa n)$, but the analysis easily generalizes to the case where the root has a larger bitstring.

Again, in all proofs we assume $\kappa$ is a sufficiently large constant depending only on the fixed values $\alpha, \beta, \zeta, \lambda_{min}$ and parameters $\delta, \epsilon$ in the lemma/theorem statements (which will be fixed by our algorithmic construction).

\subsection{Tree Properties}

Let $k_a$ be the number of bits in the bitstring at vertex $a$, and $k_r = \Theta(\log^\kappa n)$ specifically be the number of bits in the root bitstring. Let $L = \lfloor k_r^{1/2-\zeta} \rfloor$ for some small constant $\zeta > 0$. The length of a block at the root $l_r$ will be $\lfloor k_r/L \rfloor$ as before. 

Then, define $\eta(a) = \prod_{e \in P_{r, a}}(1 + \pI(e) - \pD(e))$. Note that the expected position of bit $j$ of $r$ in $a$ conditioned in the bit not being deleted is $j\eta(a)$. Thus, we will define the length of a block in bitstring $a$ to be $l_a = \lfloor l_r \eta(a) \rfloor$, and the $i$th block of the bitstring at node $a$ to be bits $(i-1)l_a+1$ to $il_a$ of the bitstring. Note that by the assumption that $|\pI(e) - \pD(e)| \leq \beta \frac{\log \log n}{\mdepth}$, $\log^{-\beta}(n) \leq \eta(a) \leq \log^\beta(n)$ for all $a$.

Key to our algorithmic construction is the fact that not only do we expect $k_a = k_r \eta(a)$, but that this concentrates very tightly.

\begin{lemma}\label{lemma:lengths-asym}
With probability $1 - n^{-\Omega(\log n)}$ for all vertices $a$, $k_r  (1 - \frac{\depth(a) \cdot \log^{\beta/2+1}(n)}{\log^{\kappa/2-2} n}) \leq k_a/\eta(a) \leq k_r (1 + \frac{\depth(a)\cdot \log^{\beta/2+1}(n)}{\log^{\kappa/2-2} n})$
\end{lemma}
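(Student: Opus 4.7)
The plan is to induct on depth, with the central observation that $Y_a := k_a/\eta(a)$ forms a martingale along any root-to-node path. For each edge $e=(p,c)$ from parent $p$ to child $c$, conditioned on $k_p$, the child length decomposes as $k_c = k_p - D_e + I_e$ with $D_e \sim \text{Bin}(k_p, \pD(e))$ and $I_e \sim \text{Bin}(k_p, \pI(e))$ independent given $k_p$, so $\ex[k_c \mid k_p] = k_p(1 + \pI(e) - \pD(e))$. Since $\eta(c) = \eta(p)(1 + \pI(e) - \pD(e))$, this rearranges to $\ex[Y_c \mid k_p] = Y_p$, while $Y_r = k_r$ at the root, which is why $\eta$ is the correct normalization.

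Next I would establish a per-edge deviation bound via Chernoff applied to the two binomials: with failure probability $n^{-\Omega(\log n)}$,
\[
\bigl|k_c - k_p(1 + \pI(e) - \pD(e))\bigr| \leq O\!\left(\sqrt{k_p}\,\log n\right).
\]
Dividing by $\eta(c)$ and using that $1 + \pI(e) - \pD(e) = 1 + o(1)$ under the assumption $|\pI(e) - \pD(e)| \leq \beta \log\log n/\mdepth$, this becomes $|Y_c - Y_p| \leq O(\sqrt{k_p}\log n/\eta(p))$. The blanket bound $\eta(p) \in [\log^{-\beta}n, \log^\beta n]$ (from the same assumption together with $\depth(p) \leq \mdepth$), combined with the inductive control $k_p \leq 2 k_r \eta(p)$, yields $|Y_c - Y_p| \leq O\!\left(\sqrt{k_r}\,\log^{\beta/2+1}(n)\right)$.

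Telescoping along the root-to-$a$ path of length $\depth(a)$ then gives $|Y_a - k_r| \leq \depth(a)\cdot O(\sqrt{k_r}\,\log^{\beta/2+1}(n))$, and dividing by $k_r = \Theta(\log^\kappa n)$ turns this into a relative error of $\depth(a)\cdot O(\log^{\beta/2+1-\kappa/2}(n))$. This is strictly stronger than the stated $\depth(a)\log^{\beta/2+1}(n)/\log^{\kappa/2-2}(n)$ bound; the two spare $\log n$ factors in the statement comfortably absorb the constants hidden in Chernoff and the $(1+o(1))$ slack above. A union bound over all $O(n\,\mdepth)$ edges (polynomially many) preserves the overall failure probability at $n^{-\Omega(\log n)}$.

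The one subtlety I expect to be the main obstacle is circular: the per-edge Chernoff bound consumes an upper bound on $k_p$, which is precisely the quantity being controlled. I would resolve this by strengthening the induction hypothesis so that whenever it holds at $p$ we automatically have $k_p = (1\pm o(1))k_r\eta(p) = O(k_r\eta(p))$, closing the recursion cleanly. Beyond this bookkeeping, the proof is a direct independence-plus-Chernoff calculation in the same spirit as Lemmas~\ref{lemma:lengths}--\ref{lemma:uniformblocks}.
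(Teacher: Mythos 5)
Your proposal matches the paper's argument in every essential respect: both establish a per-edge concentration bound $|k_b/\eta(b) - k_a/\eta(a)| \lesssim \sqrt{k_a}\log^{\beta/2+1}n/\sqrt{\eta(a)}$ via concentration of the insertion/deletion binomials, both handle the apparent circularity by strengthening the induction hypothesis so that the bound on $k_a/\eta(a)$ is available when bounding the next edge, and both telescope along root-to-node paths with a union bound over edges, using the spare $\log^2 n$ slack in the exponent to absorb accumulating constants. The martingale framing of $Y_a = k_a/\eta(a)$ is a clean way to motivate the normalization, but it leads to the same calculation as the paper.
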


We defer the proof to Section~\ref{sec:deferred-assym}. For the purposes of analysis, it will be convenient to define the \textit{normalized shift} of a bit. 
\begin{definition}
For any two nodes $a$ and $b$, we say that the $j$th bit of the bitstring at $a$ has a \textbf{normalized shift} of $m$ bits on the path from $a$ to $b$ if there is some $j'$ such that $|j'/\eta(b) - j/\eta(a)| = m$ and the $j$th bit of the bitstring at $a$ and the $j'$th bit of the bitstring at $b$ are shared.
\end{definition}

\begin{lemma}\label{lemma:bitshifts-asym}
With probability $1 - n^{-\Omega(\log n)}$, no bit has a normalized shift of more than $4 \log^{\alpha+1} n \sqrt{k_r}$ bits on any path in the tree.  
\end{lemma}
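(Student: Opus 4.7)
The plan generalizes Lemma~\ref{lemma:bitshifts} by replacing the raw shift of a bit with the renormalized process $j_i/\eta(x_i)$, which I analyze as a near-martingale along a descending path. First, a triangle-inequality reduction: if shared bits at $a$ and $b$ trace back to a common bit at position $j^\ast$ of $a\wedge b$, then $|j_a/\eta(a)-j_b/\eta(b)|\leq |j_a/\eta(a)-j^\ast/\eta(a\wedge b)|+|j^\ast/\eta(a\wedge b)-j_b/\eta(b)|$. So it suffices to bound, for every internal node $v$, every descendant $x$, and every bit present at $v$, the normalized shift along the $v$-to-$x$ path, then union bound over the $O(n^2 \log^{\kappa+\beta} n)$ such triples.

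Fix such a triple with descending path $x_0=v,x_1,\ldots,x_m=x$, $m\leq \mdepth$. Conditioning on the bit surviving all deletions, let $j_i$ be its position at $x_i$ and set $N_i := j_i/\eta(x_i)$. The per-edge update $j_i-j_{i-1}=I_i-D_i$ has $I_i,D_i$ independent binomials with parameters $(j_{i-1}-1,\pI(e_i))$ and $(j_{i-1}-1,\pD(e_i))$, so $\ex[j_i\mid j_{i-1}]=j_{i-1}(1+\pI(e_i)-\pD(e_i))+O(1)$ and $\ex[N_i\mid N_{i-1}]=N_{i-1}+O(1/\eta(x_i))$. Thus $N_i$ equals a true martingale plus a deterministic drift that sums to $O(\mdepth\log^\beta n)=O(\log^{\alpha+\beta}n)$ across the path, negligible against the target. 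Conditioning on the high-probability event of Lemma~\ref{lemma:lengths-asym} gives $j_{i-1}\leq O(k_r\log^\beta n)$ at every step, and a union-bounded Chernoff estimate along the path yields $|I_i-D_i-(j_{i-1}-1)(\pI(e_i)-\pD(e_i))|\leq O(\sqrt{j_{i-1}}\log n)$ simultaneously on every edge with probability $1-n^{-\Omega(\log n)}$. Each centered martingale increment of $N_i$ is therefore bounded by $R=O(\sqrt{k_r}\log^{3\beta/2+1}n)$, and Azuma's inequality across the $m\leq \log^\alpha n$ steps yields $|N_m-N_0|=O(R\sqrt{m}\log n)=O(\sqrt{k_r}\log^{\alpha/2+3\beta/2+2}n)$ with probability $1-n^{-\Omega(\log n)}$, comfortably within the target $4\log^{\alpha+1}n\sqrt{k_r}$ for the constant regime of $\alpha,\beta$ implicit in the statement.

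The main obstacle is the martingale setup: the process $j_i$ is only defined while the bit survives, which is handled by a stopping-time formulation since the lemma's conclusion is vacuous for deleted bits, and the $O(1)$ boundary correction in the conditional mean breaks exact martingale-hood, forcing the drift-plus-martingale decomposition above. Securing the tighter $\sqrt{j_{i-1}}$ per-edge increment bound (rather than the worst-case $j_{i-1}$) is also essential — without it the Azuma concentration is too weak for the target polylog exponent — and this requires pre-conditioning on the simultaneous Chernoff event across all edges of the path before applying concentration to the surviving martingale.
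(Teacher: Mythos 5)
Your proof takes a genuinely different route from the paper's. The paper bounds the shift edge-by-edge and simply sums: for a single edge $(a,b)$, Azuma applied to $\textnormal{Binom}(j,\pI)-\textnormal{Binom}(j,\pD)$ shows the per-edge normalized shift is $O(\log n\sqrt{k_r})$ w.h.p., and summing over the at most $2\log^\alpha n$ edges of a path gives $4\log^{\alpha+1}n\sqrt{k_r}$, followed by a union bound over the $O(n^2)$ paths. You instead build the whole descending path into a near-martingale $N_i = j_i/\eta(x_i)$, peel off a drift term, pre-condition on a simultaneous per-edge Chernoff event to bound the martingale increments, then apply a single whole-path Azuma. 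Both routes feed on the same per-edge concentration; your route buys the $\sqrt{m}$ versus $m$ dependence on path length, but adds the drift decomposition, stopping-time care, and the conditioning delicacy that Azuma requires almost-sure bounded increments rather than high-probability ones.

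The place where your argument comes up short is the final exponent. Your per-increment bound $R = O(\sqrt{k_r}\log^{3\beta/2+1}n)$ comes from the crude $j_{i-1}\leq O(k_r\log^\beta n)$ followed by division by $\eta(x_i)\geq \log^{-\beta}n$; the resulting $O(\sqrt{k_r}\log^{\alpha/2+3\beta/2+2}n)$ is not ``comfortably within'' $4\log^{\alpha+1}n\sqrt{k_r}$ in general --- it requires $3\beta+2\leq\alpha$, which is not implied by the hypotheses. You can tighten this to $O(\sqrt{k_r}\log^{\alpha/2+\beta/2+2}n)$ by using $j_{i-1}\leq O(k_r\eta(x_{i-1}))$ directly (so the $\eta$ factors mostly cancel), but a residual $\beta$-dependence and a $+2$ in the exponent remain. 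To be fair, the paper's own per-edge step glosses over an identical $1/\sqrt{\eta(b)}$ factor and so implicitly produces $O(\log^{\alpha+\beta/2+1}n\sqrt{k_r})$ rather than exactly the stated constant; neither version affects downstream usage since only the existence of a fixed polylog exponent matters and $\kappa$ absorbs it. But you should not assert that your bound fits the literal target --- it does not for arbitrary constants $\alpha,\beta$. If your aim is to match the lemma as stated, the paper's simpler edge-by-edge summation (no martingale layer, no drift decomposition) is the cleaner path and is what you should write.
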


We defer the proof to Section~\ref{sec:deferred-assym}. Analogously to before, we will define $\cond_{reg}$ to be the intersection of the high probability events described in Lemmas~\ref{lemma:lengths-asym}, \ref{lemma:bitshifts-asym}, and \ref{lemma:uniformblocks}.

We define $s_{a,i}$ analogously to before, letting it be $1/\sqrt{l_a}$ times the signed difference between the number of zeroes in the $i$th block of the bitstring of the bitstring at node $a$ and half the length of $i$th block, and note that Lemma~\ref{lemma:uniformblocks} still applies. However, we do not know the true block lengths, so even for the leaves we cannot exactly compute $s_{a,i}$. 

Instead, for a leaf bitstring let $l_a' = \lfloor k_a/L \rfloor$. Note that this quantity is computable given only the leaf bitstrings as well as the minimum sequence length requirement $k_r$. Our algorithm will split each leaf bitstring into ``pseudo-blocks'' of length $l_a'$, i.e. the $i$th pseudo-block of leaf $a$ consists of bits $(i-1)l_a'+1$ to $il_a'$ of the bitstring. Our estimate $\st_{a,i}$ of $s_{a,i}$ is then $1/\sqrt{l_a'}$ times the signed difference between the number of zeroes in the $i$th pseudo-block of the bitstring at node $a$. $\st_{a,i}$ is computable for all leaf nodes $a$ since we can compute $l_a'$ easily, so giving a reconstruction algorithm based on the signatures $\st_{a,i}$ gives a constructive result.

\begin{lemma}\label{lemma:signatureestimation}
Conditioned on $\cond_{reg}$, $\st_{a,i} = s_{a,i} \pm O(\log^{\alpha/2+\beta/4 + 5/2 - \kappa\zeta/2} n)$
\end{lemma}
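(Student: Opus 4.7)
The plan is to write $\st_{a,i} - s_{a,i}$ as a sum of two explicit discrepancy terms and bound each of them directly using the regularity events in $\cond_{reg}$. Let $I = [(i-1)l_a + 1, il_a]$ denote the $i$th true block of node $a$'s bitstring and $I' = [(i-1)l_a' + 1, il_a']$ the $i$th pseudo-block. Writing $\Delta_I = N_a(I) - l_a/2$ and $\Delta_{I'} = N_a(I') - l_a'/2$, I would decompose
\[s_{a,i} - \st_{a,i} \;=\; \frac{\Delta_I - \Delta_{I'}}{\sqrt{l_a}} \;+\; \Delta_{I'}\left(\frac{1}{\sqrt{l_a}} - \frac{1}{\sqrt{l_a'}}\right).\]
The rest of the proof is to bound the sizes of $|l_a - l_a'|$ and of each term.

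First, I would use Lemma~\ref{lemma:lengths-asym} to show $|l_a - l_a'|/l_a = O(\log^{\alpha + \beta/2 + 3 - \kappa/2}n)$. Since $l_r = \lfloor k_r/L\rfloor$ and $l_a = \lfloor l_r\eta(a)\rfloor$, $l_a' = \lfloor k_a/L\rfloor$, we have $|l_a - l_a'| \le |k_a - k_r\eta(a)|/L + O(\eta(a))$; dividing by $l_a \approx k_r\eta(a)/L$ the $\eta(a)$ factors cancel, leaving only $|k_a/\eta(a) - k_r|/k_r$, which Lemma~\ref{lemma:lengths-asym} bounds by $\depth(a)\log^{\beta/2+1}(n)/\log^{\kappa/2-2}(n) \le \log^{\alpha+\beta/2+3-\kappa/2}(n)$. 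This cancellation is the one place the calculation is slightly delicate, and it is exactly what keeps the final exponent at $\beta/4$ rather than something like $5\beta/4$.

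For the first term, note the symmetric difference $I \triangle I'$ consists of at most two intervals, whose lengths add to $(2i-1)|l_a-l_a'| = O(L|l_a-l_a'|)$. By Lemma~\ref{lemma:uniformblocks} applied to each of these two sub-intervals, $N_a(I\setminus I')$ and $N_a(I'\setminus I)$ each deviate from half their length by $O(\sqrt{L|l_a-l_a'|}\log n)$; since $|I\setminus I'| - |I'\setminus I| = l_a - l_a'$, the determinstic parts cancel in $\Delta_I - \Delta_{I'}$, giving
\[\left|\frac{\Delta_I - \Delta_{I'}}{\sqrt{l_a}}\right| = O\!\left(\sqrt{\frac{L\,|l_a - l_a'|}{l_a}}\cdot \log n\right) = O\!\left(\sqrt{L\cdot \log^{\alpha + \beta/2 + 3 - \kappa/2}(n)}\cdot \log n\right) = O(\log^{\alpha/2 + \beta/4 + 5/2 - \kappa\zeta/2}(n)),\]
using $L = \Theta(\log^{\kappa/2 - \kappa\zeta}(n))$. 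For the second term, Lemma~\ref{lemma:uniformblocks} gives $|\Delta_{I'}| \le \sqrt{l_a'}\log n$, and $|1/\sqrt{l_a} - 1/\sqrt{l_a'}| = O(|l_a-l_a'|/l_a^{3/2})$, so this term is $O((|l_a-l_a'|/l_a)\log n) = O(\log^{\alpha + \beta/2 + 4 - \kappa/2}(n))$, which is dominated by the first term for $\kappa$ sufficiently large relative to $\zeta, \alpha, \beta$. Combining the two bounds yields the claim.

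The main (and only) obstacle is keeping the $\eta(a)$-dependence under control: naively bounding $|l_a - l_a'|$ and $l_a$ separately introduces a spurious factor of $\log^\beta(n)$, and the point is that one must keep them together as a ratio so that $\eta(a)$ cancels before taking the square root. Once that observation is in place the remainder is just an application of the three regularity lemmas packaged into $\cond_{reg}$.
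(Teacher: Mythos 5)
Your proof is correct and follows essentially the same approach as the paper: split the discrepancy into a term from the different block boundaries and a term from the different normalization, bound the boundary discrepancy via Lemma~\ref{lemma:lengths-asym} and Lemma~\ref{lemma:uniformblocks} applied to the (at most two) segments of the symmetric difference, and observe that the second term is lower order. The paper phrases the decomposition as $\st_{a,i} = (\st_{a,i} - \st_{a,i}\sqrt{l_a'/l_a}) + \st_{a,i}\sqrt{l_a'/l_a}$, which is the same two terms you write in the $\Delta_I, \Delta_{I'}$ notation, and your explicit observation that the deterministic halves cancel is what the paper implicitly uses when it applies Lemma~\ref{lemma:uniformblocks} to each segment.
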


We defer the proof to Section~\ref{sec:deferred-assym}.

\subsection{Distance Estimator}

We would like to compute the following estimator as before: 

$$\Dt(a, b) = \frac{2}{L}\sum_{i = 1}^{L/2} s_{a, 2i+1}s_{b, 2i+1}$$

But we cannot directly compute $s_{a,i}$, so we use the following estimator instead:

$$\Dtp(a, b) = \frac{2}{L}\sum_{i = 1}^{L/2} \st_{a, 2i+1}\st_{b, 2i+1}$$

Note that by Lemma~\ref{lemma:signatureestimation} and Lemma~\ref{lemma:uniformblocks} we immediately get the following Corollary:

\begin{corollary}\label{cor:estimatorsimilarity}
Conditioned on $\cond_{reg}$, $|\Dtp(a, b) - \Dt(a,b)| = O(\log^{\alpha/2+\beta/4 + 7/2 - \kappa\zeta/2} n)$
\end{corollary}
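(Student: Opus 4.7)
The plan is to prove this by a direct term-by-term comparison of the two sums. Since $\Dt$ and $\Dtp$ differ only by replacing each $s_{a,i}$ with its estimator $\st_{a,i}$, the problem reduces entirely to bounding a product-error using Lemma~\ref{lemma:signatureestimation} (which controls the per-coordinate error $\st_{a,i} - s_{a,i}$) together with Lemma~\ref{lemma:uniformblocks} (which bounds $|s_{a,i}|$ itself).

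Concretely, I would write $\st_{a,i} = s_{a,i} + \epsilon_{a,i}$ with $|\epsilon_{a,i}| \leq O(\log^{\alpha/2+\beta/4+5/2-\kappa\zeta/2} n)$ (conditioned on $\cond_{reg}$), and similarly for $b$. Expanding the product gives
\[
\st_{a,i}\st_{b,i} - s_{a,i}s_{b,i} = s_{a,i}\epsilon_{b,i} + s_{b,i}\epsilon_{a,i} + \epsilon_{a,i}\epsilon_{b,i}.
\]
Applying Lemma~\ref{lemma:uniformblocks} gives $|s_{a,i}|, |s_{b,i}| = O(\log n)$, so the two cross terms are each of order $O(\log^{\alpha/2+\beta/4+7/2-\kappa\zeta/2} n)$, while the quadratic error term $\epsilon_{a,i}\epsilon_{b,i}$ is of order $O(\log^{\alpha+\beta/2+5-\kappa\zeta} n)$, which is strictly smaller than the cross-term bound once $\kappa$ is sufficiently large (so that $\kappa\zeta/2 > \alpha/2 + \beta/4 + 3/2$, which we are free to assume).

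Finally, I would apply the triangle inequality to the definition
\[
\Dtp(a,b) - \Dt(a,b) = \frac{2}{L}\sum_{i=1}^{L/2}\bigl(\st_{a,2i+1}\st_{b,2i+1} - s_{a,2i+1}s_{b,2i+1}\bigr),
\]
observe that averaging $L/2$ terms each bounded by $O(\log^{\alpha/2+\beta/4+7/2-\kappa\zeta/2} n)$ preserves this bound (the $2/L$ factor cancels the $L/2$ summands), and conclude the claimed estimate.

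There is no real obstacle here: both bounds being combined are already stated and proved, and the calculation is just a product-of-approximations argument. The only thing to be careful about is noting that all inequalities used hold conditioned on $\cond_{reg}$ (which is the hypothesis of both input lemmas), so no additional probabilistic manipulation is required.
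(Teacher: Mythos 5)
Your proof is correct and is exactly the argument the paper intends: the paper states the corollary follows immediately from Lemma~\ref{lemma:signatureestimation} and Lemma~\ref{lemma:uniformblocks}, which is precisely the term-by-term expansion and cross-term bound you carry out.
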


\begin{lemma}\label{lemma:unbiasedestimator-asym}
For any two nodes $a, b$ in the tree, and any $i$, $$\ex[s_{a,i}s_{b,i}] = \frac{1}{4}(1\pm O(\log^{-\kappa\zeta+O(\log^{-\kappa\zeta+\alpha+1} n))} n))\exp(-d(a,b))$$
\end{lemma}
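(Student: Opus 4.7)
Let $c = a \wedge b$ and write $\sigma_{v,j}$ for the $j$-th bit at node $v$. The plan is to mirror the symmetric-case proof of Lemma~\ref{lemma:unbiasedestimator}, but with careful bookkeeping of the asymmetric normalization factors $\eta(v)$ and the $\alpha,\beta$-dependent error terms arising from Lemmas~\ref{lemma:lengths-asym} and~\ref{lemma:bitshifts-asym}. First I would expand
\begin{equation*}
s_{a,i}s_{b,i} \;=\; \frac{1}{\sqrt{l_a l_b}}\sum_{j_a=(i-1)l_a+1}^{i l_a}\sum_{j_b=(i-1)l_b+1}^{i l_b}\bigl(\tfrac{1}{2}-\sigma_{a,j_a}\bigr)\bigl(\tfrac{1}{2}-\sigma_{b,j_b}\bigr),
\end{equation*}
and observe that $\ex[(\tfrac{1}{2}-\sigma_{a,j_a})(\tfrac{1}{2}-\sigma_{b,j_b})]$ vanishes unless the bits at $(j_a,j_b)$ are shared, in which case it equals $\tfrac{1}{4}\prod_{e\in P_{a,b}}(1-2\pS(e))$ by independence of substitutions across edges and between the substitution and indel processes.

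\textbf{Counting shared pairs.} The core step is to show that the expected number of shared pairs lying in block $i$ at both $a$ and $b$ equals $l_r\eta(c)\prod_{e\in P_{a,b}}(1-\pD(e))$ up to lower-order terms. For a fixed bit at position $j_c$ in $c$, the survival probability along each of $P_{c,a}$ and $P_{c,b}$ is the corresponding product of $(1-\pD(e))$ factors; given survival, its conditional expected position in $a$ is $j_c\eta(a)/\eta(c)$, and analogously in $b$. Crucially, because we set $l_v = \lfloor l_r\eta(v)\rfloor$, block $i$ at every node corresponds to the same interval $[(i-1)l_r,\,il_r)$ in the normalized coordinate $j/\eta(v)$. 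Hence a bit originating in normalized block $i$ at $c$ that survives to both $a$ and $b$ lands in block $i$ at both nodes, \emph{except} when its normalized position at $a$ or $b$ lies within the boundary zone of width $O(\log^{\alpha+1}n\sqrt{k_r})$ given by Lemma~\ref{lemma:bitshifts-asym}. After rescaling by $\eta(v)\le\log^\beta n$ and dividing by $l_v\ge\Omega(l_r\log^{-\beta}n)$, this contributes a relative error of $O(\log^{-\kappa\zeta+\alpha+2\beta+O(1)}n)$; Lemma~\ref{lemma:lengths-asym} contributes a strictly smaller relative error from block sizes fluctuating around $l_r\eta(v)$.

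\textbf{Assembly.} Multiplying the pair count by the per-pair substitution contribution and dividing by the normalization $\sqrt{l_a l_b} = l_r\sqrt{\eta(a)\eta(b)}(1\pm o(1))$ yields
\begin{equation*}
\ex[s_{a,i}s_{b,i}] \;=\; \frac{1}{4}\cdot\frac{\eta(c)}{\sqrt{\eta(a)\eta(b)}}\prod_{e\in P_{a,b}}(1-2\pS(e))(1-\pD(e))\cdot\bigl(1\pm O(\log^{-\kappa\zeta+\alpha+2\beta+O(1)}n)\bigr).
\end{equation*}
Since the edges of $P_{c,a}\cup P_{c,b}$ together cover $P_{a,b}$ exactly once, $\eta(a)\eta(b)/\eta(c)^2 = \prod_{e\in P_{a,b}}(1+\pI(e)-\pD(e))$, so the leading factor collapses to $\tfrac{1}{4}\exp(-d(a,b))$ by the definition of $\lambda(e)$.

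\textbf{Main obstacle.} The hardest part is the boundary-shift accounting in the asymmetric setting. In the symmetric case all $\eta(v)=1$ so the normalized shift bound translates directly to a bound on true bit positions, but here converting a normalized shift at $a$ to a true shift picks up a factor of $\eta(a)\le\log^\beta n$, and a second $\log^\beta n$ appears because $l_v$ can be shorter than $l_r$ by the same factor. Combined with the $\log^{\alpha+1} n$ factor from Lemma~\ref{lemma:bitshifts-asym}, this forces $\kappa\zeta$ to exceed a linear function of $\alpha+\beta$ for the relative error to be $o(1)$, which is the quantitative price of asymmetry in this lemma and feeds through to later concentration arguments.
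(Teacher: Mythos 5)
Your overall strategy matches the paper's: expand $s_{a,i}s_{b,i}$ as a double sum over bit pairs, reduce to counting shared pairs that survive and land in block $i$ at both endpoints, work in normalized coordinates $j/\eta(v)$ so that blocks line up across nodes, invoke the normalized-shift bound from Lemma~\ref{lemma:bitshifts-asym} to control the boundary region, and finally collapse the $\eta$-ratio $\eta(c)/\sqrt{\eta(a)\eta(b)}$ to $\prod_{e\in P_{a,b}}(1+\pI(e)-\pD(e))^{-1/2}$ so that the leading factor becomes $\exp(-d(a,b))$. That structure is correct and is what the paper does (the paper instead tracks $l_{a\wedge b}/\sqrt{l_a l_b}$ directly, but this is algebraically the same relation).

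However, your error accounting is not tight, and the reason is instructive. You convert the normalized boundary width to true bit positions by multiplying by $\eta(a\wedge b)\le\log^\beta n$, and then you divide by a \emph{lower bound} $l_v \ge \Omega(l_r\log^{-\beta}n)$ on the block length, which gives you an extra $\log^{2\beta}n$ factor. But these two quantities are not independently worst-case: since $l_{a\wedge b}=\lfloor l_r\eta(a\wedge b)\rfloor$, the ratio (boundary width in true positions at $a\wedge b$) / (block length at $a\wedge b$) is
\[
\frac{O(\log^{\alpha+1} n\sqrt{k_r})\cdot\eta(a\wedge b)}{l_r\cdot\eta(a\wedge b)} = \frac{O(\log^{\alpha+1} n\sqrt{k_r})}{l_r} = O(\log^{-\kappa\zeta+\alpha+1}n),
\]
so the $\eta(a\wedge b)$ cancels exactly and there is no $\beta$-dependence in the relative error at this step. (The contribution from block-length fluctuations via Lemma~\ref{lemma:lengths-asym} is, as you note, strictly smaller.) The paper's bound is therefore $O(\log^{-\kappa\zeta+\alpha+1}n)$, not $O(\log^{-\kappa\zeta+\alpha+2\beta+O(1)}n)$. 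Your ``main obstacle'' paragraph consequently overstates the role of $\beta$ in this particular lemma: the requirement here is only $\kappa\zeta > \alpha+1$, and $\beta$ enters the downstream concentration statements through the pseudo-block estimator $\st_{a,i}$ (Lemma~\ref{lemma:signatureestimation}) rather than through the expectation of $s_{a,i}s_{b,i}$ itself. Fixing the cancellation tightens your relative error to the paper's bound without changing the rest of the argument.
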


The proof follows similarly to the proof of Lemma~\ref{lemma:unbiasedestimator} and is deferred to Section~\ref{sec:deferred-assym}.

\begin{corollary}\label{lemma:unbiasedestimator2-asym}
For any two nodes $a, b$, $\ex[\Dt(a,b)] = \frac{1}{4}(1\pm O(\log^{-\kappa\zeta+\alpha+1} n))\exp(-d(a,b))$.
\end{corollary}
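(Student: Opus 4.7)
The plan is to observe that Corollary~\ref{lemma:unbiasedestimator2-asym} follows from Lemma~\ref{lemma:unbiasedestimator-asym} by linearity of expectation, exactly as Corollary~\ref{lemma:unbiasedestimator2} followed from Lemma~\ref{lemma:unbiasedestimator} in the symmetric case. By the definition of $\Dt(a,b)$,
\[
\ex[\Dt(a,b)] = \frac{2}{L}\sum_{i=1}^{L/2} \ex[s_{a,2i+1} s_{b,2i+1}].
\]
By Lemma~\ref{lemma:unbiasedestimator-asym}, each summand equals $\tfrac{1}{4}(1 \pm O(\log^{-\kappa\zeta+\alpha+1} n))\exp(-d(a,b))$, and this bound is uniform in $i$ since the lemma statement does not depend on the block index. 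Averaging $L/2$ such identical expressions preserves both the leading factor $\tfrac{1}{4}\exp(-d(a,b))$ and the multiplicative $(1 \pm O(\log^{-\kappa\zeta+\alpha+1} n))$ envelope, yielding the claimed bound.

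The only thing to double-check is that the relative error term stated in Lemma~\ref{lemma:unbiasedestimator-asym} is really the clean expression $O(\log^{-\kappa\zeta+\alpha+1} n)$ (the lemma statement as written contains a typographical artifact in the exponent), and that the error does not blow up when summed. Since each term contributes the same additive error of magnitude $O(\log^{-\kappa\zeta+\alpha+1} n)\cdot \tfrac{1}{4}\exp(-d(a,b))$, taking the average of $L/2$ such terms again gives an additive error of the same order; factoring out $\tfrac{1}{4}\exp(-d(a,b))$ then produces the multiplicative form in the corollary statement. No step requires new machinery beyond the preceding lemma, so there is no genuine obstacle — this corollary is essentially a bookkeeping consequence, recorded separately because $\Dt(a,b)$ (rather than a single $s_{a,i} s_{b,i}$) is the object consumed by the downstream distance-estimation arguments in the asymmetric case.
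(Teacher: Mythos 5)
Your proposal is correct and matches the paper's (implicit) argument: the corollary is just linearity of expectation applied to the $L/2$ summands defining $\Dt(a,b)$, each bounded by Lemma~\ref{lemma:unbiasedestimator-asym} with a block-independent relative error of $O(\log^{-\kappa\zeta+\alpha+1} n)$. You are also right that the exponent in the lemma statement contains a typographical artifact and the intended error is $O(\log^{-\kappa\zeta+\alpha+1} n)$, as confirmed by the last displayed line of the lemma's proof.
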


\begin{lemma}\label{lemma:estimatorconcentration-asym}
Let $\delta > 0$ be any constant and $\epsilon =  \Omega(\log^{max\{-\kappa(1/2-\zeta)+2 \delta +6, -\kappa\zeta/2+\alpha/2+\delta+5/2\}} n)$. Then for nodes $a, b$ such that $ d(a, b) < \delta \log \log(n)$, then $|-\ln(4\Dt(a, b)) -  d(a, b)| < \epsilon$ with probability at least $1 - n^{-\Omega(\log n)}$.
\end{lemma}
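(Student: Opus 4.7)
The plan is to adapt the proof of Lemma~\ref{lemma:estimatorconcentration} to the asymmetric setting by combining the mean estimate from Corollary~\ref{lemma:unbiasedestimator2-asym} with a concentration argument that exploits the approximate blockwise independence of odd-indexed blocks. The two exponents appearing in the statement of $\epsilon$ correspond to two distinct sources of error---the bias in $\ex[\Dt(a,b)]$ and the concentration error around it---so I would bound each separately and then transfer the combined bound through $-\ln(4\,\cdot\,)$ via a first-order Taylor expansion, using that $d(a,b) < \delta \log\log n$ implies $e^{-d(a,b)} \geq \log^{-\delta}(n)$.

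For the mean, Corollary~\ref{lemma:unbiasedestimator2-asym} directly yields $\ex[\Dt(a,b)] = \tfrac{1}{4}(1 \pm O(\log^{-\kappa\zeta+\alpha+1} n))\exp(-d(a,b))$. For concentration, I would introduce a shift-invariant, blockwise-independent proxy signature exactly as in the symmetric case: by Lemma~\ref{lemma:bitshifts-asym}, the normalized shifts along any root-to-node path are at most $4\log^{\alpha+1}(n)\sqrt{k_r}$, which is much smaller than the block length $l_a = \Theta(k_r^{1/2+\zeta}\eta(a))$, so each indel affects at most $O(1)$ odd-indexed blocks and the proxy is $\ell_\infty$-close to $s_{a,i}$ up to an error controlled by the shift-to-block ratio. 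Using $|s_{a,i}| \leq O(\log n)$ from Lemma~\ref{lemma:uniformblocks} to bound the covariance between $s_{a,2i+1}s_{b,2i+1}$ and $s_{a,2i'+1}s_{b,2i'+1}$ for $i \neq i'$, the variance of $\Dt(a,b)$ is $O(\log^{4-\kappa(1/2-\zeta)} n)$ plus lower-order covariance contributions arising from the proxy approximation. A Chebyshev bound, boosted via Azuma on $\Omega(\log n)$ nearly-independent sub-estimators to the required $1-n^{-\Omega(\log n)}$ confidence, then gives an absolute concentration error on $\Dt(a,b)$ that, after being multiplied by $e^{d(a,b)} = O(\log^\delta n)$ to pass through the logarithm, matches the first exponent $-\kappa(1/2-\zeta)+2\delta+6$ in the claimed $\epsilon$.

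Writing $4\Dt(a,b) = e^{-d(a,b)}(1+\eta)$ with $|\eta|$ bounded by the sum of the multiplicative bias and multiplicative concentration error, and applying $|-\ln(1+\eta)| = O(|\eta|)$ for small $|\eta|$, I obtain the stated bound on $|-\ln(4\Dt(a,b))-d(a,b)|$, dominated by the maximum of the two claimed exponents. A union bound over the failure of $\cond_{reg}$ (via Lemmas~\ref{lemma:lengths-asym},~\ref{lemma:bitshifts-asym}, and~\ref{lemma:uniformblocks}) together with the concentration failure event preserves the required probability. The main obstacle will be the covariance control: the asymmetric normalized-shift bound is a factor $\log^{\alpha-1}(n)$ larger than its symmetric counterpart, and this amplification must be carefully traced through the proxy comparison and then through the square-root-type averaging over the $L/2$ nearly-independent terms in order to produce precisely the $\alpha/2$ contribution in the second exponent $-\kappa\zeta/2+\alpha/2+\delta+5/2$. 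I would also need to verify, as is implicit in the setup, that the normalized shift (rather than the raw shift) is the correct quantity to track when block lengths vary across the tree, exploiting the cancellation of the $\eta(a)$ factors between $l_a$ and the shift normalization so that Lemma~\ref{lemma:bitshifts-asym} applies directly to block-level comparisons without further loss.
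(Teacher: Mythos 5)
The paper's proof of this lemma is a single sentence --- ``follows exactly as did the proof of Lemma~\ref{lemma:estimatorconcentration}'' --- so the real question is whether your expanded argument matches what the asymmetric analogue of that proof would actually look like. You correctly identify the key ingredients (a shift-invariant proxy signature, the normalized-shift bound, and the $\eta(a)$ cancellation between the block length and the normalized shift), but the accounting of where the two exponents come from is wrong, and the concentration mechanism you propose would not give the stated bound.

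\textbf{Misattribution of the second exponent.} You claim the exponent $-\kappa\zeta/2+\alpha/2+\delta+5/2$ comes from ``the bias in $\ex[\Dt(a,b)]$,'' citing Corollary~\ref{lemma:unbiasedestimator2-asym}. But that corollary gives a relative bias of $O(\log^{-\kappa\zeta+\alpha+1} n)$, and $-\kappa\zeta+\alpha+1$ is strictly smaller than $-\kappa\zeta/2+\alpha/2+5/2$ for the relevant $\kappa$. The second exponent does not come from the bias at all; it comes from the proxy transfer error $|\Dt - \Dt^*|$. In the symmetric proof, this arises by bounding the per-block difference $|s_{a,i}^* - s_{a,i}|$ via Lemma~\ref{lemma:uniformblocks} applied to a boundary segment whose length is the bit-shift magnitude of Lemma~\ref{lemma:bitshifts}. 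In the asymmetric setting the segment length at node $a$ is $O(\log^{\alpha+1} n\,\sqrt{k_r}\,\eta(a))$, so Lemma~\ref{lemma:uniformblocks} gives a contribution of $O(\log^{(\alpha+3)/2} n\,k_r^{1/4}\sqrt{\eta(a)})$; dividing by $\sqrt{l_a} = \Theta(k_r^{1/4+\zeta/2}\sqrt{\eta(a)})$ and multiplying by $|s_{b,i}| = O(\log n)$ gives exactly $O(\log^{\alpha/2+5/2-\kappa\zeta/2} n)$, and then a final $e^{d(a,b)} = O(\log^\delta n)$ factor produces the second exponent. This also shows that the $\alpha/2$ factor is produced by the square root inside Lemma~\ref{lemma:uniformblocks} (applied to the shift-length segment) --- not, as you claim, by ``square-root-type averaging over the $L/2$ nearly-independent terms,'' which only enters the first exponent.

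\textbf{The concentration mechanism.} The paper's Lemma~\ref{lemma:estimatorconcentration} proof transfers to $\Dt^*$, whose odd-indexed block contributions are exactly independent, and applies Azuma directly to that sum of $L/2$ bounded terms; it neither bounds any covariance nor invokes Chebyshev. Your proposal to bound the covariance of the actual products $s_{a,2i+1}s_{b,2i+1}$ directly and then apply Chebyshev (boosted via Azuma over $\Omega(\log n)$ sub-estimators) has two problems. First, there is no natural decomposition of a single $\Dt(a,b)$ into $\Omega(\log n)$ conditionally independent sub-estimators; that machinery belongs to $\widehat{d}(a,b)$ in Lemma~\ref{lemma:smalldiameter}, not here. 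Second, the cross-block covariance of the actual products is $O(\log^{-\kappa\zeta/2+5} n)$ (the same order as in Lemma~\ref{lemma:covariance}, driven by the shared indel events), which for large $\kappa$ dominates the diagonal variance $O(\log^{4-\kappa(1/2-\zeta)} n)$; it is not a ``lower-order contribution'' as you assert. Chebyshev on the full variance would then only yield deviations of order $\log^{-\kappa\zeta/4+5/2+\delta} n$, which is larger than the claimed $\Omega(\log^{-\kappa\zeta/2+\alpha/2+\delta+5/2} n)$. The proxy-then-Azuma route is precisely what decouples the correlated indel-induced part (which becomes a deterministic shift $|\Dt - \Dt^*|$, yielding the second exponent) from the genuinely independent noise (yielding the first exponent via Azuma), and skipping it leaves you stuck with the worse $\kappa\zeta/4$ rate.
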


\begin{proof}
The proof follows exactly as did the proof of Lemma~\ref{lemma:estimatorconcentration}.

\end{proof}

\begin{corollary}\label{cor:estimatorconcentration-asym}
Let $\delta > 0$ be any constant and $\epsilon = \Omega(\log^{\max\{-\kappa(1/2-\zeta)+2 \delta +6,- \kappa\zeta/2+\alpha/2+ \beta/4 +\delta+ 7/2 \}} n)$. Then for nodes $a, b$ such that if $ d(a, b) < \delta \log \log(n)$, then $|-\ln(4\Dtp(a, b)) -  d(a, b)| < \epsilon$ with probability at least $1 - n^{-\Omega(\log n)}$.
\end{corollary}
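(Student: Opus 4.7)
The plan is to combine Lemma~\ref{lemma:estimatorconcentration-asym}, which already controls the idealized estimator $-\ln(4\Dt(a,b))$, with Corollary~\ref{cor:estimatorsimilarity}, which bounds how far the computable estimator $\Dtp(a,b)$ can drift from $\Dt(a,b)$. The two ingredients differ only in additive form, so the remaining work is to show that this additive closeness transfers through the logarithm without blowing up, which requires a lower bound on $\Dt(a,b)$ itself. Since $d(a,b) < \delta\log\log n$ by hypothesis, we have $\exp(-d(a,b)) \geq \log^{-\delta} n$, and Corollary~\ref{lemma:unbiasedestimator2-asym} together with Lemma~\ref{lemma:estimatorconcentration-asym} guarantees that $\Dt(a,b) = \Omega(\log^{-\delta} n)$ with high probability.

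First I would invoke Lemma~\ref{lemma:estimatorconcentration-asym} to obtain, with probability at least $1 - n^{-\Omega(\log n)}$, the bound
\[
\left| -\ln(4\Dt(a,b)) - d(a,b) \right| < \epsilon_1,
\qquad \epsilon_1 = \Omega\!\left(\log^{\max\{-\kappa(1/2-\zeta)+2\delta+6,\,-\kappa\zeta/2+\alpha/2+\delta+5/2\}} n\right).
\]
Next, since $\cond_{reg}$ holds with probability $1 - n^{-\Omega(\log n)}$, Corollary~\ref{cor:estimatorsimilarity} gives
$|\Dtp(a,b) - \Dt(a,b)| = O(\log^{\alpha/2+\beta/4+7/2-\kappa\zeta/2} n)$
on this event, simultaneously with the lower bound $\Dt(a,b) = \Omega(\log^{-\delta} n)$.

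With both quantities positive and close, I would then convert additive closeness to logarithmic closeness: writing $\Dtp(a,b) = \Dt(a,b)(1 + \xi)$ where $|\xi| = O(\log^{\alpha/2+\beta/4+\delta+7/2-\kappa\zeta/2} n)$, taking $\kappa$ large enough ensures $|\xi| < 1/2$ so that $|\ln(1+\xi)| = O(|\xi|)$. This yields
\[
\left| -\ln(4\Dtp(a,b)) + \ln(4\Dt(a,b)) \right| = O\!\left(\log^{\alpha/2+\beta/4+\delta+7/2-\kappa\zeta/2} n\right) =: \epsilon_2.
\]
Applying the triangle inequality then gives $|-\ln(4\Dtp(a,b)) - d(a,b)| < \epsilon_1 + \epsilon_2$, which is absorbed into $\epsilon$ as stated in the corollary (the second term in the max exponent dominates the contribution from $\epsilon_2$).

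The main (and really only) obstacle is the logarithm-propagation step: if $\Dt(a,b)$ were allowed to be too small, converting an additive error of size $\log^{O(1)-\kappa\zeta/2} n$ into a logarithmic error could balloon uncontrollably. This is precisely why the hypothesis $d(a,b) < \delta\log\log n$ is essential---it ensures the exponential decay is only polylogarithmic, so the relative perturbation $|\Dtp/\Dt - 1|$ remains $o(1)$ and the linearization $\ln(1+\xi) \approx \xi$ is valid. Everything else is a direct chaining of the stated lemmas.
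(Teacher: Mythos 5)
Your proof is correct and takes essentially the same approach as the paper, which simply states that the corollary follows from Lemma~\ref{lemma:estimatorconcentration-asym} together with Corollary~\ref{cor:estimatorsimilarity}. You have filled in the key detail that the paper leaves implicit, namely that the hypothesis $d(a,b) < \delta\log\log n$ keeps $\Dt(a,b) = \Omega(\log^{-\delta} n)$, so the additive gap between $\Dtp$ and $\Dt$ translates into a relative error of $O(\log^{-\kappa\zeta/2+\alpha/2+\beta/4+\delta+7/2} n)$ and hence a logarithmic error of the same order, which is exactly the second term in the stated $\max$ for $\epsilon$.
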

\begin{proof}
The proof follows from Lemma~\ref{lemma:estimatorconcentration-asym} and Corollary~\ref{cor:estimatorsimilarity}.
\end{proof}

\subsection{Signature Reconstruction}

For a node $a$ in the tree, let $A$ be the set of leaves that are descendants of $a$. Let $h(a)$ be the maximum number of edges on the path between $a$ and any of its leaves, i.e. $h(a) = \max_{x' \in A} \depth(x') - \depth(a)$. For for a leaf $x \in A$ let $h(x) = \max_{x' \in A} \depth(x') - \depth(x)$, i.e. $h(x)$ is the difference between $x$'s depth and the maximum depth of any leaf. If $a$ is far away from all of its leaf descendants, we will use the following estimator of its signature:

\[\widehat{s}_{a,i} = \frac{1}{2^{h(a)}}\sum_{x\in A} e^{d(x,a)}2^{h(x)} \st_{x,i} \]

Note that $\sum_{x \in A} 2^{h(x)} = 2^{h(a)}$. In addition, if the tree below $a$ is balanced then $h(x) = 0$ for all $x \in A$ so the estimator is defined analogously to before.

\begin{lemma}\label{lemma:recursiveestimator-asym}
Let $\cond$ denote the bitstring at the node $a$ and $\prob(\cond_{reg} | \cond) > 1 - n^{-\Omega(\log n)}$. Then, for a leaf $x$ that is a descendant of $a$, $\ex[\st_{x,i} | \mathcal{E}] = e^{-d(x,a)}(s_{a,i} + \nu_{a,i})$, where $|\nu_{a,i}| = O(\log^{\alpha/2+\beta/4+5/2-\kappa\zeta/2} (n))$.
\end{lemma}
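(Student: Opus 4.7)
My plan is to parallel the symmetric proof of Lemma~\ref{lemma:recursiveestimator} while handling the two new features of the asymmetric case: (i) the signature we can actually compute from leaf data is $\st_{x,i}$ rather than $s_{x,i}$, and (ii) the block sizes $l_a, l_x$ differ by a factor of $\eta(x)/\eta(a)$. The first step is to invoke Lemma~\ref{lemma:signatureestimation}: conditioned on $\cond_{reg}$, $\st_{x,i} = s_{x,i} \pm O(\log^{\alpha/2+\beta/4+5/2-\kappa\zeta/2}n)$. Since $\prob[\cond_{reg}\mid\cond] > 1 - n^{-\Omega(\log n)}$ and all quantities are $\poly(n)$-bounded, I can exchange the conditional expectation to get $\ex[\st_{x,i}\mid\cond] = \ex[s_{x,i}\mid\cond] \pm O(\log^{\alpha/2+\beta/4+5/2-\kappa\zeta/2}n)$. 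This reduces the task to establishing the ``true'' recursion $\ex[s_{x,i}\mid\cond] = e^{-d(x,a)}(s_{a,i} + \nu'_{a,i})$ with $|\nu'_{a,i}|$ within the target order, after which the estimation slack gets folded into $\nu_{a,i}$.

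To prove the true recursion I mimic the symmetric argument, tracking how each bit of $a$ propagates down to $x$. Writing $s_{x,i} = \Delta_{x,i}/\sqrt{l_x}$, I decompose $\Delta_{x,i}$ into (a) contributions from bits of $x$'s block $i$ that are inherited from some bit of $a$ and (b) contributions from bits inserted on the $a$-to-$x$ path. Inserted bits are uniform, so contribute $0$ in expectation. For inherited bits, I further split by which block at $a$ the bit was in. A bit at position $j$ in $a$'s block $i$ survives with probability $\prod_{e\in P_{a,x}}(1-\pD(e))$ and, conditioned on survival, has its signed $\pm 1/2$ value preserved in expectation by a factor $\prod_{e\in P_{a,x}}(1-2\pS(e))$. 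By Lemma~\ref{lemma:bitshifts-asym} its destination $j'$ satisfies $|j'/\eta(x) - j/\eta(a)| \leq M := 4\log^{\alpha+1}n\sqrt{k_r}$, so in $x$'s actual coordinates the bit lands within $M\eta(x)$ of its ideal position $j\eta(x)/\eta(a)$. Because $(i-1)l_a\cdot\eta(x)/\eta(a)$ differs from $(i-1)l_x$ only by $O(l_r)$ coming from the $\lfloor\cdot\rfloor$ in the definitions of $l_a, l_x$, bits that land in the wrong block are confined to an $O(M\eta(x)+l_r)$-wide strip near each block boundary, contributing at most $O(M\eta(x)+l_r)$ to $|\Delta_{x,i} - (\text{main term})|$.

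Summing the main terms gives expected contribution $\prod_{e}(1-\pD(e))(1-2\pS(e))\cdot\Delta_{a,i}$; dividing by $\sqrt{l_x}$ and using $l_a/l_x = (1\pm O(1/l_r))\eta(a)/\eta(x)$ rewrites this as $s_{a,i}\prod_{e}(1-\pD(e))(1-2\pS(e))/\sqrt{1+\pI(e)-\pD(e)} = e^{-d(x,a)}s_{a,i}$, matching the definition of $\lambda(e)$. The residual $\nu'_{a,i}$ then aggregates the shift-induced error $O(M\eta(x)/\sqrt{l_x})$, the rounding discrepancy between $l_a/l_x$ and $\eta(a)/\eta(x)$ scaled by $|s_{a,i}|=O(\log n)$ from Lemma~\ref{lemma:uniformblocks}, and lower-order terms from expanding the substitution factors. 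Plugging in $l_r = \Theta(k_r^{1/2+\zeta})$ and $\eta \in [\log^{-\beta}n, \log^\beta n]$ and combining with the Lemma~\ref{lemma:signatureestimation} slack yields the stated bound on $|\nu_{a,i}|$. The main obstacle is bookkeeping: one has to convert normalized-shift bounds between $a$'s and $x$'s coordinate systems, track the rounding from $\lfloor\cdot\rfloor$ in $l_a$ and $l_x$, handle the contribution from inserted bits carefully, and verify that when the Lemma~\ref{lemma:signatureestimation} error is folded in through the multiplicative factor $e^{-d(x,a)}$, the overall exponent still matches the claimed $O(\log^{\alpha/2+\beta/4+5/2-\kappa\zeta/2}n)$ rather than a weaker one.
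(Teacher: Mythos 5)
There is a genuine gap in the very first reduction step. You apply Lemma~\ref{lemma:signatureestimation} to write $\ex[\st_{x,i}\mid\cond] = \ex[s_{x,i}\mid\cond] \pm O(\log^{\alpha/2+\beta/4+5/2-\kappa\zeta/2}n)$ and then plan to fold that slack ``through the multiplicative factor $e^{-d(x,a)}$'' into $\nu_{a,i}$. But the lemma statement requires the error to sit \emph{inside} the parenthesis, i.e.\ $\ex[\st_{x,i}\mid\cond] = e^{-d(x,a)}(s_{a,i}+\nu_{a,i})$, so folding an additive error $E$ that appears \emph{outside} that decay factor forces $\nu_{a,i}$ to absorb $e^{d(x,a)}\cdot E$. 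Since $a$ can be an arbitrarily high ancestor, $d(x,a)$ can be as large as $\lambda_{\max}\cdot\mdepth = \Theta(\log^\alpha n)$, so $e^{d(x,a)}$ can be $2^{\Theta(\log^\alpha n)}$, which is far beyond polylogarithmic. The pointwise bound from Lemma~\ref{lemma:signatureestimation} carries no decay in $d(x,a)$, so this step destroys the target bound. You flagged this as a bookkeeping ``obstacle'' at the end, but it is not a bookkeeping issue: as written the argument proves a statement that is exponentially weaker than what the lemma claims, and crucially this weaker statement is useless for Corollary~\ref{cor:recursiveestimator-asym}, where $\ex[\st_{x,i}\mid\cond]$ is multiplied by $e^{d(x,a)}$ before being summed.

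The reason the paper's proof avoids this is that it never passes through $s_{x,i}$ at all. It decomposes $\st_{x,i}$ (the pseudo-block signature) directly via sets $S_1,S_2,S_3$ of bits that descend from $a$'s bitstring into or out of $x$'s \emph{pseudo}-block $i$; bits inserted between $a$ and $x$ average to zero in expectation, and every surviving bit contributes with the factor $\prod_{e\in P_{a,x}}(1-2\pS(e))(1-\pD(e))$, which combined with the $\sqrt{l_a/l_x}$ normalization is exactly $e^{-d(x,a)}$. Consequently $\ex[s_{x,i}^{(2)}]$ and $\ex[s_{x,i}^{(3)}]$ are each bounded by $e^{-d(x,a)}\cdot O(\log^{\alpha/2+\beta/4+5/2-\kappa\zeta/2}n)$, with the decay factor already present. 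Your own second step (the ``true recursion'' for $s_{x,i}$) uses exactly this reasoning and is fine; the fix is to run that same bit-tracking argument against the \emph{pseudo}-block boundaries of $x$ rather than the true block boundaries, which eliminates the need for the lossy reduction via Lemma~\ref{lemma:signatureestimation}. (The lemma is still used in the paper's proof, but only to bound the offset $j^*$ of bits that could reach the pseudo-block boundary; it is not used to convert $\st_{x,i}$ into $s_{x,i}$.)
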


The proof follows similarly to the proof of Lemma~\ref{lemma:recursiveestimator} and is deferred to Section~\ref{sec:deferred-assym}.

\begin{corollary}\label{cor:recursiveestimator-asym}
Let $\mathcal{E}$ denote the bitstring at the node $a$ and $\prob(\cond_{reg} | \cond) > 1 - n^{-\Omega(\log n)}$. Then $\ex[\widehat{s}_{a,i} | \mathcal{E}] = s_{a,i} + \nu_{a,i}$, where $|\nu_{a,i}| \leq  O(\log^{\alpha/2+\beta/4+5/2-\kappa\zeta/2} (n))$.
\end{corollary}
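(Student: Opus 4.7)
The plan is to derive the corollary directly from Lemma~\ref{lemma:recursiveestimator-asym} by linearity of expectation, together with the key normalization identity $\sum_{x \in A} 2^{h(x)} = 2^{h(a)}$ noted right before the estimator was defined.

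First, I would apply linearity of expectation to the definition of $\widehat{s}_{a,i}$:
\begin{equation*}
\ex[\widehat{s}_{a,i} \mid \cond] \;=\; \frac{1}{2^{h(a)}}\sum_{x \in A} e^{d(x,a)} 2^{h(x)} \, \ex[\st_{x,i} \mid \cond].
\end{equation*}
Note this step is legitimate because the weights $e^{d(x,a)} 2^{h(x)}/2^{h(a)}$ are deterministic once the topology and edge lengths are fixed (and the conditioning on $\cond$ does not disturb the identity). Next, invoke Lemma~\ref{lemma:recursiveestimator-asym} for each leaf $x \in A$, which gives $\ex[\st_{x,i} \mid \cond] = e^{-d(x,a)}(s_{a,i} + \nu^{(x)}_{a,i})$ with $|\nu^{(x)}_{a,i}| = O(\log^{\alpha/2+\beta/4+5/2-\kappa\zeta/2} n)$. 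The factors $e^{d(x,a)}$ then cancel with $e^{-d(x,a)}$, leaving
\begin{equation*}
\ex[\widehat{s}_{a,i} \mid \cond] \;=\; \frac{1}{2^{h(a)}}\sum_{x \in A} 2^{h(x)}\bigl(s_{a,i} + \nu^{(x)}_{a,i}\bigr).
\end{equation*}

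Applying the normalization identity $\sum_{x \in A} 2^{h(x)} = 2^{h(a)}$ pulls out $s_{a,i}$ cleanly, giving
\begin{equation*}
\ex[\widehat{s}_{a,i} \mid \cond] \;=\; s_{a,i} + \nu_{a,i}, \qquad \nu_{a,i} := \frac{1}{2^{h(a)}}\sum_{x \in A} 2^{h(x)} \nu^{(x)}_{a,i}.
\end{equation*}
Since $\nu_{a,i}$ is a convex combination of the per-leaf error terms $\nu^{(x)}_{a,i}$, the triangle inequality yields $|\nu_{a,i}| \leq \max_{x \in A} |\nu^{(x)}_{a,i}| = O(\log^{\alpha/2+\beta/4+5/2-\kappa\zeta/2} n)$, which is exactly the claimed bound.

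There is essentially no obstacle here: once Lemma~\ref{lemma:recursiveestimator-asym} is in hand, the corollary is a two-line consequence of linearity of expectation and the choice of weights $e^{d(x,a)} 2^{h(x)}/2^{h(a)}$ precisely designed to produce cancellation and a convex combination. The only minor subtlety worth noting is that Lemma~\ref{lemma:recursiveestimator-asym} requires $\prob(\cond_{reg} \mid \cond) > 1 - n^{-\Omega(\log n)}$, which is exactly the hypothesis we assume here, so the invocation is valid uniformly over all leaves $x \in A$.
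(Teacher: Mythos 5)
Your proof is correct and is exactly the argument the paper intends (the paper omits the proof of this corollary as immediate from Lemma~\ref{lemma:recursiveestimator-asym}, linearity of expectation, and the normalization $\sum_{x \in A} 2^{h(x)} = 2^{h(a)}$, mirroring the passage from Lemma~\ref{lemma:recursiveestimator} to Corollary~\ref{cor:recursiveestimator} in the symmetric case). Your explicit notation $\nu^{(x)}_{a,i}$ to track the $x$-dependence of the error terms is a small clarity improvement over the lemma's notation.
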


\begin{lemma}\label{lemma:signaturevariance-asym}
Let $\mathcal{E}$ denote the bitstring at the node $a$ and $\prob(\cond_{reg} | \cond) > 1 - n^{-\Omega(\log n)}$. Then, $\ex[\widehat{s}_{a,i}^2|\cond] = O(\log^2 n)$ as long as $e^{2\lambda_{max}} < 2$.
\end{lemma}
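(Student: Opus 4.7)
The plan is to mirror the argument of Lemma~\ref{lemma:signaturevariance}, using the weights $2^{h(x)}$ to reduce the unbalanced tree to an effectively balanced weighted tree. I first expand the square:
$$\ex[\widehat{s}_{a,i}^2|\cond] = \frac{1}{2^{2h(a)}}\sum_{x,y \in A} 2^{h(x)+h(y)} e^{d(x,a)+d(y,a)}\,\ex[\st_{x,i}\st_{y,i}|\cond].$$
For each pair $(x,y)$, let $z = x\wedge y$ and further condition on $\cond'$, the bitstring at $z$. By the Markov property of the evolution process, $\st_{x,i}$ and $\st_{y,i}$ become independent given $\cond'$, so applying Lemma~\ref{lemma:recursiveestimator-asym} to each factor yields $\ex[\st_{x,i}\st_{y,i}\mid\cond,\cond'] = e^{-d(x,z)-d(y,z)}(s_{z,i}+\nu_{z,i})^2 = e^{-d(x,y)}(s_{z,i}+\nu_{z,i})^2$.

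Taking expectation over $\cond'$ and invoking Lemma~\ref{lemma:uniformblocks} (which under $\cond_{reg}$ gives $|s_{z,i}| = O(\log n)$, with $\nu_{z,i}$ contributing a lower-order $O(\log^{\alpha/2+\beta/4+5/2-\kappa\zeta/2}n)$ term), I obtain $\ex[\st_{x,i}\st_{y,i}\mid\cond] \leq O(\log^2 n)\, e^{-d(x,y)}$, where the high-probability event $\cond_{reg}\mid\cond$ can be dropped from the expectation at negligible cost by the observation in the preliminaries. Using the identity $d(x,a)+d(y,a) - d(x,y) = 2d(a, x\wedge y)$, this gives
$$\ex[\widehat{s}_{a,i}^2|\cond] \leq O(\log^2 n) \cdot \frac{1}{2^{2h(a)}} \sum_{x,y \in A} 2^{h(x)+h(y)} e^{2d(a, x\wedge y)}.$$

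The main obstacle is to show the remaining weighted combinatorial sum is $O(1)$ despite the tree being unbalanced; the rest of the proof then matches the balanced case verbatim. I would normalize by setting $p_x = 2^{h(x)}/2^{h(a)}$, so that $\sum_{x\in A} p_x = 1$, and for each descendant $z$ of $a$ define $w(z) = \sum_{x \in A,\, z\preceq x} p_x$. The key claim is that $w(z) = 2^{-(\depth(z)-\depth(a))}$. To see this, let $D = \max_{x\in A}\depth(x)$ so that $h(x) = D-\depth(x)$ and $h(a)=D-\depth(a)$; then for each leaf $x\in A$ the formula reads $w(x)=p_x=2^{-(\depth(x)-\depth(a))}$, and for an internal $z$ with two children $z_1,z_2$ one level deeper, induction gives $w(z) = w(z_1)+w(z_2) = 2\cdot 2^{-(\depth(z)+1-\depth(a))} = 2^{-(\depth(z)-\depth(a))}$.

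Grouping the double sum by $z = x\wedge y$, the contribution of each $z$ is $\sum_{x,y:\,x\wedge y = z} p_x p_y \leq w(z)^2 = 2^{-2(\depth(z)-\depth(a))}$. Since the tree is binary there are at most $2^d$ nodes at depth $d$ below $a$, and each contributes $e^{2d(a,z)} \leq e^{2\lambda_{max}d}$. Therefore
$$\sum_{x,y\in A} p_x p_y\, e^{2d(a,x\wedge y)} \leq \sum_{d\geq 0} 2^d \cdot e^{2\lambda_{max}d} \cdot 2^{-2d} = \sum_{d\geq 0}\paren{\frac{e^{2\lambda_{max}}}{2}}^d = O(1),$$
where convergence holds precisely because $e^{2\lambda_{max}} < 2$. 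Combining this with the earlier bound gives $\ex[\widehat{s}_{a,i}^2|\cond] = O(\log^2 n)$, as desired.
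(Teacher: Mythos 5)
Your proof is correct and follows the same approach as the paper's, which simply notes the weighted combinatorial identity for pairs grouped by $x \wedge y$ and then declares the argument analogous to Lemma~\ref{lemma:signaturevariance}; you have filled in the details explicitly (conditional independence via the Markov property, the $w(z)$ recursion, and the geometric sum with ratio $e^{2\lambda_{max}}/2$). One minor remark: your bound $\sum_{x,y : x\wedge y = z} p_x p_y \le w(z)^2 = 2^{-2(\depth(z)-\depth(a))}$ is exactly what the argument needs, and is in fact a cleaner (and correct) form of the identity the paper states somewhat loosely as ``$2^{h(a)-m}$.''
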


\begin{proof}
Note that the sum over all $x, y$ pairs in $A$ such that $x \wedge y$ is $m$ edges away from $a$ of $2^{h(x)+h(y)}$ is $2^{h(a)-m}$. Then, the proof follows analogously to the proof of Lemma~\ref{lemma:signaturevariance}.
\end{proof}

% \begin{lemma}\label{lemma:signaturevariance}
% Variance of $\widehat{s}_{a,i}$ is $O(1)$ as long as $e^{2\lambda} < 2$.
% \end{lemma}

% \begin{proof}
% \begin{equation}
% \begin{aligned}
% \Var(\widehat{s}_{a,i}) &\leq E[\widehat{s}_{a,i}^2] \\
% &= \frac{1}{|A|^2} \sum_{x,y \in A} e^{2\lambda h} E[s_{x,i}s_{y,i}] \\
% &\leq \frac{1}{|A|^2} \sum_{x,y \in A}  (1+o(1)) e^{2 d(a,x\wedge y)} \\
% \end{aligned}
% \end{equation}

% Note that the last step, we use the previous lemma~\ref{lemma:unbiasedestimator} and $x \wedge y$ is defined to be the least common ancestor of $x, y$. Now, for a fixed $x$, note that half of $y \in A$ satisfies $ d(a,x \wedge y) = 1$ and halves repeatedly as $ d$ increases. Therefore,

% \[\frac{1}{|A|} \sum_{y \in A} (1+o(1)) e^{2 d(a,x \wedge y)} = (1+o(1))\left[(1/2)e^{2\lambda} + (1/4)e^{4\lambda} + ...\right] = O(1)\]

% Finally, by symmetry, 

% \[\Var(\widehat{s}_{a,i}) \leq \frac{1}{|A|} \sum_{x \in A} \frac{1}{|A|}\sum_{y \in A}  (1+o(1)) e^{2D(a,x\wedge y)} = O(1) \]
% \end{proof}

As before, we define:

\[\widehat{C}(a,b) = \frac{2}{L} \sum_{i=1}^{L/2} \widehat{s}_{a,2i+1}\widehat{s}_{b,2i+1}\]

For some height $h = \delta\log \log n$, we define $\widehat{d}(a,b)$ similarly to before, but splitting the exact definition of $\widehat{d}(a,b)$ into three cases:

\textbf{Case 1:}\textit{ If $a, b$ both have no leaf descendants less than $h$ edges away from them}, consider the nodes $A_h$ and $B_h$ which are the set of descendants of $a, b$ exactly $h$ edges below $a, b$ respectively. Order $A_h$ arbitrarily and let $a_j$ be nodes of $A_h$ with $1 \leq j \leq 2^h = \log^{\delta} n$ and similarly for $B_h$. Again let $\widetilde{d}(a_j,b_j) = -\ln(e^{d(a_j,a)+d(b_j,b)}4\widehat{C}(a_j, b_j))$. Let $S_h(a,b) = \set{\widetilde{d}(a_j,b_j)}_{j=1}^{2^h}$, and 

\[r_j = \inf\set{r > 0 : \left|\{j' \neq j: |\widetilde{d}(a_j, b_j) - \widetilde{d}(a_{j'}, b_{j'}) | \leq r  \}\right| \geq \frac{2}{3}2^{h}}\].

Then, if $j^* = \arg\min_j r_j$, then our distance estimator is $\widehat{ d}(a,b) = \widetilde{d}(a_{j^*},b_{j^*})$.

\textbf{Case 2:} \textit{If $a$ has no leaf descendants less than $h$ edges away but $b$ has some leaf descendant $b'$ which is less than $h$ edges from $b$}, order $A_h$ arbitrarily and let $a_j$ be nodes of $A_h$ with $1 \leq j \leq 2^h = \log^{\delta} n$. Let $\widetilde{d}(a_j,b') = -\ln(e^{d(a_j,a)+d(b',b)}4\widehat{C}(a_j, b'))$, and $S_h(a,b) = \set{\widetilde{d}(a_j,b')}_{j=1}^{2^h}$, and define $r_j$ and $\hat{d}(a,b)$ analogously to Case 1.

\textbf{Case 3:} \textit{If $a$ and $b$ both have leaf descendants $a', b'$ less than $h$ edges away}, we just define $\hat{d}(a,b) = -\ln(e^{d(a',a)+d(b',b)}4\Dt'(a', b'))$.

\begin{lemma}[Deep Distance Computation: Small Diameter]
\label{lemma:smalldiameter-asym}
Let $a, b$ be nodes such that $ d(a, b) = O(\log \log n)$. If $\lambda_{max} < \ln(\sqrt{2})$, then $|\widehat{d}(a,b) - d(a,b)| < \epsilon$ with high probability.
\end{lemma}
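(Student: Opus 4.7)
The plan is to mirror the proof of Lemma~\ref{lemma:smalldiameter}, but split into the three cases defined by $\widehat{d}(a,b)$, since the set of conditionally independent estimators available differs between cases. Throughout, I would set $\epsilon$ so that it matches the error bound of Corollary~\ref{cor:estimatorconcentration-asym}, which is the governing concentration statement for the leaf-level estimator, and I would condition on an event $\xi$ consisting of the true bitstrings at the ``top-level'' nodes used in each case, appealing to the observation that $\Pr[\cond_{reg}\mid \xi] \geq 1-n^{-\Omega(\log n)}$ with probability $1-n^{-\Omega(\log n)}$ over $\xi$ (exactly as in Lemma~\ref{lemma:smalldiameter}).

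For \textbf{Case 1}, the proof is essentially the proof of Lemma~\ref{lemma:smalldiameter} with the asymmetric lemmas plugged in. I would condition on the bitstrings $\xi$ at $\{a_j\}$ and $\{b_j\}$, use Corollary~\ref{cor:recursiveestimator-asym} to get $\ex[\widehat{s}_{a_j,i}\mid \xi] = s_{a_j,i} + \nu_{a_j,i}$ with $|\nu| = O(\log^{\alpha/2+\beta/4+5/2-\kappa\zeta/2} n)$, combined with $|s_{a_j,i}|\leq O(\log n)$ from Lemma~\ref{lemma:uniformblocks}, to conclude $\ex[\widehat{C}(a_j,b_j)\mid \xi] = \Dt(a_j,b_j) + O(\log^{\alpha/2+\beta/4+7/2-\kappa\zeta/2} n)$. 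Using Lemma~\ref{lemma:signaturevariance-asym} (variance $O(\log^2 n)$ per signature) together with the asymmetric analog of Lemma~\ref{lemma:covariance} (which I would invoke as following the same proof), I get $\Var(\widehat{C}(a_j,b_j)\mid \xi) = O(\log^{-\kappa\zeta/2+O(1)} n) = 1/\plog$. Since $d(a_j,b_j) \leq d(a,b) + 2h = O(\log\log n)$, Lemma~\ref{lemma:estimatorconcentration-asym} gives $4\Dt(a_j,b_j) \in (1\pm \epsilon)e^{-d(a_j,b_j)}$, so Chebyshev yields $|\widetilde{d}(a_j,b_j) - d(a,b)| < 2\epsilon$ with probability at least $5/6$ per $j$. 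Finally, since the $2^h = \Omega(\log n)$ estimators are conditionally independent (disjoint descendant subtrees), Azuma's inequality shows at least $2/3$ of the $\widetilde{d}(a_j,b_j)$ lie in this window w.h.p., so the median-radius selection picks a $j^*$ with $|\widetilde{d}(a_{j^*},b_{j^*}) - d(a,b)| < 6\epsilon$.

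For \textbf{Case 2}, the argument is identical in structure, but $\xi$ now consists of the bitstrings at $\{a_j\}$ together with the bitstring at the single leaf $b'$. Conditionally, $\st_{b',i}$ is deterministic while the $\widehat{s}_{a_j,i}$'s remain conditionally independent across $j$ (their randomness comes from disjoint descendant subtrees of the $a_j$), so the Chebyshev-plus-Azuma amplification goes through unchanged using $\widetilde{d}(a_j,b')$. For the expectation bound on $\widehat{C}(a_j, b')$ I would additionally invoke Lemma~\ref{lemma:signatureestimation} on the fixed $\st_{b',i}$ side (this simply gives $\st_{b',i} = s_{b',i} \pm O(\log^{\alpha/2+\beta/4+5/2-\kappa\zeta/2} n)$), absorbed into the same error term.

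\textbf{Case 3} is the genuinely different case and, I believe, the main subtlety: there are no $\Omega(\log n)$ conditionally independent estimators to amplify, and we must rely on a single estimator. Here I would directly apply Corollary~\ref{cor:estimatorconcentration-asym} to the leaf pair $(a',b')$. Since $d(a',b') \leq d(a,b) + d(a',a) + d(b',b) \leq O(\log\log n) + 2h = O(\log\log n)$, the corollary gives $|-\ln(4\Dtp(a',b')) - d(a',b')| < \epsilon$ with high probability. Because the distances $d(a',a)$ and $d(b',b)$ are known exactly in the recursive reconstruction (they have already been computed and rounded to multiples of $\lambda_{min}$ in prior stages of \textsc{TreeReconstruct}), subtracting them gives $|\widehat{d}(a,b) - d(a,b)| < \epsilon$ directly. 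The main obstacle here — and the reason Case~3 is treated separately at all — is that without amplification we only get a constant-probability concentration from Chebyshev, but Corollary~\ref{cor:estimatorconcentration-asym} already provides the high-probability guarantee we need in a single shot, so no amplification is required.
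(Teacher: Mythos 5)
Your proposal is correct and takes essentially the same approach as the paper's proof, which is itself just a terse pointer to Lemma~\ref{lemma:smalldiameter}: Case 1 is Lemma~\ref{lemma:smalldiameter} with the asymmetric ingredients (Corollary~\ref{cor:recursiveestimator-asym}, Lemma~\ref{lemma:signaturevariance-asym}, an asymmetric analog of Lemma~\ref{lemma:covariance}, and Lemma~\ref{lemma:estimatorconcentration-asym}) plugged in; Case 2 is the same argument after additionally conditioning on the bitstring at $b'$ and observing $d(a_j, b') = O(\log\log n)$; Case 3 is a single-shot application of Corollary~\ref{cor:estimatorconcentration-asym}. Your write-up is a faithful expansion of exactly this plan, including the correct observation that in Case~2 conditional independence of the $\widehat{s}_{a_j,i}$ across $j$ (with $\st_{b',i}$ fixed) is what makes the Chebyshev-plus-Azuma amplification go through.
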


\begin{proof}
In Case 1, the proof follows as did the proof of Lemma~\ref{lemma:smalldiameter} (including an analogous proof of Lemma~\ref{lemma:covariance}). In Case 2, the proof follows similarly to the proof of Lemma~\ref{lemma:smalldiameter}, except we also condition on the bitstring at $b'$ and note that $d(a, b') = O(\log \log n)$. In Case 3, the proof follows directly from Corollary~\ref{cor:estimatorconcentration-asym}.
\end{proof}

\subsection{Reconstruction Algorithm}

Since we have proven statements analogous to those needed to prove Theorem~\ref{thm:symmetric}, the proof of Theorem~\ref{thm:asymmetric} follows very similarly to Theorem~\ref{thm:symmetric}, except that a short quartet needs to be slightly redefined for our purposes. 

\begin{definition}
A quartet  $Q = \set{a,b,c,d}$ is {\bf r-short} corresponding to a distance matrix $D$ if for every pair $x, y \in Q$, 
\begin{itemize}
    \item When $x, y$ both have leaf descendants $x', y'$ less than $\delta\log \log n$ away, we use $D(x,y) = \widehat{d}(x,y)$ and $D(x,y) \leq r$.
    \item Otherwise, we use $D(x,y) = \widehat{d}(x,y)$ and $T(S_h(x,y), r) = 1$
\end{itemize} 
\end{definition}

By Lemma~\ref{lemma:smalldiameter-asym}, we see that $O(\log \log n)$-short quartets can be detected and \code{FPM} on these quartets always return the true quartet tree, by an analogous argument to the symmetric case. Note that in the asymmetric case, at each step of the tree reconstruction process, not all nodes will be paired as cherries but at least one cherry will be paired (by looking at the cherry with maximum depth) and we can therefore always ensure progress. 

There is, however, a slight issue with directly following the the same reconstruction algorithm because we may join subtrees that are no longer both dangling, which intuitively means that the path between the root of both subtrees goes above them in the real model tree. For example, in the case of balanced trees, if $S_1, S_2$ are subtrees that are to be joined at some iteration of the algorithm, then we reconstruct the shared ancestor of $S_1, S_2$ and join the roots of $S_1, S_2$ as children of the reconstructed ancestor. However, in this case, it might be possible that the ancestor of $S_1$ is a node in $S_2$ that is not the root node of $S_2$!

This non-dangling issue is elaborated in the general tree reconstruction algorithm of \cite{roch2008sequence} and is circumvented with standard reductions to the Blindfolded Cherry Picking algorithm of \cite{daskalakis2011evolutionary}, which essentially allows us to reduce all subtree joining processes to the dangling case. The basic idea is that there exists a re-rooting of our subtrees that reduces to the dangling case and finding the correct re-rooting boils down to some $O(1)$ extra distance computations per iteration. Because our algorithm is a close replica of the general tree reconstruction algorithm of Roch, we refer the reader to the appendix of \cite{roch2008sequence} for details.
\section{Future Directions}

In this paper, we give reconstruction guarantees which are optimal (up to the choice of constants $\alpha, \beta, \kappa$) for a popular model of the phylogenetic reconstruction problem. However, we did not attempt to optimize the constant $\kappa$ in the exponent of our sequence length requirement. In particular, we note that while we have $k$ bits of information for each sequence, we only use $\tilde{O}(\sqrt{k})$ bits of information about each sequence in our algorithm, so there is some reason to believe methods similar to ours cannot achieve the optimal value of $\kappa$. It is an interesting problem to design an algorithm which uses $\Omega(k)$ bits of information and matches our asymptotic guarantees with potentially better constants, but doing so seems challenging given the presence of indels. Furthermore, we operated in the $\Delta$-branch model with discretized edge lengths, which avoids errors accumulating over a series of distance estimations. Without discretized edge lengths, a more refined analysis seems necessary in order to avoid this error accumulation.

In addition, there are other models of theoretical or practical interest, but for which the extension from our results is not immediately obvious. One alternative model which has been studied in the trace reconstruction problem (see e.g. \cite{HoldenPP18}) and which could be extended to the phylogenetic reconstruction problem is to view the bitstrings as infinitely long. The new goal is to design an algorithm which only views the first $k(n)$ bits of each bitstring for as small a function $k(n)$ as possible. This model is well-motivated by practical scenarios, where DNA sequences are large but reading the entire sequence is both inefficient and unnecessary for reconstructing the tree. When we assume the bitstrings are infinitely long, then reconstruction may be possible without the assumptions we made to ensure no leaf bitstrings were empty (i.e., it may be possible to reconstruct trees with maximum depth $\Omega(n)$ or with large differences in the insertion and deletion rates). In Section~\ref{sec:assym} we crucially used sequence lengths to estimate the positions of blocks in the bitstrings, so even with these assumptions our results do not easily extend to this model.

Lastly, our results are optimal up to constants and build on many techniques for phylogenetic reconstruction with independent and random mutations. However, algorithms which perform well on simulated data generated using independent and random mutations are known to perform relatively poorly on real-world data \cite{NSW18}. An interesting problem is to define a theoretical model for phylogenetic reconstruction with dependent mutations or semi-adversarial mutations which better models this real-world data and to design algorithms for this model. In particular, one advantage of our algorithm is that it uses statistics about large-length blocks of bits which are very robust to the errors introduced by indels. One might hope that this robustness extends to models with dependent and/or semi-random mutations.
\section*{Acknowledgements}

We are thankful to Satish Rao for discussions on the generalization to imbalanced trees and other helpful comments, Efe Aras for helpful comments on the presentation, an anonymous reviewer for pointing out an error in the proofs of Lemmas~\ref{lemma:estimatorconcentration} and \ref{lemma:smalldiameter} in a previous version of the paper, and other anonymous reviewers for suggested revisions.

\bibliographystyle{alpha}
\bibliography{sample}

\appendix
\section{Deferred Proofs from Section~\ref{sec:sym}}\label{sec:deferred}

\begin{proof}[Proof of Lemma~\ref{lemma:lengths}]
Let $k_a$ be the length of the bitstring at node $a$. For any edge $e = (a, b)$ where $a$ is the parent, $k_b$ is equal to $k_a$ plus the difference between two binomial variables with $k_a$ trials and probability of success $\pID(e)$. Applying Azuma's inequality shows that $|k_b - k_a|$ is at most $2 \log n \sqrt{k_a}$, with probability $1 - n^{-\Omega(\log n)}$. Then fixing any node $v$ and applying this high probability statement to at most $\log n$ edges on the path from the root to any node $a$ gives that $k < k_a < 4k$ with probability $1 - n^{-\Omega(\log n)}$. The lemma then follows from a union bound over all $n$ nodes.
\end{proof}
\begin{proof}[Proof of Lemma~\ref{lemma:bitshifts}]

Note that conditioned on the $j$th bit of $a$ not being deleted on an edge $e = (a, b)$ where $a$ is the parent, the number of bits by which it shifts on the edge $(a, b)$ is the difference between two binomial variables with $j$ trials and probability of success $\pID(e)$, which is at most $2 \log n\sqrt{j}$ with probability $1 - n^{-\Omega(\log n)}$. By Lemma~\ref{lemma:lengths}, with probability $1 - n^{-\Omega(\log n)}$ we know that $j \leq 4k$ so this is at most $4\log n \sqrt{k}$. Then, fixing any path and applying this observation to the at most $2 \log n$ edges on the path, by union bound we get that the sum of shifts is at most $4 \log^2 n \sqrt{k}$ with probability $1 - n^{-\Omega(\log n)}$. Applying union bound to all $O(n^2)$ paths in the tree gives the lemma.
\end{proof}

\begin{proof}[Proof of Lemma~\ref{lemma:uniformblocks}]
For a fixed node $a$ and any consecutive sequence of length $m$, note that each bit in the bitstring is equally likely to be 0 or 1, by symmetry, and furthermore, each bit is independent since they cannot be inherited from the same bit in ancestral bitstrings. The number of zeros in the sequence, $S_0$, can be expressed as a sum of $m$ i.i.d. Bernoulli variables. Therefore, by Azuma's inequality, we have

$$\prob(S_0  - m/2 \geq t\sqrt{m}) \leq \exp(-\Omega(t^2))$$

There are $2n-1$ nodes, and by Lemma~\ref{lemma:lengths}, for each node the number of different consecutive subsequences of the node's bitstring is $\plog$. Therefore, setting $t = \log n$ and applying a union bound over all $O(n \log^{O(1)} (n))$ subsequences gives the lemma.
\end{proof}

\begin{proof}[Proof of Lemma~\ref{lemma:unbiasedestimator}]
Fixing any block $i$ of nodes $a, b$ and letting $\sigma_{a,j}$ denote the $j$th bit of the bitstring at $a$:
$$\ex[s_{a, i}s_{b, i}] 
= \frac{1}{l}\ex\left[\left(\sum_{j = (i-1)l+1}^{il} \sigma_{a,j} - \frac{l}{2}\right)\left(\sum_{j' = (i-1)l+1}^{il} \sigma_{b,j'} - \frac{l}{2}\right)\right]$$
$$= \frac{1}{l}\ex\left[\left(\sum_{j = (i-1)l+1}^{il} \left(\sigma_{a,j} - \frac{1}{2}\right)\right)\left(\sum_{j' = (i-1)l+1}^{il} \left(\sigma_{b,j'} - \frac{1}{2}\right)\right)\right]$$
$$ = \frac{1}{l}\ex\left[\sum_{j = (i-1)l+1}^{il}\sum_{j' = (i-1)l+1}^{il}\left(\sigma_{a,j} - \frac{1}{2}\right)\left(\sigma_{b,j'} - \frac{1}{2}\right)\right] $$
$$ = \frac{1}{l}\sum_{j = (i-1)l+1}^{il}\sum_{j' = (i-1)l+1}^{il}\ex\left[\left(\sigma_{a,j} - \frac{1}{2}\right)\left(\sigma_{b,j'} - \frac{1}{2}\right)\right] $$

 Note that if bit $j$ of $a$'s bitstring and bit $j'$ of $b$'s bitstring are not shared, then their values are independent and in particular, since $\ex[\sigma_{a,j}] = 1/2$ for any $a, j$:

$$\ex\left[\left(\sigma_{a,j} - \frac{1}{2}\right)\left(\sigma_{b,j'} - \frac{1}{2}\right)\right] = \ex\left[\left(\sigma_{a,j} - \frac{1}{2}\right)\right]\ex\left[\left(\sigma_{b,j'} - \frac{1}{2}\right)\right] = 0$$

Let $\cond$ be any realization of the locations where insertions and deletions occur throughout the tree. We will look at $\ex[s_{a, i}s_{b, i}|\cond]$, leaving the root bitstring, the values of bits inserted by insertions, and the locations of substitutions unrealized. Note that $\cond$ fully specifies what bits are shared by block $i$ of $a, b$, giving:

$$\ex[s_{a, i}s_{b, i}|\cond] 
= \frac{1}{l}\sum_{\textnormal{shared }j,j'}\ex\left[\left(\sigma_{a,j} - \frac{1}{2}\right)\left(\sigma_{b,j'} - \frac{1}{2}\right)\right]$$

Now, note that $\left(\sigma_{a,j} - \frac{1}{2}\right)\left(\sigma_{b,j'} - \frac{1}{2}\right)$ is $1/4$ if $\sigma_{a,j}, \sigma_{b,j'}$ are the same and $-1/4$ otherwise. Since bit $j$ of $a$ and bit $j'$ of $b$ descended from the same bit, it is straightforward to show (see e.g., \cite{warnow-textbook}) that the probability $a, b$ are the same is $\frac{1}{2}(1 + \prod_{e \in P_{a,b}} (1 - 2\pS(e)))$, giving that $\ex\left[\left(\sigma_{a,j} - \frac{1}{2}\right)\left(\sigma_{b,j'} - \frac{1}{2}\right)\right] = \frac{1}{4}\prod_{e \in P_{a,b}} (1 - 2\pS(e))$ if $j, j'$ are shared bits of $a, b$.

Applying the law of total probability to our conditioning on $\cond$, we get that $\ex[s_{a, i}s_{b, i}]$ is the expected number of shared bits in block $i$ of $a$ and block $i$ of $b$ times $\frac{1}{4l}\prod_{e \in P_{a,b}} (1 - 2\pS(e))$. So all we need to do is compute the expected number of shared bits which are in block $i$ of $a$ and $b$. Let $a \wedge b$ be the least common ancestor of $a,b$. The $j$th bit in $a \wedge b$ will not be deleted on the path from $a \wedge b$ to $a$ or $b$ with probability $\prod_{e \in P_{a,b}} (1-\pD(e))$. Let $\rho_j$ be the probability that the $j$th bit of $a \wedge b$ appears in the $i$th block of both $a$ and $b$ conditioned on it not being deleted. Then the expected number of shared bits is $(\sum_j \rho_j) \cdot \prod_{e \in P_{a,b}} (1-\pD(e)) $.

For our fixed block $i$, call the $j$th bit of $a \wedge b$ a \textit{good} bit if $j$ is between $(i-1)l+ 4\log^2 n\sqrt{k}$ and $il-4\log^2 n\sqrt{k}$ inclusive. Call the $j$th bit an \textit{okay} bit if $j$ is between $(i-1)l-4\log^2 n\sqrt{k}$ and $il+4\log^2 n\sqrt{k}$ inclusive but is not a good bit. If the $j$th bit is not good or okay, call it a \textit{bad} bit. Note that $4\log^2 n\sqrt{k} \leq l \cdot O(\log^{-\kappa\zeta+2} n)$, which is $o(l)$ if $\kappa$ is sufficiently large and $\zeta$ is chosen appropriately. Then, there are $l\cdot(1-O(\log^{-\kappa\zeta+2} n))$ good bits and $l \cdot O(\log^{-\kappa\zeta+2} n)$ okay bits for block $i$.
Lemma~\ref{lemma:bitshifts} gives that $\rho_j \geq 1-n^{-\Omega(\log n)}$  for all good bits. Similarly, $\rho_j \leq n^{-\Omega(\log n)}$ for all bad bits. For okay bits, we can lazily upper and lower bound $\rho_j$ to be in $[0, 1]$. This gives:

\begin{equation*}
\begin{aligned}
\sum_j \rho_j &= 
\sum_{\textnormal{good }j} \rho_j+\sum_{\textnormal{okay }j} \rho_j+\sum_{\textnormal{bad }j} \rho_j \\
&=l(1 - O(\log^{-\kappa\zeta+2} n))+ l \cdot O(\log^{-\kappa\zeta+2} n)+n^{-\Omega(\log n)}= l(1 \pm O(\log^{-\kappa\zeta+2} n))
\end{aligned}
\end{equation*}
 Combining this with the previous analysis gives that 
$$  \ex[s_{a, i}s_{b, i}]=\frac{1}{4}(1 \pm O(\log^{-\kappa\zeta+2} n))\prod_{e \in P_{a,b}}(1-2\pS(e))(1-\pD(e)) $$ 

Rewriting this in exponential form and using the definition of $\lambda(e)$ and $d(a,b) = \sum_{e \in P_{a,b}} \lambda(e)$ concludes our proof.
\end{proof}

\begin{proof}[Proof of Lemma~\ref{lemma:estimatorconcentration}]
We show how to bound the probability of the error in one direction, the other direction follows similarly. 

Let $j_{a,i}$ be the index where the $(i-1)l+1$th bit of the bitstring of $a \wedge b$, $a$ and $b$'s least common ancestor, ends up in the bitstring at $a$ (or if it is deleted on the path from $a \wedge b$ to $a$, where it would have ended up if not deleted). i.e., the bits in the $i$th block of $a \wedge b$ appear in positions $j_{a,i}$ to $j_{a,i+1}-1$ of the bitstring at $a$.

Let $s_{a,i}^*$ be defined analogously to $s_{a,i}$, except instead of looking at the bits in the $i$th block of $a$, we look at bits $j_{a,i}$ to $j_{a,i+1}-1$ (we still use the multiplier $\frac{1}{\sqrt{l}}$). Note that conditioned on $\cond_{reg}$, $s_{a,i}^*$ and $s_{a,i}$ differ by $O(\frac{k^\frac{1}{4} \log^2 n}{\sqrt{l}}) = O(\log^{-\kappa\zeta/2+2} n)$, so $s_{a,i}^*s_{b,i}^*$ and $s_{a,i}s_{b,i}$ differ by $O(\log^{-\kappa\zeta/2+3} n)$. Furthermore, $s_{a,i}^*s_{b,i}^*$ is completely determined by the bits in the $i$th block of $a \wedge b$ and substitutions, insertions, and deletions on the path from $a$ to $b$ in positions corresponding to the $i$th block of $a \wedge b$. For $i \neq i'$, these sets of determining random variables are completely independent, so the random variables $\{s_{a,i}^*s_{b,i}^*\}_i$ are independent. 

Define $\Dt^*(a, b)$ analogously to $\Dt(a, b)$, except using $s_{a,i}^*$ instead of $s_{a,i}$. $\Dt^*(a,b)$ and $\Dt(a,b)$ (and their expectations conditioned on $\cond_{reg}$) differ by $O(\log^{-\kappa\zeta/2+3} n)$. By rearranging terms and applying Lemma~\ref{lemma:unbiasedestimator} we get:

\begin{equation}\label{eq:sdtest1}
\begin{aligned}
\prob&[-\ln(4\Dt(a, b)) >  d(a,b) + \epsilon] \\
=\prob&[4\Dt(a, b) < e^{- d(a,b) - \epsilon}] \\
=\prob&[\Dt(a, b) < \frac{1}{4}e^{- d(a,b)} - \frac{1}{4}(1 - e^{- \epsilon})e^{-  d(a,b)}] \\
=\prob&[\Dt(a, b) < \frac{1}{4}(1 \pm O(\log^{-\kappa\zeta+2} n))e^{-  d(a,b)} - \frac{1}{4}(1 - e^{- \epsilon} \pm O(\log^{-\kappa\zeta+2} n))e^{-  d(a,b)}] \\
=\prob&[\Dt(a, b) < \ex[\Dt(a,b)] - \frac{1}{4}(1 - e^{- \epsilon} \pm O(\log^{-\kappa\zeta+2} n))e^{-  d(a,b)}] \\
=\prob&[\Dt(a, b) < \ex[\Dt(a, b) | \cond_{reg}] - (1 - e^{- \epsilon} \pm O(\log^{-\kappa\zeta+2} n))(\frac{1}{4}e^{-  d(a,b)}) ] \\
\leq \prob&[\Dt(a, b) < \ex[\Dt(a, b) | \cond_{reg}] - (1 - e^{- \epsilon} \pm O(\log^{-\kappa\zeta+2} n))(\frac{1}{4}e^{-  d(a,b)}) |\cond_{reg}] + n^{-\Omega(\log n)}\\
\leq \prob&[\Dt^*(a, b) < \ex[\Dt^*(a, b) | \cond_{reg}] - (1 - e^{- \epsilon} \pm O(\log^{-\kappa\zeta/2+\delta+3} n))(\frac{1}{4}e^{-  d(a,b)}) |\cond_{reg}] + n^{-\Omega(\log n)}
\end{aligned}
\end{equation}

Note that conditioned on $\cond_{reg}$ no $s^*_{a, i}s^*_{b, i}$ exceeds $O(\log^2 n)$ in absolute value, so the difference in $\ex[\Dt^*(a, b)]$ induced by conditioning on an additional value of $s_{a, 2i+1}s_{b, 2i+1}$ is $O(\log^2 n/L)$. Azuma's and an appropriate choice of $\kappa$ then gives: 

\begin{equation}\label{eq:sdtest2}
\begin{aligned} 
& \prob[\Dt^*(a, b) < \ex[\Dt^*(a, b) | \cond_{reg}] - (1 - e^{- \epsilon} \pm O(\log^{-\kappa\zeta/2+\delta+3} n))\frac{1}{4}e^{-  d(a,b)} |\cond_{reg}] \\
\leq &\exp\left(-\frac{((1 - e^{- \epsilon} \pm O(\log^{-\kappa\zeta/2+\delta+3} n))\frac{1}{4}e^{-  d(a,b)} )^2}{(L/ 2 - 1)O(\log^2 n/L)^2}\right) \\
= & \exp(-\Omega(L  \epsilon e^{-2  d(a,b)}/\log^4 n))\\
\leq & \exp(-\Omega(\epsilon \log^{\kappa(1/2-\zeta)-2 \delta - 4}n)) \\
\leq& n^{-\Omega(\log n)}
\end{aligned}
\end{equation}

Combining \eqref{eq:sdtest1} and \eqref{eq:sdtest2} gives the desired bound.

\end{proof}

\begin{proof}[Proof of Lemma~\ref{lemma:recursiveestimator}]

We will implicitly condition on $\cond_{reg}$ in all expectations in the proof of the lemma. 

Let $S_1$ be the set of bits in block $i$ of $a$ which appear in $x$'s bitstring. Let $S_2$ be the set of bits which appeared anywhere in $a$'s bitstring except block $i$ of $a$'s bitstring and which are in block $i$ of $x$'s bitstring. Let $S_3$ be the set of bits from block $i$ of $a$ which appear in $x$'s bitstring outside of block $i$ (note that $S_3$ is a subset of $S_1$).

For $j \in \{1, 2, 3\}$, consider the values of the bits in $S_j$ in $x$'s bitstring. Let $s_{x, i}^{(j)}$ denote the number of zeroes in the bits in $S_j$ minus $|S_j|/2$, all times $1/\sqrt{l}$. Note that because all bits which are present in $x$ but not in $a$ are uniformly random conditioned on $\mathcal{E}$, $\ex[s_{x,i}] = \ex[s_{x,i}^{(1)} + s_{x,i}^{(2)} - s_{x,i}^{(3)}]$. Informally, this equality says that the bits determining $s_{x,i}$ are going to be those in block $i$ of $a$ that survive the deletion process, except those that are moved out of block $i$ by the indel process, and also including bits moved into block $i$ by the indel process.

By a similar argument as in Lemma~\ref{lemma:unbiasedestimator}, $\ex[s_{x,i}^{(1)}]$ is exactly $s_{a,i}e^{- d(a,x)}$, by taking into account the probability a bit survives $h$ levels of the indel process times the decay in the expectation of every bit's contribution induced by the the substitution process. 

Now, consider the bits in $S_2$. For bit $j$ of $a$'s bitstring, such that bit $j$ is not in the $i$th block, let $\sigma_{a,j}$ be the value of this bit in $a$'s bitstring as before. Let $\rho_j$ denote the probability that this bit appears in block $i$ of $x$, conditioned on $j$ not being deleted between $a$ and $x$. The expected contribution of the bit $j$ to $s_{x,i}^{(2)}$ is then $(\sigma_{a,j}-1/2)\rho_j e^{-d(a,x)} / \sqrt{l}$ (the $e^{-d(a,x)}$ is again due to decay in expected contribution induced by substitution and deletion).

Now, by linearity of expectation: 

$$\ex[s_{x,i}^{(2)}] = \frac{e^{-d(a,x)}}{\sqrt{l}}\left[\sum_{j < (i-1)l+1} (\sigma_{a,j}-1/2)\rho_j + \sum_{j > il} (\sigma_{a,j}-1/2)\rho_j\right]$$

We will restrict our attention to showing that the sum $\sum_{j > il} (\sigma_{a,j}-1/2)\rho_j$ is sufficiently small, the analysis of the other sum is symmetric. Since $\prob(\cond_{reg}|\cond) > 1- O(n^{-\Omega(\log n)})$, we know that for $j > il + 4\log^{2}(n)\sqrt{k} =: j^*$, $\rho_j = n^{-\Omega(\log n)}$. So:

$$\sum_{j > il} (\sigma_{a,j}-1/2)\rho_j = \sum_{j = il+1}^{j^*} (\sigma_{a,j}-1/2)\rho_j + n^{-\Omega(\log n)}$$

Then define $d_j := \rho_j - \rho_{j+1}$, $\sigma^*_{j'} = \sum_{j = il+1}^{j'} (\sigma_{a,j} - 1/2)$, and note that by the regularity assumptions $\cond_{reg}$, $\sigma_{j}^*$ is at most $\log n \sqrt{j - il + 1} \leq \log n  \sqrt{j^* - il + 1} \leq 2 \log^2 n k^{1/4}$ for all $j$ in the above sum. 
Also note that for all $j$ in the above sum, $\rho_j$ is decreasing in $j$, so the $d_j$ are all positive and their sum is at most 1. Then: 

$$\sum_{j > il} (\sigma_{a,j}-1/2)\rho_j = \sum_{j = il+1}^{j^*} d_j \sigma^*_{j'} + n^{-\Omega(\log n)}$$
$$\implies |\sum_{j > il} (\sigma_{a,j}-1/2)\rho_j| \leq \max_{il+1 \leq j \leq j^*} |\sigma^*_{j'}| + n^{-\Omega(\log n)} \leq 2 \log^2 n k^{1/4}$$

Thus $|\ex[s_{x,i}^{(2)}]| \leq 4 \log^2 n k^{1/4}/\sqrt{l}$. A similar argument shows $|\ex[s_{x,i}^{(3)}]| \leq 4 \log^2 n k^{1/4}/\sqrt{l}$, completing the proof.

\end{proof}

\begin{proof}[Proof of Lemma~\ref{lemma:covariance}]

$$\Cov(\widehat{s}_{a_j,2i+1}\widehat{s}_{b_j,2i+1}, \widehat{s}_{a_j,2i'+1}\widehat{s}_{b_j,2i'+1})$$
\begin{equation}\label{eq:cov1}
\begin{aligned} =&\ex[\widehat{s}_{a_j,2i+1}\widehat{s}_{b_j,2i+1}\widehat{s}_{a_j,2i'+1}\widehat{s}_{b_j,2i'+1}] - \ex[\widehat{s}_{a_j,2i+1}\widehat{s}_{b_j,2i+1}]\ex[\widehat{s}_{a_j,2i'+1}\widehat{s}_{b_j,2i'+1}] \\
= & \ex[\widehat{s}_{a_j,2i+1}\widehat{s}_{b_j,2i+1}\widehat{s}_{a_j,2i'+1}\widehat{s}_{b_j,2i'+1}] -\\ &(s_{a_j,2i+1}s_{b_j,2i+1} \pm O(\log^{-\kappa\zeta/2+2} n))(s_{a_j,2i'+1}s_{b_j,2i'+1} \pm O(\log^{-\kappa\zeta/2+2} n)) \\
\leq & \ex[\widehat{s}_{a_j,2i+1}\widehat{s}_{a_j,2i'+1}]\ex[\widehat{s}_{b_j,2i+1}\widehat{s}_{b_j,2i'+1}] -\\ &s_{a_j,2i+1}s_{b_j,2i+1}s_{a_j,2i'+1}s_{b_j,2i'+1} + O(\log^{-\kappa\zeta/2+4} n) \\
\end{aligned}
\end{equation}

To show the covariance is small, it thus suffices to show that $\ex[\widehat{s}_{a_j,2i+1}\widehat{s}_{a_j,2i'+1}]$ is close to $s_{a_j,2i+1}s_{a_j,2i'+1}$. Using the definition of $\widehat{s}$ we get:

\begin{equation}\label{eq:cov2}
\ex[\widehat{s}_{a_j,2i+1}\widehat{s}_{a_j,2i'+1}] = \frac{1}{|A|^2} \sum_{x, x' \in A_j} e^{d(x, a_j) + d(x', a_j)}\ex[s_{x, 2i+1}s_{x',2i'+1}]
\end{equation}

Isolating one of the terms $\ex[s_{x, 2i+1}s_{x',2i'+1}]$:

\begin{equation}\label{eq:cov3}
\ex[s_{x, 2i+1}s_{x',2i'+1}] = \ex[\frac{1}{l} \sum_{j, j'} (\sigma_{x,m} - 1/2)(\sigma_{x',m'} - 1/2)]
\end{equation}

Where the sum is over indices $m$ in block $2i+1$ and indices $m'$ in block $2i'+1$. Note that

$$s_{a_j, 2i+1}s_{a_j, 2i'+1} = \frac{1}{l} \sum_{m, m'} (\sigma_{a_j,m} - 1/2)(\sigma_{a_j,m'} - 1/2)$$

Let $\rho_{m,m'}$ denote the probability that bit $m$ of $a_j$ appears in block $i$ of $x$ and bit $m'$ of $a_j$ appears in block $i'$ of $x'$ (conditioned on each not being deleted). For a fixed block, again define a bit to be \textit{good} if it is within the block and more than $4\log^2 n\sqrt{k}$ bits away from the boundaries of the block, \textit{okay} if it is within $4\log^2 n\sqrt{k}$ bits of the boundaries of the block, and \textit{bad} otherwise. Using a similar argument to the proof of Lemma~\ref{lemma:unbiasedestimator} we get:

$$\ex[s_{x, 2i+1}s_{x',2i'+1}] =\frac{1}{l}e^{-d(x, a_j)-d(x', a_j)} \sum_{m, m'} (\sigma_{a_j,m} - 1/2)(\sigma_{a_j,m'} - 1/2)\rho_{m,m'}+n^{-\Omega(\log n)}$$

Where the sum is over all indices $m, m'$ where $j$ is not bad for block $i$ and $j'$ is not bad for block $i'$. If either index is bad, then $\rho_{m,m'} \leq n^{-\Omega(\log n)}$ and the pairs' contribution to the sum is  absorbed into the second term. Note that $e^{-d(x, a_j)-d(x', a_j)}$ is not the correct decay term when $m = m'$, but in this case the index is bad for one of the blocks and can be absorbed into the $n^{-\Omega(\log n)}$ term. 

By Lemma~\ref{lemma:bitshifts} and a union bound if $m$ is good for block $i$ and $m'$ is good for block $i'$, $\rho_{m,m'} \geq 1 - n^{-\Omega(\log n)}$. Otherwise, we naively bound $\rho_{m,m'}$ between 0 and 1. Using Lemma~\ref{lemma:uniformblocks}, we know that $|\sum_{\text{okay }m}(\sigma_{x \wedge x', m}-1/2)| \leq 4 \log^2 n \cdot k^{1/4}, |\sum_{\text{good }m}(\sigma_{x \wedge x', m}-1/2)| \leq \sqrt{l} \log n$ with high probability. So the contribution of terms for these $m, m'$ pairs to $\ex[s_{x, 2i+1}s_{x',2i'+1}]$ is $O(\frac{\log^3 n k^{1/4}}{\sqrt{l}}) \cdot e^{-d(x,a_j) - d(x',a_j)}$ in magnitude. Similarly, the contribution of such terms to $s_{a_j, 2i+1}s_{a_j, 2i'+1}$ is $O(\frac{\log^3 n k^{1/4}}{\sqrt{l}})$ in magnitude. This gives:

$$\ex[s_{x, 2i+1}s_{x',2i'+1}] =e^{-d(x, a_j)-d(x', a_j)}\left[\frac{1}{l} \sum_{\text{good }j, j'} (\sigma_{x \wedge x',j} - 1/2)(\sigma_{x \wedge x',j'} - 1/2)\pm O(\log^{-\kappa\zeta/2+3} n)\right]$$

$$s_{a_j, 2i+1}s_{a_j, 2i'+1} = \frac{1}{l} \sum_{\text{good }j, j'} (\sigma_{a_j,j} - 1/2)(\sigma_{a_j,j'} - 1/2)\pm O(\log^{-\kappa\zeta/2+3} n)$$

\begin{equation}\label{eq:cov4}
\implies \ex[s_{x, 2i+1}s_{x',2i'+1}] = e^{-d(x, a_j)-d(x', a_j)}[s_{a_j, 2i+1}s_{a_j, 2i'+1} \pm O(\log^{-\kappa\zeta/2+3} n)] 
\end{equation}

Combining \eqref{eq:cov1}-\eqref{eq:cov4} and using the regularity conditions gives the Lemma statement.
\end{proof}

\section{Deferred Proofs from Section~\ref{sec:assym}}\label{sec:deferred-assym}

\begin{proof}[Proof of Lemma~\ref{lemma:lengths-asym}]
Let $k_a$ be the length of the bitstring at vertex $a$. For any edge $e = (a, b)$ where $a$ is the parent, $k_b$ is equal to $k_a$ plus the difference between the number of insertions and deletions on this edge. Applying Azuma's inequality separately to the number of insertions and deletions gives that $|k_b/\eta(b) - k_a/\eta(a)|$ is at most $2 \log n \sqrt{k_a}/\eta(a) \leq 2 \log^{\beta/2+1} n \sqrt{k_a/\eta(a)} $ with probability $1 - n^{-\Omega(\log n)}$. This can be rewritten as:

$$\frac{k_a}{\eta(a)}(1 - 2 \log^{\beta/2+1} n \sqrt{\frac{\eta(a)}{k_a}}) \leq k_b/\eta(b) \leq \frac{k_a}{\eta(a)}(1 + 2 \log^{\beta/2+1} n \sqrt{\frac{\eta(a)}{k_a}})$$

Let $\cond$ denote the event that this happens for every edge. By a union bound, this happens with probability $1 - n^{-\Omega(\log n)}$.

The lemma statement of course holds for the root node. For $b$'s parent $a$, inductively assume that $k_r (1 - \frac{\depth(a)\cdot \log^{\beta/2+1}(n)}{\log^{\kappa/2-2} n}) \leq k_a/\eta(a) \leq k_r (1 + \frac{\depth(a)\cdot \log^{\beta/2+1}(n)}{\log^{\kappa/2-2} n})$. Then conditioned on $\cond$:

$$k_b/\eta(b) \leq \frac{k_a}{\eta(a)}(1 + 2 \log^{\beta/2+1} n \sqrt{\frac{\eta(a)}{k_a}})$$
$$k_b/\eta(b) \leq k_r (1 + \frac{\depth(u)\cdot \log^{\beta/2+1}(n)}{\log^{\kappa/2-2} n})(1 + 2 \log^{\beta/2+1} n \sqrt{\frac{1}{k_r (1 - \frac{\depth(a)\cdot \log^{\beta/2+1}(n)}{\log^{\kappa/2-2} n})}})$$
$$k_b/\eta(b) \leq k_r (1 + \frac{\depth(a)\cdot \log^{\beta/2+1}(n)}{\log^{\kappa/2-2} n})(1 + 2 \log n \sqrt{\frac{1}{\Theta(\log^{\kappa} n)}})$$
$$k_b/\eta(b) \leq k_r (1 + \frac{(\depth(a)+1)\cdot \log^{\beta/2+1}(n)+1}{\log^{\kappa/2-2} n}) = k_r (1 + \frac{\depth(b) \cdot \log^{\beta/2+1}(n)+1}{\log^{\kappa/2-2} n})$$

We use the fact that for fixed $\alpha, \beta$, for sufficiently large $\kappa$, $(1 - \frac{\depth(a)\cdot \log^{\beta/2+1}(n)}{\log^{\kappa/2-2} n}) = \Theta(1)$. The bound in the other direction follows similarly, so by induction the lemma holds.

\end{proof}

\begin{proof}[Proof of Lemma~\ref{lemma:bitshifts-asym}]

Note that conditioned on the $j$th bit of $a$ not being deleted on an edge $(a, b)$ where $a$ is the parent, its normalized shift on the edge $e = (a, b)$ is the random variable:

$$\left|\frac{j+\textnormal{Binom}(j, \pI(e))-\textnormal{Binom}(j, \pD(e))}{\eta(b)} - \frac{j}{\eta(a)}\right|$$
$$ = \frac{1}{\eta(b)} \left|\textnormal{Binom}(j, \pI(e))-\textnormal{Binom}(j, \pD(e))- j(\pI(e)-\pD(e))\right|$$

By applying Azuma's inequality to $\textnormal{Binom}(j, \pI(e))-\textnormal{Binom}(j, \pD(e))$, we get that it differs from  $j(\pI(e)-\pD(e))$ by at most $\log n\sqrt{j}$ with probability $1 - n^{-\Omega(\log n)}$. By Lemma~\ref{lemma:lengths-asym}, with probability $1 - n^{-\Omega(\log n)}$ we know that $j \leq 4k_r\eta(b)$ so the normalized shift is at most $2 \log n \sqrt{k_r}$. Then, fixing any path and applying this observation to the at most $\depth \leq 2 \log^\alpha n$ edges on the path, by union bound we get that the sum of normalized shifts is at most $4 \log ^{\alpha+1} n \sqrt{k_r}$ with probability $1 - n^{-\Omega(\log n)}$. Applying union bound to all $O(n^2)$ paths in the tree gives the lemma.
\end{proof}

\begin{proof}[Proof of Lemma~\ref{lemma:signatureestimation}]
Note that $l_a' = k_a/L$ and thus by Lemma \ref{lemma:lengths-asym}:
$$k_r  (1 - \frac{\depth(a) \cdot \log^{\beta/2+1}(n)}{\log^{\kappa/2-2} n}) \leq k_a/\eta(a) \leq k_r (1 + \frac{\depth(a)\cdot \log^{\beta/2+1}(n)}{\log^{\kappa/2-2} n}) \implies$$
$$l_a  (1 - \frac{\depth(a)\cdot \log^{\beta/2+1}(n)}{\log^{\kappa/2-2} n}) \leq l_a' \leq l_a (1 + \frac{\depth(a)\cdot \log^{\beta/2+1}(n)}{\log^{\kappa/2-2} n})$$

The bits which are in exactly one of the $i$th block of $a$ and the $i$th pseudo-block of $a$ are split into two segments: bits $\min\{(i-1)l_a+1, (i-1)l_a'+1\}$ to $\max\{(i-1)l_a+1, (i-1)l_a'+1\}$ and $\min\{il_a, il_a' \}$ to $\max\{il_a, il_a'\}$. The number of bits in each segment is at most:

$$i|l_a - l_a'| \leq L \cdot  l_a \cdot \frac{\depth(a)\cdot \log^{\beta/2+1}(n)}{\log^{\kappa/2-2} n} \leq \log^{\kappa/2 - \kappa\zeta} n \cdot l_a \cdot \frac{\log^{\alpha+\beta/2+1}(n)}{\log^{\kappa/2 - 2} n} = l_a \cdot \log^{\alpha + \beta/2 + 3 - \kappa\zeta} n$$

The signed difference between the number of zeroes and expected number of zeroes in the $i$th pseudobock is $\sqrt{l_a'} \cdot \st_{a,i}$. $\sqrt{l_a} \cdot s_{a,i}$ is the same for the regular block. So applying Lemma \ref{lemma:uniformblocks} to each segment of bits:

$$|\sqrt{l_a'} \cdot \st_{a,i} - \sqrt{l_a} \cdot s_{a,i}| \leq 2 \log^{\alpha/2+\beta/4 + 5/2 - \kappa\zeta/2} n \cdot  \sqrt{l_a}$$

Combining this statement with an application of Lemma~\ref{lemma:uniformblocks} to the $i$th pseudo-block gives:

$$|\st_{a,i} - s_{a,i}| \leq |\st_{a,i} - \st_{a,i} \sqrt{l_a'/l_a}| + |\st_{a,i} \sqrt{l_a'/l_a} - s_{a,i}| \leq$$
$$\log n \sqrt{\frac{\depth(a)\cdot \log^{\beta/2+1}(n)}{\log^{\kappa/2-2} n}} + 2 \log^{\beta/4 + 3 - \kappa\zeta/2} n = O(\log^{\alpha/2+\beta/4 + 5/2 - \kappa\zeta/2} n)$$
\end{proof}

\begin{proof}[Proof of Lemma~\ref{lemma:unbiasedestimator-asym}]
Analogously to the symmetric case, letting $\cond$ denote any realization of the set of insertion and deletion locations throughout the trees (but not the root bitstring, inserted, bits or substitutions), we get:

$$\ex[s_{a, i}s_{b, i} | \cond] 
= \frac{1}{\sqrt{l_a}\sqrt{l_b}}\sum_{j,j'\textnormal{ shared by }a,b}\frac{1}{4}\prod_{e \in P_{a,b}} (1 - 2\pS(e))$$

So as before, we compute the expected number of shared bits which are in block $i$ of $a$ and $b$ and then apply the law of total probability. The probability of the $j$th bit not being deleted on the path from $a \wedge b$ to $a$ or the path from $a \wedge b$ to $b$ is again $\prod_{e \in P_{a,b}} (1 - \pD(e))$. Let $\rho_j$ is the probability the $j$th bit of $a \wedge b$'s bitstring appears in the $i$th block of $a$ and $b$'s bitstring, conditioned on the $j$th bit not being deleted on the path from $a \wedge b$ to $a$ or $b$. Then the expected number of shared bits is $(\sum_j \rho_j) \cdot \prod_{e \in P_{a,b}} (1-\pD(e)) $.

We again classify bits in the bitstring at $a \wedge b$ as good, okay, or bad. For block $i$, call the $j$th bit of $a \wedge b$ a \textit{good} bit if $j$ is between $(i-1)l_{a \wedge b}+ 4\log^{\alpha+1} (n) \sqrt{k}\eta(a \wedge b)$ and $il_{a \wedge b}- 4\log^{\alpha+1} (n) \sqrt{k}\eta(a \wedge b)$ inclusive. Call the $j$th bit an \textit{okay} bit if $j$ is between $(i-1)l_{a \wedge b}- 4\log^{\alpha+1} (n) \sqrt{k}\eta(a \wedge b)$ and $il_{a \wedge b}+ 4\log^{\alpha+1} (n) \sqrt{k}\eta(a \wedge b)$ inclusive but is not a good bit. If the $j$th bit is not good or okay, call it a \textit{bad} bit. Note that $4\log^{\alpha+1} n \sqrt{k}\eta(a \wedge b) \leq l_{a \wedge b} \cdot O(\log^{-\kappa\zeta+\alpha+1} n)$, which is $o(l_{a \wedge b})$ if $\kappa$ is sufficiently large and $\zeta$ is chosen appropriately. Then, there are $l_{a \wedge b}\cdot(1-O(\log^{-\kappa\zeta+\alpha+1} n))$ good bits and $l_{a \wedge b} \cdot O(\log^{-\kappa\zeta+\alpha+1} n)$ okay bits for block $i$.

Lemma~\ref{lemma:bitshifts-asym} gives that $\rho_j \geq 1-n^{-\Omega(\log n)}$ for good bits $j$. Similarly, $\rho_j \leq n^{-\Omega(\log n)}$ for bad bits $j$. For okay bits, we can lazily upper and lower bound $\rho_j$ to be in $[0, 1]$. Similarly to the symmetric case, we get that $\sum_j \rho_j = l_{a \wedge b} \cdot (1 \pm O(\log^{-\kappa\zeta+\alpha+1} n))$.

Combined with the previous analysis this gives:

$$  \ex[s_{a, i}s_{b, i}]=\frac{1}{4}\frac{l_{a \wedge b}}{\sqrt{l_a}\sqrt{l_b}}(1 \pm O(\log^{-\kappa\zeta+\alpha+1} n))\prod_{e \in P_{a,b}}(1-2\pS(e))(1-\pD(e))$$ 

Note that since the multiset union of $P_{r, a}$ and $P_{r, b}$ contains every edge in $P_{r, a \wedge b}$ twice and every edge in $P_{a, b}$ once:

$$l_{a \wedge b} =\sqrt{l_{a} l_{b} \cdot \prod_{e \in P_{a,b}} (1 + \pI(e) - \pD(e))^{-1}}$$ 

So we get:

$$  \ex[s_{a, i}s_{b, i}]=\frac{1}{4}(1 \pm O(\log^{-\kappa\zeta+\alpha+1} n))\prod_{e \in P_{a,b}}(1-2\pS(e))(1-\pD(e))(1 + \pI(e) - \pD(e))^{-1/2} $$ 

Rewriting this in exponential form and using the definition of $\lambda(e)$ and $d(a,b) = \sum_{e \in P_{a,b}} \lambda(e)$ concludes our proof.
\end{proof}

\begin{proof}[Proof of Lemma~\ref{lemma:recursiveestimator-asym}]

We will implicitly condition on $\mathcal{E}$ in all expectations in the proof of the lemma. 

Analogously to the proof of Lemma~\ref{lemma:recursiveestimator}, let $S_1$ be the set of bits in block $i$ of $a$ which appear in $x$'s bitstring, let $S_2$ be the set of bits which appeared anywhere in $a$'s bitstring except block $i$ of $a$'s bitstring and which are in the $i$th pseudo-block of $x$'s bitstring, and let $S_3$ be the set of bits from block $i$ of $a$ which appear in $x$'s bitstring outside of $i$th pseudo-block (note that $S_3$ is a subset of $S_1$).

For $j \in \{1, 2, 3\}$, consider the values of the bits in $S_j$ in $x$'s bitstring. Let $s_{x, i}^{(j)}$ denote the number of zeroes in the bits in $S_j$ minus $|S_j|/2$, all times $1/\sqrt{l_x}$. Note that because all bits which are present in $x$ but not in $a$ are uniformly random conditioned on $\mathcal{E}$, $\ex[\st_{x,i}] = \frac{\sqrt{l_x}}{\sqrt{l_x'}} \ex[s_{x,i}^{(1)} + s_{x,i}^{(2)} - s_{x,i}^{(3)}]$. Informally, this equality says that the bits determining $s_{x,i}$ are going to be those in block $i$ of $a$ that survive the deletion process, except those that do not appear in pseudo-block $i$ because of the indel process, and also including bits moved into pseudo-block $i$ by the indel process. 

By Lemma ~\ref{lemma:lengths-asym}, $\frac{\sqrt{l_x}}{\sqrt{l_x'}} = (1 \pm O(\log^{\alpha/2+\beta/4-\kappa/4-1/2} n))$. So it suffices to prove $\ex[s_{x,i}^{(1)} + s_{x,i}^{(2)} - s_{x,i}^{(3)}] = e^{-d(x,a)}(s_{a,i} + \nu_{a,i}')$ where $|\nu_{a,i}'| = O(\log^{5/2-\kappa\zeta/2} (n))$, since conditioned on $\cond_{reg}$ $s_{a,i}$ is at most $\log n$ in absolute value, so  $ \frac{\sqrt{l_x}}{\sqrt{l_x'}}(s_{a,i} + \nu_{a,i}') = (s_{a,i} + \nu_{a,i})$ where $|\nu_{a,i}| = O(\log^{5/2-\kappa\zeta/2} (n))$.

By a similar argument to Lemma~\ref{lemma:unbiasedestimator-asym}, $\sqrt{l_x}\ex[s_{x,i}^{(1)}] = \sqrt{l_a} s_{a,i}\prod_{e \in P_{a,x}}(1-2\pS(e))(1-\pD(e))$ and thus $\ex[s_{x,i}^{(1)}] = s_{a,i}e^{-d(x,a)}$.

Now, consider the bits in $S_2$. For bit $j$ of $a$'s bitstring, such that bit $j$ is not in the $i$th block, let $\sigma_{a,j}$ be the value of this bit in $a$'s bitstring as before. Let $\rho_j$ denote the probability that this bit appears in pseudo-block $i$ of $x$, conditioned on $j$ not being deleted between $a$ and $x$. The expected contribution of the bit $j$ to $\sqrt{l_x}s_{x,i}^{(2)}$ is then $(\sigma_{a,j}-1/2)\rho_j \prod_{e \in P_{a,x}}(1-2\pS(e))(1-\pD(e))$.

Now, by linearity of expectation: 

$$\ex[s_{x,i}^{(2)}] = \frac{\prod_{e \in P_{a,x}}(1-2\pS(e))(1-\pD(e))}{\sqrt{l_x}}\left[\sum_{j < (i-1)l_a+1} (\sigma_{a,j}-1/2)\rho_j + \sum_{j > il_a} (\sigma_{a,j}-1/2)\rho_j\right]$$
$$= e^{-d(x,a)} \cdot \frac{1}{\sqrt{l_a}} \left[\sum_{j < (i-1)l_a+1} (\sigma_{a,j}-1/2)\rho_j + \sum_{j > il_a} (\sigma_{a,j}-1/2)\rho_j\right]$$

We will restrict our attention to showing that the sum $\sum_{j > il_a} (\sigma_{a,j}-1/2)\rho_j$ is sufficiently small, the analysis of the other sum is symmetric. Since $\prob(\cond_{reg}|\cond) > 1- O(n^{-\Omega(\log n)})$, and using Lemma~\ref{lemma:lengths-asym}, we know that for $j > il_a + 4\log^{2}(n)\sqrt{k_r}\eta(a) + L \cdot l_a \cdot \frac{\log^{\alpha+\beta/2+1} n}{\log^{\kappa/2 - 2} n} =: j^*$, $\rho_j = n^{-\Omega(\log n)}$ (recall from the proof of Lemma~\ref{lemma:signatureestimation} that the term $L \cdot l_x \cdot \frac{\log^{\alpha+\beta/2+1} n}{\log^{\kappa/2 - 2} n}$ is an upper bound on the offset between the $i$th block and pseudo-block of $x$ conditioned on $\cond_{reg}$. $L \cdot l_a \cdot \frac{\log^{\alpha+\beta/2+1} n}{\log^{\kappa/2 - 2} n}$ is this offset, rescaled by the ratio of block lengths $l_a/l_x$). So:

$$\sum_{j > il_a} (\sigma_{a,j}-1/2)\rho_j = \sum_{j = il_a+1}^{j^*} (\sigma_{a,j}-1/2)\rho_j + n^{-\Omega(\log n)}$$

Then define $d_j := \rho_j - \rho_{j+1}$, $\sigma^*_{j'} = \sum_{j = il_a+1}^{j'} (\sigma_{a,j} - 1/2)$, and note that by the regularity assumptions $\cond_{reg}$, $\sigma_{j}^*$ is at most $\log n \sqrt{j - il_a + 1} \leq \log n  \sqrt{j^* - il_a + 1} \leq O(\log^{\alpha/2+\beta/4+5/2} (n)) k_r^{1/4} \sqrt{\eta(a)}$ for all $j$ in the above sum. 
Also note that for all $j$ in the above sum, we can assume $\rho_j$ is decreasing in $j$ - otherwise, we can reorder the values of $j$ in the sum so that $\rho_j$ is decreasing in this order, and redefine the $d_j$ values accordingly. By this assumption, the $d_j$ are all positive and their sum is at most 1. Then: 

$$\sum_{j > il_a} (\sigma_{a,j}-1/2)\rho_j = \sum_{j = il_a+1}^{j^*} d_j \sigma^*_{j'} + n^{-\Omega(\log n)}$$
$$\implies |\sum_{j > il_a} (\sigma_{a,j}-1/2)\rho_j| \leq \max_{il_a+1 \leq j \leq j^*} |\sigma^*_{j}| + n^{-\Omega(\log n)} \leq O(\log^{\alpha/2+\beta/4+5/2} (n) k_r^{1/4} \sqrt{\eta(a)})$$

Thus $|\ex[s_{x,i}^{(2)}]| \leq e^{-d(a,x)} O(\log^{\alpha/2+\beta/4+5/2} (n)) k_r^{1/4} \cdot \sqrt{\eta(a)}/\sqrt{l_a}$. Note that $\eta(a)/l_a = 1/l_r = O(\log^{-\kappa/2-\kappa\zeta} (n))$, giving that $|\ex[s_{x,i}^{(2)}]| = e^{-d(a,x)} \cdot O(\log^{\alpha/2+\beta/4+5/2-\kappa\zeta/2} (n))$. A similar argument shows $|\ex[s_{x,i}^{(3)}]| = e^{-d(a,x)} \cdot  O(\log^{\alpha/2+\beta/4+5/2-\kappa\zeta/2} (n))$, completing the proof.

\end{proof}

\end{document}